\DeclareMathOperator*{\dist}{dist}
\DeclareMathOperator*{\num}{\mathfrak{num}}
\def\mun{\smash{\stackrel{\raisebox{2.5pt}{$\longleftarrow$}}{\smash{\num}}}}
\newcommand{\NN}{\ensuremath{\mathbb{N}}}
\newcommand{\ALG}{\textsf{ALG}\xspace}
\newcommand{\ALGI}{\textsf{ALG'}\xspace}
\newcommand{\Ex}[1]{\ensuremath{\mathbb{E}[#1]}}
\newtheorem{theorem}{Theorem}
\newtheorem{definition}[theorem]{Definition}
\newtheorem{lemma}[theorem]{Lemma}
\newtheorem{corollary}[theorem]{Corollary}
\newtheorem{claim}[theorem]{Claim}
\newtheorem{fact}[theorem]{Fact}
\newtheorem{remark}[theorem]{Remark}
\def\epsilon{\ensuremath{\varepsilon}}
\newcommand{\eps}{\ensuremath{\epsilon}\xspace}
\newcommand{\EPS}{\ensuremath{\mathcal{E}}\xspace}
\newcommand{\COMMENTED}[1]{{}}
\newcommand{\junk}[1]{{}}
\newcommand{\Artur}[1]{{}}
\renewcommand{\Artur}[1]{{{\footnote{$\cal A.C.$ --- \textcolor[rgb]{0.50,0.00,0.00}{#1}}}}}
\renewcommand{\Artur}[1]{\textcolor{blue}{Artur: #1}}
\renewcommand{\Artur}[1]{\footnote{\textcolor{blue}{Artur: #1}}}
\newcommand{\Stefan}[1]{{}}
\renewcommand{\Stefan}[1]{{{\footnote{Stefan --- \textcolor[rgb]{0.50,0.00,0.00}{#1}}}}}
\renewcommand{\Stefan}[1]{\textcolor{red}{Stefan: #1}}
\title{\textbf{Testing Depth First Search Numbering}\thanks{A preliminary version of the paper appeared in \emph{Proceedings of the 33rd European Symposium on Algorithms (ESA)}, pages 76:1–-76:15, Warsaw, Poland, September 15--17, 2025.}}
\author{Artur Czumaj\thanks{Department of Computer Science and DIMAP, University of Warwick, Coventry, UK, and University of Cologne, Germany. Email: A.Czumaj@warwick.ac.uk. Research partially supported by the Centre for Discrete Mathematics and its Applications (DIMAP), by a Weizmann-UK Making Connections Grant, and by an IBM Award
.}
    \and
Christian Sohler\thanks{University of Cologne, Germany. Email: csohler@uni-koeln.de.}
    \and
Stefan Walzer\thanks{Karlsruhe Institute of Technology, Germany. Email: stefan.walzer@kit.edu. Research funded by the Deutsche Forschungsgemeinschaft (DFG, German Research Foundation) --- 465963632.}}
\date{}
\begin{document}


\maketitle


\begin{abstract}
\emph{Property Testing} is a formal framework to study the computational power and complexity of sampling from combinatorial objects. A central goal in standard graph property testing is to understand which graph properties are testable with sublinear query complexity. Here, a graph property $P$ is testable with a sublinear query complexity if there is an algorithm that makes a sublinear number of queries to the input graph and accepts with probability at least $2/3$, if the graph has property $P$, and rejects with probability at least $2/3$ if it is $\varepsilon$-far from every graph that has property $P$.


In this paper, we introduce a new variant of the bounded degree graph model. In this variant, in addition to the standard representation of a bounded degree graph, we assume that every vertex $v$ has a unique label $\num(v)$ from $\{1, \dots, |V|\}$, 
and in addition to the standard queries in the bounded degree graph model, we also allow a property testing algorithm to query for the label of a vertex (but \emph{not} for a vertex with a given label).

Our new model is motivated by certain graph processes such as a DFS traversal, which assign consecutive numbers (labels) to the vertices of the graph. We want to study which of these numberings can be tested in sublinear time. As a first step in understanding such a model, we develop a \emph{property testing algorithm for discovery times of a DFS traversal} with query complexity $O(n^{1/3}/\eps)$ and for constant $\eps>0$ we give a matching lower bound.
\end{abstract}


\section{Introduction}
\label{sec:Intro}

\emph{Depth-first search (DFS)} is one of the most useful and frequently used algorithmic primitives in graph algorithms.
The DFS algorithm has been known for well over a century \cite{Lucas1882,Tarry1895} as a technique for threading mazes and has been widely used in the design of graph and network algorithms since the late 1950s.
DFS is a basic tool to traverse a graph in a structured way and is central to solve textbook problems such as connectivity, topological sorting \cite{T76}, determining strongly connected components in directed graphs \cite{Tarjan72}, biconnected components in undirected graphs \cite{Tarjan72}, and to test planarity \cite{HT74}.
Because of its versatility and usefulness for solving many graph problems, DFS has become one of the most important tools in the design of algorithms for graphs, taught in the first year of many computing science study programs (see, e.g., the classical textbooks \cite{CLRS22,KT13}).
One of the most useful combinatorial properties of DFS is the \emph{DFS numbering of vertices}, which is \emph{the order in which a DFS algorithm discovers all vertices of the input graph} \cite{Tarjan72}. (See also Algorithms~\ref{alg:DFS-numbering-def} and \ref{alg:DFS-numbering-def-rec}, where $N(v)$ denotes the set of neighbors of vertex $v$.)


\junk
{ 
\def\dfsVisit{\textsc{dfs-visit}}
\begin{figure}[h]
\centerline{
\begin{minipage}[t]{0.47\textwidth}
\begin{algorithm}[H]
\caption{DFS numbering of $G$}
\label{alg:DFS-numbering-def}
\KwIn{
    $G = (V,E)$
}
\KwOut{
    $\num: V \rightarrow \{1,\dots,|V|\}$
}
    \BlankLine
    mark all vertices as undiscovered\;
    $t \leftarrow 1$ \tcp{\small smallest unused ID}
    \For(\tcp*[h]{\small arbitrary order}){$v \in V$}{
        \If{$v$ is undiscovered}{
            \dfsVisit($v$)\;
        }
}
\end{algorithm}
\end{minipage}\qquad
\begin{minipage}[t]{0.53\textwidth}
\begin{algorithm}[H]
\caption{\dfsVisit($v$)}
\label{alg:DFS-numbering-def-rec}
    mark $v$ as discovered\;
    $\num(v) \leftarrow t$\;
    $t \leftarrow t + 1$\;
    \For(\tcp*[h]{\small arbitrary order}){$u \in   N(v)$}{
        \If{$v$ is undiscovered}{
            \dfsVisit($v$)\;
        }
    }
\end{algorithm}
\end{minipage}
}
\end{figure}
}


\def\dfsVisit{\textsc{DSFVisit}}
\begin{figure}[h]\small
\centerline{
\begin{minipage}[t]{0.505\textwidth}
\begin{algorithm}[H]
\caption{DFS numbering of $G$}
\label{alg:DFS-numbering-def}
\KwIn{
    undirected graph $G = (V,E)$
}
\KwOut{
    $\num: V \rightarrow \{1,\dots,|V|\}$
}
    \BlankLine
    mark all vertices as undiscovered\;
    $t \leftarrow 1$ \tcp{\small smallest unused ID}
    \For(\tcp*[h]{\small arbitrary order}){$v \in  V$}{
        \lIf{$v$ is undiscovered}{
            \!\!\!\dfsVisit($v$)
        }
}
\end{algorithm}
\end{minipage}\qquad
\begin{minipage}[t]{0.505\textwidth}
\begin{algorithm}[H]
\caption{\dfsVisit($v$)}
\label{alg:DFS-numbering-def-rec}
    mark $v$ as discovered\;
    $\num(v) \leftarrow t$\;
    $t \leftarrow t + 1$\;
    \For(\tcp*[h]{\small arbitrary order}){$u \in  N(v)$}{
        \lIf{$v$ is undiscovered}{
            \!\!\!\dfsVisit($v$)
        }
    }
\end{algorithm}
\end{minipage}
}
\end{figure}


\junk{ 
\def\dfsVisit{\textsc{dfs-visit}}
\begin{figure}[h]
\centerline{
\begin{algorithm}[H]
\caption{DFS numbering of $G$}
\label{alg:DFS-numbering-def}
\KwIn{
    undirected graph $G = (V,E)$
}
\KwOut{
    $\num: V \rightarrow \{1,\dots,|V|\}$
}
    \BlankLine
    mark all vertices as undiscovered\;
    $t \leftarrow 1$ \tcp{\small smallest unused ID}
    \For(\tcp*[h]{\small arbitrary order}){$v \in  V$}{
        \lIf{$v$ is undiscovered}{
            \dfsVisit($v$)
        }
}
\end{algorithm}
}

\centerline{
\begin{algorithm}[H]
\caption{\dfsVisit($v$)}
\label{alg:DFS-numbering-def-rec}
    mark $v$ as discovered\;
    $\num(v) \leftarrow t$\;
    $t \leftarrow t + 1$\;
    \For(\tcp*[h]{\small arbitrary order}){$u \in  N(v)$}{
        \lIf{$v$ is undiscovered}{
            \dfsVisit($v$)
        }
    }
\end{algorithm}
}
\end{figure}
}


We remark that sometimes, e.g., in the widely used textbook by Cormen et. al.\ (see Chapter~20.3 in \cite{CLRS22}), the DFS numbering is also used together with \emph{finishing numbers} (cf. \cref{sec:FIN-numbering}). We also remark that the labeling is not unique and depends on the ordering in which the vertices are traversed in the for-loops of Algorithms~\ref{alg:DFS-numbering-def} and~\ref{alg:DFS-numbering-def-rec}.

\begin{definition}[\textbf{DFS numberings}]
\label{def:DFS-numbering}
Let $G = (V,E)$ be a labeled undirected graph on $n$ vertices. 
A labeling $\num: V \rightarrow \{1,\dots,n\}$ is called a \emph{DFS numbering of $G$} if DFS gives rise to it for some order of processing of the vertices in the for-loops of Algorithms~\ref{alg:DFS-numbering-def} and \ref{alg:DFS-numbering-def-rec}.
\end{definition}

Given a wide applicability of DFS and DFS numbering, a natural problem is to verify whether a given graph is correctly DFS numbered, i.e., the labels assigned to the vertices are a DFS numbering for some ordering of vertices and neighbors. The latter problem is easy to solve by simulating a DFS using the given numbering and locally verifying that the DFS is performed correctly. However, is it also possible to \emph{approximately} verify whether the graph is properly DFS numbered using a sampling based algorithm? An answer to this question sheds some light on how the local structure of a DFS numbered graph (as implicitly given by the sample distribution) is related to the global DFS numbering.

We will study this question in the \emph{framework of Property Testing} (see, e.g., the monographs \cite{BY22,Goldreich17}). Property Testing provides a formal setting to study approximate decision problems. The goal is to distinguish objects that satisfy the tested property from those that are \eps-far from every object that satisfies the property. Here, \eps-far means that one has to modify more than an \eps-fraction of the object's description to obtain an object that has the property. A property testing algorithm requires some form of sampling access to the object.

In this paper, we will introduce a variant of the standard bounded-degree graph model \cite{GR02} that in addition to the standard setting, allows also \emph{label queries}. We will assume that there exists some labeling that assigns unique labels from the set $\{1, \dots, |V|\}$ to the vertex set $V$, i.e., the labeling is a bijection $\num: V \rightarrow \{1, \dots, |V|\}$. We say that a graph with maximum degree $d$ is \eps-far from a DFS numbered graph, if we have to modify (insert or delete) more than $\eps |V|$ edges to obtain a graph that is correctly DFS numbered\footnote{While one often uses the bound of $\eps d |V|$ edges instead of $\eps |V|$ edges, in the setting when $d=O(1)$ these terms differ only by a constant factor and as such there are essentially indistinguishable.}%
%
.
We do not allow to modify labels, though we notice that allowing to modify labels would also be a valid model (see \cref{subsubsec:property-testing-access-model} for some discussion). Our motivation to not consider it here was merely to stick as closely to the standard testing model as possible.

In this new framework, we study a fundamental problem of \emph{whether the input labeled undirected graph is properly DFS numbered} or \emph{it is \eps-far from having a valid DFS numbering}. Our main result is a tight complexity bound for this task (for constant $d$ and $\eps$). First, in \cref{section:DFS-lb}, we show that any tester for the DFS numbering requires $\Omega(n^{1/3})$ queries, and then, we complement this bound in \cref{section:DFS-ub} with an algorithm that tests DFS numbering with $O(n^{1/3}/\eps)$ queries.

The proof of the lower bound (\cref{thm:DFS-lb} in \cref{section:DFS-lb}) is by constructing two families $(G_n)_{n \in \NN}$ and $(B_n)_{n \in \NN}$ of \emph{good} and \emph{bad} random labeled graphs for which we will show that distinguishing between these families is necessary for any DFS tester. Then we will show that distinguishing between these families requires $\Omega(n^{1/3})$ queries.

The proof of the upper bound (\cref{thm:DFS-ub} in \cref{section:DFS-ub}) relies on a characterization of labelings that are \eps-far from a valid DFS numbering. The characterization is described in a form of some conflicts between the labelings and we design two algorithms detecting such conflicts: one 
for local conflicts and one 
for global conflicts. By combining these two algorithms we will obtain an algorithm that tests DFS numbering with $O(n^{1/3}/\eps)$ queries.


\subsection{Related work}

Property Testing was introduced by Rubinfeld and Sudan \cite{RS96} and first studied in the dense graph setting by Goldreich and Ron \cite{GGR98}. Constant time testability in the dense graph model has been fairly well understood \cite{AFNS09} and is closely related to the regularity lemma. In our paper we build upon the bounded degree graph model, as introduced by Goldreich and Ron \cite{GR02}. First results in this model included testers for properties such as connectivity, $k$-connectivity and being Eulerian \cite{GR02}; bipartiteness \cite{GR98} and cycle freeness \cite{CGRSSS14} can be tested in the bounded degree graph model in $O(\frac{\sqrt{n}}{\eps^{O(1)}})$ time. A series of papers proved that every hyperfinite property of bounded degree graphs is testable in constant time \cite{BSS10,CSS09,HKNO09,NS13}. Also, every constant time testable property in bounded degree graphs is either finite or contains a hyperfinite property \cite{FPS19}. Furthermore, it is known that every first-order logic property of bounded degree graph with an $\exists \forall$ quantification is constant time testable while some properties with a $\forall \exists$ quantification are not \cite{AKP21}. 
A number of other questions have been studied in the bounded degree graph model such as testability using proximity oblivious testers \cite{GR11}.
Further works in the bounded degree graph model include, e.g., testability using proximity oblivious testers \cite{GR11}.


\section{Preliminaries: The model}


In this paper (except for \cref{section:directed-DFS}), let $G = (V,E)$ denote an undirected graph with $n = |V|$ vertices and $m = |E|$ edges.
Let $N(v)$ denote the set of \emph{neighbors of $v$}, i.e., $N(v) : = \{u \in V: \{v,u\} \in E\}$, and
$d$ be an upper bound on the maximum number of edges incident to any vertex in $V$.
We will assume that $G$ is a labeled graph, i.e., there is a bijection (permutation) $\num: V \rightarrow [n]$ that assigns numbers to the vertices of $G$ and we use $[n]:= \{1,\dots,n\}$.


\subsection{Bounded degree graph model for \emph{labeled} graphs}
\label{subsubsec:property-testing-access-model}

In this section we describe how our algorithm can access the input graph $G = (V,E)$ of maximum degree $d$ with vertex labeling $\num$. Our model is a slight modification of the standard bounded degree graph model \cite{GR02} that in addition \emph{allows access to labels}. As usual in this model, we assume that $V=[n]$ and $n$ as well as $d$ are given to the algorithm. The algorithm can ask the following two types of queries and receives an answer in constant time:
\begin{itemize}
\item \textbf{Neighbor queries:} for every vertex $v \in V$ and every $1 \le i \le d$, one can query the $i$th neighbor of vertex $v$.
\item \textbf{Label queries:} for every vertex $v \in V$, one can query the label of $v$, $\num(v)$.
\end{itemize}
Observe that we allow access to vertices and their neighbors, and we can check the label $\num(v)$ of any vertex $v$,
\textbf{\emph{but we have no access to vertices through their labels $\num$}}.
In particular, \emph{to access vertex $v$ with a given label $\num$, we have to query the oracle until such vertex is returned}.
This is a special feature distinguishing our model from the standard model used in graph property testing and making the problem challenging: we have a labeled graph, but we cannot access the graph (its vertices) through the labels! Instead, the only way to access the graph is by taking any vertex $v \in V = [n]$ and either querying its label $\num(v)$ (via the label query) or querying its $i$th neighbor (via a neighbor query).

The \emph{query complexity} of a testing algorithm is the number of oracle queries.

A property testing algorithm (in short, \emph{property tester}) for DFS numbered graphs is an algorithm that has access to an input graph as described above and that accepts the input with probability at least $2/3$, if $G$ is a graph with $\num$ being a valid DFS numbering.
The algorithm rejects $G$ with probability at least $2/3$ if $G$ is \eps-far from having a valid DFS numbering $\num$ according to the following definition. (The algorithm developed in this paper is a \emph{property tester with one-sided error}, i.e., it always accepts, if $\num$ is a DFS numbering.)

\begin{definition}[\textbf{\eps-far from DFS numberings in bounded degree graphs}]
\label{def:eps-far-DFS-numbering-max-deg}
Let $G = (V,E)$ be an undirected graph on $n$ vertices with maximum degree at most $d$ 
and let $\num: V \rightarrow [n]$ be a bijection that assigns labels to $V$. We say the labeled graph $G$ \emph{$\num$ is \eps-far from having a valid DFS numbering}, if one has to modify\footnote{Modification of the edges means edge insertions and deletions, i.e., we require $|E \triangle E'| > \eps n$, where $\triangle$ is the symmetric difference.} more than $\eps n$ edges in $G$ to obtain a graph $G' = (V,E')$ of maximum degree at most $d$ for which $\num$ is a valid DFS numbering.
\end{definition}

\paragraph{\emph{Implicitly labeled graphs.}}
To simplify the presentation, we will often assume $\num(v) = v$. However, \emph{we will not use this knowledge in the algorithm} as the model prevents us from querying $\num^{-1}$. Our tester will only ask for a random vertex or for a vertex that occurred as the neighbor of a previously queried vertex.


\subsubsection{Further thoughts about the model: Modifying the labels}
\label{subsubsec:discussion-about-model-editing-labels}
Observe that in general, it might be natural to consider a revised definition of a labeling $\num$ being \eps-far from a valid DFS numbering, where while defining graph $G'$ in
\cref{def:eps-far-DFS-numbering-max-deg}, in addition to edge modifications, one would allow also for modifications of the labels\footnote{We use the following revised definition: a labeling \emph{$\num$ is \eps-far from a valid DFS numbering} of $G$ if for any graph $G' = (V,E')$ of maximum degree at most $d$ with a valid DFS numbering $\num'$ we have $|E \triangle E'| + |\{ v \in V: \num(v) \ne \num'(v)\}| \ge \eps n$.}. We observe that for bounded degree graphs, adding the labels modifications does not change the problem significantly.

\begin{lemma}
\label{lemma:two-models-editing-labels}
Let $\num$ be a permutation of $V$ to $\{1, \dots, n\}$. For a given bounded degree graph $G = (V,E)$, a labeling $\num$ is \eps-far from a valid DFS numbering according to the edges modifications if and only if $\num$ is $\Theta(\eps)$-far from a valid DFS numbering according to the edges and labels modifications.
\end{lemma}

\begin{proof}
Observe that the number of modifications of the edges to obtain a valid DFS numbering is not smaller than the number of modifications of the edges and the labels to obtain a valid DFS numbering. Therefore, if for a given bounded degree graph $G$, a labeling $\num$ is \eps-far from a valid DFS numbering according to the edges and labels modifications then $\num$ is also \eps-far from a valid DFS numbering according to the edges modifications.

For the other direction, if $\num$ is \eps-far from a valid DFS numbering according to the edges modifications then we can simulate modification of the labels by modification of the edges: to assign a given label to a vertex we just remove all its incident edges and add all edges incident to the vertex with the label sought. (Observe that a vertex might have had some edges of its own that have to be removed but by correcting the labels one by one, we can ensure that once we fix vertex $v$ by assigning it to label $i$ with vertex $u$ having label $i$ before, then later we will have to fix the label of vertex $u$, resolving the issue.)
%
Observe that if we apply this operation to all vertices with the labels to be changed, then we do not increase the maximum degree, and hence, modifying of $k$ labels can be simulated by modifying at most $2dk$ edges. Therefore, if a labeling $\num$ is \eps-far from a valid DFS numbering according to the edges modifications then $\num$ is also $(2d\eps)$-far from a valid DFS numbering according to the edges and labels modifications.
\end{proof}


\subsubsection{Why should the labels be a permutation?}
\label{subsubsec:discussion-about-model-non-permutations}
Our model assumes that the input labeling $\num$ is a permutation (bijection) of $V$ to $\{1\,\dots,n\}$, but it may be natural to consider the case when $\num$ is \emph{not necessarily a permutation} but rather an arbitrary function from $V$ to $\{1\,\dots,n\}$, allowing repetitions of labels. Observe that already the simple problem of testing if $\num$ is a permutation or is \eps-far from being a permutation (also known as the element distinctness problem) is known to require $\Theta(\sqrt{n}/\eps)$ queries (see, e.g., \cite{RRSS09}).
In view of that bound, the best what we could hope for without assuming that $\num$ is a permutation of $V$ to $\{1\,\dots,n\}$ is to achieve query complexity of $\Theta(\sqrt{n}/\eps)$. And this can be easily obtained by combining our algorithm in \cref{thm:DFS-ub} with the known algorithms testing element distinctness, leading to a testing algorithm with $\Theta(\sqrt{n}/\eps)$ queries.


\section{Basic properties of DFS numberings}
\label{section:DFS}

We begin with a review of basic DFS terminology, introduce some notions of our own, and make a few useful observations. (We also refer to standard textbooks, e.g., \cite{CLRS22,KT13}.)

The DFS algorithm, as given in Algorithm~\ref{alg:DFS-numbering-def}, numbers every vertex even if $G$ is \emph{disconnected} due to the outer loop over all vertices.
Every vertex is initially \emph{undiscovered}. When \dfsVisit($v$) is called, $v$ becomes \emph{discovered}. We say $v$ is \emph{active} as long as the call \dfsVisit($v$) persists and is \emph{finished} as soon as it ends. The vertex $p$ that initiates the call of \dfsVisit($v$) is the \emph{parent} of $v$. If the call of \dfsVisit($v$) is initiated from the outer loop then we say that $v$ is an \emph{orphan} with \emph{virtual parent} $0$. The idea is that the outer loop iterating over all vertices is like a call to \dfsVisit($0$) where $0$ is a virtual vertex connected to all other vertices. Each connected component of $G$ has exactly one orphan, which receives the smallest number in its connected component. At any point during the execution the \emph{white path} consists of all active vertices including the virtual vertex $0$, with every active vertex (other than $0$) connected to its parent. The DFS numbers appear in increasing order along the white path.
In an implicitly labeled graph $G = (V, E)$
, we define $p(v)$ for $v \in V$ as
\begin{align*}
    p(v) :=
    \begin{cases}
        \max\{u \in N(v) \cap  [v-1]\} & \text{ if } N(v) \cap [v-1] \ne \emptyset \enspace,\\
        0 & \text{ otherwise.}
    \end{cases}
\end{align*}

\begin{claim}
\label{claim:dfs-parent-is-p}
If the numbers in an implicitly labeled graph correspond to a DFS numbering then $p(v)$ is the (possibly virtual) parent of $v$.
\end{claim}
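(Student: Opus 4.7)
The plan is to track the state of the DFS at the exact moment \dfsVisit($v$) is called and argue that the actual parent $q$ of $v$ (virtual or real) is equal to $p(v)$ as defined. Let me fix the canonical execution of DFS that produces the given numbering (so $\num(v)=v$), and for each vertex $v$ let $q(v)$ denote the (possibly virtual) parent chosen by this execution. I want to prove $q(v)=p(v)$.

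First I would dispose of the orphan case. If $q(v)=0$, then \dfsVisit($v$) is invoked directly from the outer loop, so $v$ is undiscovered up to that point. For any neighbor $u$ of $v$, I claim $u>v$: otherwise $u<v$ means \dfsVisit($u$) ran and finished earlier, and during that call the for-loop would have examined the neighbor $v$; at that moment $v$ was still undiscovered (since $v$ is only being discovered now), so the algorithm would have recursed into \dfsVisit($v$), making $u$ the real parent — contradicting $q(v)=0$. Hence $N(v)\cap[v-1]=\emptyset$ and $p(v)=0=q(v)$.

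For the main case $q(v)\ne 0$, I use the key fact about the white path: the DFS numbers of all currently active vertices strictly increase along the white path, and the most recently discovered active vertex is exactly $q(v)$, since it is the one that just called \dfsVisit($v$). Consequently $q(v)$ is a neighbor of $v$ with $q(v)<v$, so $q(v)\in N(v)\cap[v-1]$. Now I must show it is the \emph{largest} such element. Suppose for contradiction some neighbor $u$ of $v$ satisfies $q(v)<u<v$. Then $u$ was discovered before $v$ and is either still active or already finished at the moment $v$ is discovered. If $u$ is active, it lies on the white path; since $q(v)$ is the tail of that path and numbers increase along it, $u\le q(v)$, contradicting $u>q(v)$. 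If $u$ is finished, then \dfsVisit($u$) completed before \dfsVisit($v$) starts, and during its for-loop it examined the neighbor $v$; at that instant $v$ was still undiscovered (otherwise $v$'s number would have been assigned strictly before $u$ finished, i.e., before the present moment), so \dfsVisit($v$) would have been invoked from inside \dfsVisit($u$), making $u$ the real parent — contradiction.

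The argument I expect to need the most care is the second subcase (``$u$ already finished''), because it hinges on the subtle temporal claim that $v$ must have been undiscovered while $u$ was scanning its neighbors. The cleanest way to justify this is to note that a vertex's DFS number is assigned exactly once, at the very start of its \dfsVisit-call, so $v$'s number is assigned strictly after $u$ finishes; hence $v$ was undiscovered throughout $u$'s active interval, in particular when $u$'s for-loop reached $v$. Once both subcases are ruled out we conclude $\max(N(v)\cap[v-1])=q(v)$, so $p(v)=q(v)$ as required.
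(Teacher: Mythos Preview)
Your proof is correct and follows essentially the same approach as the paper's: both fix the moment \dfsVisit($v$) is called, use that the real parent $q(v)$ is the tail (and hence the largest element) of the white path, and rule out any neighbor $u$ with $q(v)<u<v$ by observing that such a $u$ must already be finished, whence $v$ would have been discovered from inside \dfsVisit($u$). The only cosmetic difference is that you explicitly split into the two subcases ``$u$ active'' and ``$u$ finished'' and derive a contradiction about who the parent would be, whereas the paper collapses these by noting directly that any $u>q(v)$ already discovered must be finished and hence cannot have the undiscovered $v$ as a neighbor.
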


\begin{proof}
If $v$ is an orphan then it has the smallest number in its connected component so $\{u \in N(v): u < v\} = \emptyset$. Hence $p(v) = 0$, which is the virtual parent of orphans by definition.

Now assume $v$ was discovered from a non-virtual parent $x \in  V$. Then $x \in  N(v)$ with $x < v$ so $x \in  \{u \in N(v): u < v\}$. To see that $x$ is the maximum of $\{u \in N(v): u < v\}$, consider $x' \in  \{x+1,\dots,v-1\}$. Immediately before $v$ is discovered, $x$ is at the end of the white path, hence the largest active vertex. Since $x' > x$ has been discovered, $x'$ must be finished. Hence all neighbours of $x'$ have been discovered. Since $v$ is not discovered we have $x \notin  N(v)$ so $x \notin  \{u \in N(v): u < v\}$.
\end{proof}

Observe that any edge $e=\{u,v\}$ with $u < v$ in an implicitly labeled graph corresponds to an \emph{interval} $[u,v]$ representing a range of elements with respect to the DFS numbering. The analysis of the interrelation between such intervals plays a central role in our analysis.

\begin{definition}
\label{def:conflicting -pair}
A pair $(v,\{u,w\}) \in  V \times  E$ is a \textbf{conflicting pair} if $p(v) < u < v < w$.
\end{definition}

The following central lemma shows that the absence of conflicting pairs is necessary and sufficient for the validity of a DFS numbering.

\begin{lemma}
\label{lem:dfs-characterization}
Let $G = (V,E)$ be an implicitly labeled graph. The following are equivalent.
\begin{enumerate}[\rm (i)]
\item $G$ has a valid DFS numbering.
\item There exists no conflicting pair.
\end{enumerate}
\end{lemma}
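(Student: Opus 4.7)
The plan is to prove the two implications separately. For (i) $\Rightarrow$ (ii), assume $G$ has a valid DFS numbering and suppose for contradiction that $(v, \{u,w\})$ is a conflicting pair with $p(v) < u < v < w$. By \cref{claim:dfs-parent-is-p}, the parent of $v$ in the DFS tree is $p(v)$, so at the moment \dfsVisit($v$) starts, the white path is a strictly increasing chain of labels terminating at $p(v)$ and contains no vertex with label $> p(v)$. Since $p(v) < u < v$, vertex $u$ has already been discovered but is no longer active, i.e., it is finished; but finishing $u$ requires \dfsVisit($u$) to have inspected every neighbor of $u$ and recursively discovered each undiscovered one. Thus $w \in N(u)$ must have been discovered before $u$ finished and hence before $v$ was discovered, contradicting $w > v$.

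For (ii) $\Rightarrow$ (i), let $T$ be the rooted tree on $V \cup \{0\}$ (rooted at $0$) with edges $\{p(v), v\}$ for $v \in V$, and write $S(u)$ for the set of descendants of $u$ (including $u$) in $T$. I first establish the structural claim that for every edge $\{u, w\} \in E$ with $u < w$ the interval $[u, w]$ is contained in $S(u)$, by induction on $v' - u$ for $v' \in [u, w]$: the base $v' = u$ is trivial, and for $u < v' \le w$ the no-conflict condition (together with $u \in N(w)$ forcing $p(w) \ge u$ when $v' = w$) gives $p(v') \ge u$, so $p(v') \in [u, v' - 1]$; the inductive hypothesis then places $p(v') \in S(u)$, whence $v' \in S(u)$. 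As a corollary, each subtree $S(u)$ is a contiguous interval $[u, f(u)]$: otherwise some $v^* \in (u, v_2) \setminus S(u)$ would exist for some $v_2 \in S(u)$, and walking down the (label-increasing) path in $T$ from $u$ to $v_2$ one finds consecutive vertices $a, b$ with $a = p(b)$ and $a < v^* < b$; the structural claim applied to the tree edge $\{a, b\} \in E$ then gives $v^* \in [a, b] \subseteq S(a) \subseteq S(u)$, a contradiction.

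With this structure in hand, I let both for-loops of \cref{alg:DFS-numbering-def,alg:DFS-numbering-def-rec} iterate in increasing order of labels and verify by induction on $v$ that \dfsVisit($v$) is invoked when the counter $t$ equals $v$. Indeed, if $v$ has children $c_1 < \cdots < c_k$ in $T$, the contiguous-interval property forces $c_1 = v + 1$ and $c_{j+1} = f(c_j) + 1$; moreover, by the structural claim every neighbor $x$ of $v$ with $x > v$ lies in $S(v) = [v, f(v)]$ and hence in some $S(c_j)$, so after \dfsVisit($c_j$) returns (with $t = c_{j+1}$) the label-ordered iteration only encounters already-discovered descendants of $c_j$ until it reaches the next undiscovered neighbor, which is exactly $c_{j+1}$. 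The outer loop of \cref{alg:DFS-numbering-def} is analogous: when it reaches $s$, either $s$ lies in a previously-visited subtree and is already discovered, or $p(s) = 0$ and \dfsVisit($s$) is invoked with $t = s$. The main obstacle in this argument is the corollary that each $S(u)$ is a contiguous interval; once it is granted, the DFS simulation is routine bookkeeping.
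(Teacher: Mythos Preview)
Your (i) $\Rightarrow$ (ii) direction is essentially identical to the paper's argument.

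For (ii) $\Rightarrow$ (i) you take a genuinely different route. The paper argues by direct induction on $k$: assuming a partial DFS has discovered $1,\dots,k-1$, it shows $p(k)$ must lie on the current white path (else some white-path vertex $x$ with $p(x)<p(k)<x<k$ yields the conflicting pair $(x,\{p(k),k\})$), and then that every active vertex beyond $p(k)$ has all neighbours already discovered (else a conflicting pair $(k,\{u,w\})$ arises), so the DFS may backtrack to $p(k)$ and discover $k$. Your approach instead first proves a structural lemma about the parent tree $T$---that every subtree $S(u)$ occupies a contiguous label interval---and then runs the DFS with the increasing-label ordering, arguing it realises the identity numbering. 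Both are correct; the paper's proof is shorter and stays closer to the dynamic white-path picture, while yours extracts a clean static fact about $T$ (the contiguous-interval property) that does the heavy lifting and may be of independent interest. One minor point: in your corollary you use that the tree edge $\{a,b\}$ lies in $E$, which holds because $a \ge u \ge 1$ so $p(b)=a\neq 0$; and in the final simulation the inductive invariant really needs to be ``\dfsVisit($v$) is called with $t=v$ and returns with $t=f(v)+1$ having discovered exactly $S(v)$'' rather than just the first clause---otherwise the step ``after \dfsVisit($c_j$) returns with $t=c_{j+1}$'' is not justified by the hypothesis alone. Once stated this way the bookkeeping is indeed routine.
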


\begin{proof}
\begin{description}
\item[(i) $\Rightarrow $ (ii).]
Consider a DFS run that agrees with the DFS numbering and let $(v,\{u,w\}) \in  V \subseteq  E$ be a pair with $p(v) < u < v$. By \cref{claim:dfs-parent-is-p}, $v$ was discovered from $p(v)$. Consider the time immediately before this happens. Vertex $u$ was already discovered (due to $u < v$) and is not active (since $p(v) < u$ is the end of the white path and $p(v)$ is the largest active vertex), meaning $u$ is finished. 
Thus all neighbors of $u$ are already discovered, so $w < v$. In particular $(v,\{u,w\})$ is not a conflicting pair.
\item[(ii) $\Rightarrow $ (i).]
We use induction on $k$, showing that there is a DFS run assigning number $v$ to each $v \in  \{1,\dots,k\}$. For $k = 0$ the claim is trivial. Assume the claim holds for some $k-1$. Consider a corresponding DFS run that has just discovered vertex $k-1$; at that moment vertex $k$ is undiscovered. We have $0 \le  p(k) \le  k-1$. If $p(k)$ does not appear on the white path then, because both $0$ and $k-1$ appear on the white path, there is a vertex $x$ on the white path with $p(x) < p(k) < x < k$. This makes $(x,\{p(k),k\})$ a conflicting pair, contradicting (ii). Therefore $p(k)$ does appear on the white path. There might be a vertex $u > p(k)$ further along on the white path. Let $w$ be a neighbor of $u$. Due to $p(k) < u < k$ we must have $w \le  k$, for otherwise $(k,\{u,w\})$ would be a conflicting pair. Moreover we have $w \neq  k$ by definition of $p(k)$. Thus $w \le  k-1$, meaning $w$ was already discovered. In particular the DFS will backtrack from active vertices until it reaches $p(k)$, from which it may legally choose to discover $k$ next. This concludes the induction.\qedhere
\end{description}
\end{proof}

\junk{

\begin{proof}
\textcolor[rgb]{0.54,0.17,0.89}{Proof for directed graphs.}
\begin{description}
\item[(i) $\rightarrow$ (ii).] Consider a DFS run that agrees with the DFS numbering and let $(v,(u,w)) \in V \times E$ be a pair with $p(v) < u < v$. By \cref{claim:dfs-parent-is-p}\Artur{Which wold have to written (in the same way) for directed graphs.}, $v$ was discovered from $p(v)$. Consider the time immediately before this happens. Vertex $u$ was already discovered (due to $u < v$) and is not active (since $p(v) < u$ is the end of the white path and $p(v)$ is the largest active vertex), meaning $u$ is finished. Therefore, since the finishing time of vertex $u$ is between $u$ and $w$, by a known property of DFS (see, e.g., \cite[Excercise~20.3-5]{CLRS22}), we cannot have edge $(u,w) \in E$ with $p(v) < u < v < w$. In particular $(v,(u,w))$ is not a conflicting pair.
\item [(ii) $\rightarrow$ (i).] We use induction on $k$, showing that there is a DFS run assigning number $v$ to each $v$, $1 \le v \le k$. For $k = 0$ the claim is trivial. Assume the claim holds for some $k-1$. Consider a corresponding DFS run that has just discovered vertex $k-1$; at that moment vertex $k$ is undiscovered. We have $0 \le  p(k) \le  k-1$. If $p(k)$ does not appear on the white path then we infer that because both $0$ and $k-1$ appear on the white path, there is a vertex $x$ on the white path with $p(x) < p(k) < x < k$. This makes $(x,(p(k),k))$ a conflicting pair, contradicting the assumption (ii). Therefore $p(k)$ does appear on the white path. There might be a vertex $u > p(k)$ further along on the white path. Let $w$ be a neighbor of $u$. Due to $p(k) < u < k$ we must have $w \le  k$, for otherwise $(k,(u,w))$ would be a conflicting pair. Moreover we have $w \neq  k$ by definition of $p(k)$. Thus $w \le  k-1$, meaning $w$ was already discovered. In particular the DFS will backtrack from active vertices until it reaches $p(k)$, from which it may legally choose to discover $k$ next. This concludes the induction.\qedhere
\end{description}
\end{proof}

}


\section{Testing DFS numbering requires \texorpdfstring{$\Omega(n^{1/3})$}{Omega(n**(1/3))} Queries}
\label{section:DFS-lb}

In this section, we prove our first main result.

\begin{theorem}
\label{thm:DFS-lb}
Every property tester for the property of having a valid DFS-labeling has a query complexity of $\Omega(n^{1/3})$.
\end{theorem}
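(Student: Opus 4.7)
The plan is to invoke Yao's minimax principle by exhibiting two distributions $\mathcal{G}_n$ and $\mathcal{B}_n$ over $n$-vertex labeled bounded-degree graphs such that every graph in the support of $\mathcal{G}_n$ is properly DFS numbered, a draw from $\mathcal{B}_n$ is $\varepsilon$-far from any valid DFS numbering with probability at least $5/6$, and no deterministic algorithm making $q = o(n^{1/3})$ queries can distinguish the two with advantage better than $1/3$. In both distributions the vertex set is $[n]$ and the labeling $\num$ is a uniformly random permutation, so that label queries alone are useless. In $\mathcal{G}_n$, the edges are planted so that $\num$ is a valid DFS numbering---for instance, a Hamiltonian path visiting vertex IDs in the order $\num^{-1}(1), \num^{-1}(2), \ldots, \num^{-1}(n)$, optionally augmented by DFS-friendly additional edges. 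In $\mathcal{B}_n$, one starts from the good construction but partitions the label range $[n]$ into blocks and applies a small disjoint perturbation inside each block---reorienting a short sub-path or swapping the endpoints of two consecutive edges---so that each block contributes at least one conflicting pair $(v, \{u, w\})$ in the sense of \cref{lem:dfs-characterization}, and the planted perturbations are collectively unfixable by fewer than $\varepsilon n$ edit operations.

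For the query complexity argument, I would couple the executions of any deterministic $q$-query tester against $\mathcal{G}_n$ and $\mathcal{B}_n$ so that, until the algorithm queries three vertices all of whose labels lie in the same perturbed block---which is what it takes to observe one planted conflicting pair---the transcripts are identically distributed. Because the random labeling $\num$ hides which vertex IDs belong to which block, each query lands in a given block essentially independently of the others; a direct counting of triples, of order $\binom{q}{3} \cdot (\text{number of blocks}) \cdot \Pr[\text{three fixed queries land in one block}]$, yields total probability $O(q^3/n)$ for this ``detection'' event, which is $o(1)$ whenever $q = o(n^{1/3})$. Combined with the indistinguishability before detection, this yields the bound by Yao.

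The main obstacle will be the simultaneous satisfaction of two conflicting requirements on the bad distribution: each planted perturbation must be \emph{robust}, in the sense that no single edge edit helps more than one block and hence the distance to any DFS-numbered graph is $\Omega(\varepsilon n)$; yet each perturbation must be \emph{locally indistinguishable} from its good-distribution counterpart, meaning that the joint distribution of the adjacency information revealed by any single query is identical in $\mathcal{G}_n$ and $\mathcal{B}_n$. Balancing these constraints---and ruling out coincidental shortcuts where the random permutation $\num$ accidentally produces a DFS numbering in $\mathcal{B}_n$---is the delicate step; once it is in place, the emergence of the three-queries-in-one-block event as the only way to distinguish the two distributions is what drives the $n^{1/3}$ lower bound.
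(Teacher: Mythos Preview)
Your high-level framework (Yao's principle, a good distribution versus a perturbed bad one, local indistinguishability) matches the paper's, but the mechanism you propose for the $n^{1/3}$ threshold is not the one that actually works, and the specific plan has two concrete gaps.

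\textbf{The ``three queries in one block'' event is the wrong bottleneck, and your analysis ignores explorative queries.} You assert that queries land in blocks ``essentially independently'' because $\num$ is random. This is true only for \emph{fresh} random queries; it is false for neighbor queries. In any path-like construction, the neighbor of a vertex with label $i$ has label $i\pm 1$, so three consecutive explorative queries are automatically in the same block. Hence the event you are bounding has probability close to $1$, not $O(q^3/n)$. The paper's argument controls explorative queries differently: the ``blocks'' (arms) have length $\Theta(N)$ with $N=n^{1/3}$, and the distinguishing feature of a bad arm is placed so that a walk of fewer than $N/2$ steps from any single entry point cannot reveal it. The random-query part is then a \emph{two}-way birthday collision (two random starting points landing in the same arm), contributing $O(q^2 N/n)$; together with the exploration-radius constraint $q<N/2$, both bounds meet at $q=\Theta(n^{1/3})$.

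\textbf{The far-versus-indistinguishable tension you flag is not resolved by your construction.} With a Hamiltonian path as the good instance, every vertex has neighbors labeled exactly $\num(v)\pm 1$; \emph{any} local perturbation in $\mathcal{B}_n$ is then detectable from a single neighbor query. To be $\varepsilon$-far you need a constant fraction of vertices to be perturbed, so a single random query already distinguishes. The paper resolves this not by making perturbations rare but by giving \emph{both} the good and the bad family two variants per arm, engineered so that every local view consistent with a bad variant is equally consistent with some good variant (and vice versa). Your ``swap two endpoints / reorient a sub-path'' does not have this symmetry, and nothing short of an explicit gadget with matched local views will work here. That gadget design, not the collision combinatorics, is the heart of the proof.
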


Let $\eps > 0$ be sufficiently small (any $\eps \le \frac{1}{33}$ would do). The proof of \cref{thm:DFS-lb} is by constructing two families $(G_n)_{n \in \NN}$ and $(B_n)_{n \in \NN}$ of \emph{good} and \emph{bad} random labeled graphs for which we will show that distinguishing between these families is necessary for any DFS-tester. Then we will show that distinguishing between these families requires $\Omega(n^{1/3})$ queries.


\subsection{Construction of good and bad random labeled graphs
\texorpdfstring{$(G_n)$ and $(B_n)$}{(Gn) and (Bn)}
}
\label{subsection:DFS-lb-construction}

Let $n, N \in \NN$. Each graph $G_n$ and $B_n$ consists of $\lfloor\frac{n}{8N}\rfloor$ \emph{arms} of $8N$ vertices each. The roots of these arms are connected with some binary tree that is the same for $G_n$ and $B_n$.
\begin{figure}[t]
\centerline{\includegraphics[width=.475\textwidth]{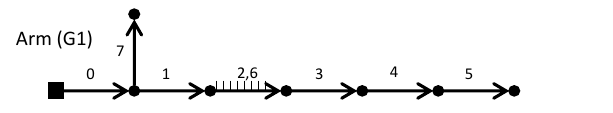} \quad \includegraphics[width=.475\textwidth]{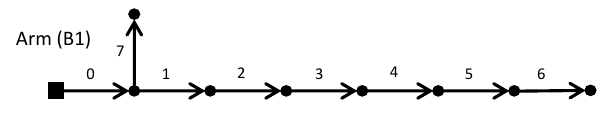}}
\centerline{\includegraphics[width=.475\textwidth]{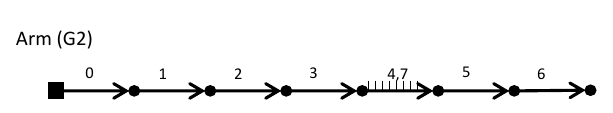} \quad \includegraphics[width=.475\textwidth]{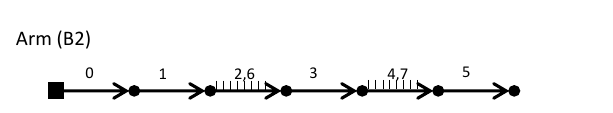}}
\caption{Four types of \emph{arms} $(G1)$, $(G2)$, $(B1)$, and $(B2)$ used in our construction of $(G_n)_{n \in \NN}$ and $(B_n)_{n \in \NN}$. Each arm starts with a root, denoted by {\footnotesize \textcolor[rgb]{0.00,0.00,0.00}{$\blacksquare$}}. Each link \includegraphics[width=.075\textwidth]{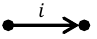} corresponds to a path on $N$ vertices and labels $iN+1, \dots, i(N+1)$ ascending in the direction of the arrow (see Figure~\ref{fig-lb-1a}(a)). We also have a \emph{comb graph} \includegraphics[width=.075\textwidth]{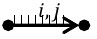} (see Figure~\ref{fig-lb-1a}(b)), which is obtained from \includegraphics[width=.075\textwidth]{DFS-LB-f7} by adding $N$ new vertices, adding a matching between the new vertices and the vertices from \includegraphics[width=.075\textwidth]{DFS-LB-f7}, and labeling the new vertices with $jN+1, \dots, (j+1)N$, this time descending in the direction of the arrow (i.e., the vertex with label $iN+k$ is adjacent to the vertex with label $jN+N-k+1$).}
\label{fig-lb-1}
\end{figure}

There are four types of arms, as described in details in Figure \ref{fig-lb-1}.
\begin{figure}[t]
\centerline{(a) \includegraphics[width=.45\textwidth]{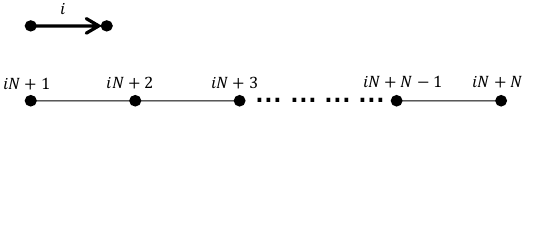} \quad (b) \includegraphics[width=.45\textwidth]{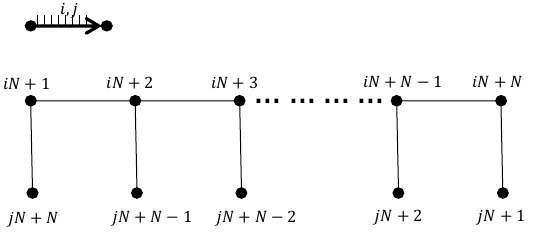}}
\caption{Graphs corresponding to (a) a path \includegraphics[width=.075\textwidth]{DFS-LB-f7} and (b) a comb graph \includegraphics[width=.075\textwidth]{DFS-LB-f8}
.}
\label{fig-lb-1a}
\end{figure}
Each arm of $G_n$ is a copy of $(G1)$ or $(G2)$ chosen independently and uniformly at random. Similarly, each arm of $B_n$ is a copy of $(B1)$ or $(B2)$ chosen independently and uniformly at random.

The labeling of $G_n$ is then obtained by starting in the root of the tree $T$ and using within the arms the relative ordering defined by the numbering of the arms $(G1)$ or $(G2)$. For an example, see, e.g., Figure~\ref{fig-lb-2a}.
(Since each arm has the same number of vertices, the choice of one arm, whether in $(G1)$ or $(G2)$, does not affect the numbering of other arms.)

Similarly (see Figure~\ref{fig-lb-2b}), the labeling of $B_n$ is defined by starting at a root of $T$ and using within the arms the relative ordering defined by the numbering of the arms $(B1)$ or $(B2)$. 


\begin{figure}[th]
\centerline{\includegraphics[width=\textwidth]{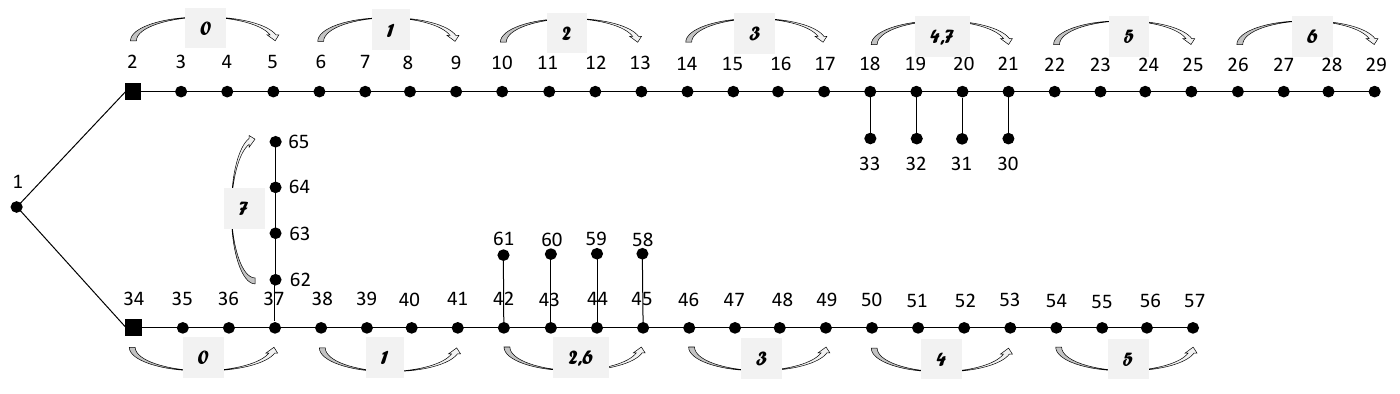}}
\caption{An example of labeling of $G_n$ with $n=64$, $N = 4$, using $\lfloor\frac{n}{8N}\rfloor = 2$ arms of size $8 \cdot 4 = 32$. Vertices with labels 1, 2, 34 are in the tree $T$, the top branch (with labels 2--33) corresponds to arm $(G2)$ and the bottom branch (with labels 34--65) corresponds to arm $(G1)$. For each arm, we also marked the corresponding parts defining it.}
\label{fig-lb-2a}
\end{figure}

\begin{figure}[th]
\centerline{\includegraphics[width=\textwidth]{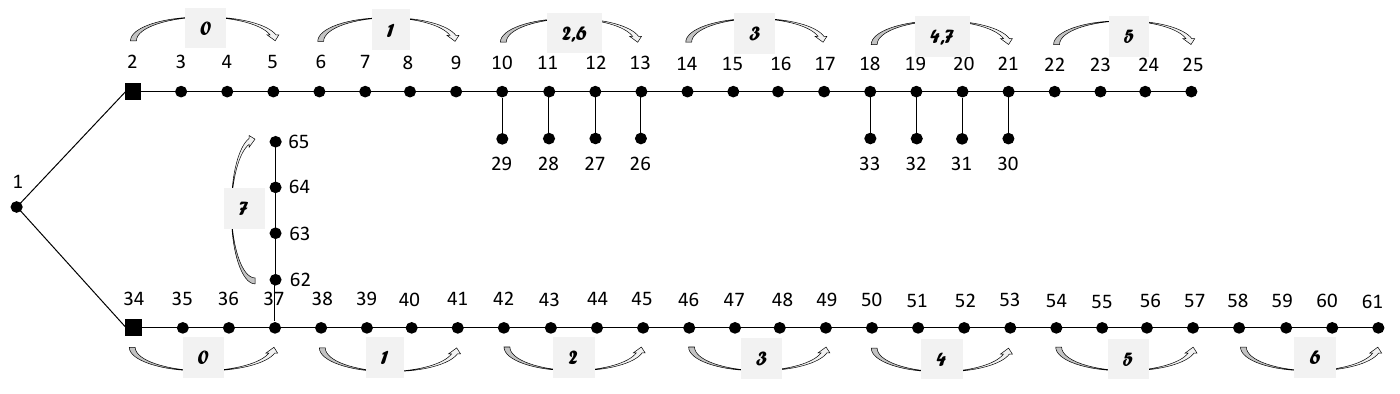}}
\caption{An example of labeling of $B_n$ with $n=64$, $N = 4$, using $\lfloor\frac{n}{8N}\rfloor = 2$ arms of size $8 \cdot 4 = 32$. Vertices with labels 1, 2, 34 are in the tree $T$, the top branch (with labels 2--33) corresponds to arm $(B2)$ and the bottom branch (with labels 34--65) corresponds to arm $(B1)$. For each arm, we also marked the corresponding parts defining it.
}
\label{fig-lb-2b}
\end{figure}


\subsection{Properties of DFS numberings of random labeled graphs
\texorpdfstring{$G_n$ and $B_n$}{(Gn) and (Bn)}
}
\label{subsection:DFS-lb-properties}

Let us first notice that our construction of good random labeled graphs $(G_n)_{n \in \NN}$ easily ensures that they have valid DFS numberings (with probability 1).

\begin{lemma}
\label{DFS-lb-prop1}
$G_n$ has a valid DFS numbering.
\qed
\end{lemma}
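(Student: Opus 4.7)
The plan is to explicitly exhibit a run of the DFS algorithm (Algorithms \ref{alg:DFS-numbering-def} and \ref{alg:DFS-numbering-def-rec}) on $G_n$ whose discovery order coincides with the given labeling $\num$. The construction of $G_n$ was designed for exactly this purpose, so the proof amounts to a straightforward verification. In fact, by \cref{lem:dfs-characterization}, one could alternatively prove the lemma by checking that no conflicting pair $(v,\{u,w\})$ with $p(v) < u < v < w$ exists in $G_n$, but giving an explicit traversal seems conceptually cleaner.

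The key steps are the following. First, start the outer for-loop of Algorithm \ref{alg:DFS-numbering-def} at the vertex labeled $1$ (the root of the tree $T$); then we only ever enter the outer loop once, because $G_n$ is connected by construction. Second, within each invocation of \dfsVisit, order the neighbors so that among the undiscovered neighbors we always go to the one that the labeling $\num$ assigns next. Third, argue that this greedy choice is always consistent with the graph's structure, so that DFS indeed discovers the vertices in the order $1,2,\ldots,n$. This reduces to two local checks:
\begin{itemize}
\item On the tree $T$, the labeling is itself the DFS numbering of $T$ augmented by the labels used inside the arms (which come in contiguous blocks). Since each arm is attached to $T$ at its root and an arm uses labels forming a single interval, the DFS on $T$ simply visits tree vertices in label order, recursing into each arm at the moment its root is discovered and returning to $T$ only after the whole arm has been processed.
\item Inside an arm of type $(G1)$ or $(G2)$, one decomposes the arm into its constituent links \includegraphics[width=.05\textwidth]{DFS-LB-f7} and combs \includegraphics[width=.05\textwidth]{DFS-LB-f8} and verifies each piece separately. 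A link is a path with labels ascending along the arrow, which DFS trivially traverses in label order. A comb has spine labels $iN{+}1,\ldots,iN{+}N$ ascending and tooth labels $(i{+}1)N{+}1,\ldots,(i{+}1)N{+}N$ descending, and the tooth adjacent to the spine vertex with label $iN{+}k$ has label $(i{+}1)N-k+1$; hence DFS traverses the entire spine first, then on backtracking visits the teeth in reverse order, and the matching between adjacent pairs makes every label increment correspond to a legal neighbor step.
\end{itemize}

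The main obstacle, and the only place where a slip is possible, is the interface between the different components of an arm and between an arm and the tree $T$. One needs to verify that when DFS finishes traversing one link or comb, it is in the correct vertex to enter the next component with the next label; and that when DFS finishes an entire arm, it backtracks all the way to the root of the arm in $T$ before the outer traversal continues with the next tree vertex. Both facts follow from the observation that the arms are pairwise disjoint and attached to $T$ only at their roots, and that within an arm consecutive components share exactly one endpoint chosen so that label continuity is preserved. With these interface checks in place, the simulated DFS discovers vertex $k$ exactly at the $k$-th step for all $k \in [n]$, which is the definition of $\num$ being a valid DFS numbering.
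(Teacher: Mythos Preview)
Your approach is correct and matches the paper's treatment: the paper states the lemma with an immediate \qed\ and only remarks (in a suppressed proof) that ``each arm $(G1)$ or $(G2)$ maintains locally a valid DFS-order,'' which is exactly the verification you spell out in detail.

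Two small slips worth fixing. First, in your comb description you take the teeth to carry labels $(i{+}1)N{+}1,\dots,(i{+}2)N$ but then write that the tooth adjacent to spine vertex $iN{+}k$ has label $(i{+}1)N-k+1$; this lands back in the spine range. With the paper's convention the tooth adjacent to $iN{+}k$ has label $jN+N-k+1$, so with $j=i{+}1$ it is $(i{+}2)N-k+1$. Second, your sentence ``DFS traverses the entire spine first, then on backtracking visits the teeth'' is only literally true when nothing else is attached at the far end of the spine; in the actual arms $(G1)$ and $(G2)$ the spine of a comb may feed into further links before the teeth are picked up on the return. Your interface paragraph already covers this, so just make the comb description consistent with it.
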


\junk{
\begin{proof}
This easily follows by observing that each arm $(G1)$ or $(G2)$ maintains locally a valid DFS-order (see, e.g., \cref{fig-lb-2a} in \cref{sec:aux-figures-lb}).
\end{proof}
}

A situation is different for bad random labeled graphs $(B_n)_{n \in \NN}$. While the arms of type $(B1)$ also locally maintain a valid DFS-order, the arms of type $(B2)$ do not (because of the two comb graphs). As the result, the construction of $(B_n)_{n \in \NN}$ typically gives labelings that are \eps-far from valid DFS numberings.

\begin{lemma}
\label{DFS-lb-prop2}
Let
$0 \le \eps \le \frac{1}{33}$ and let $n$ be sufficiently large. Then $B_n$ has a labeling that is \eps-far from a valid DFS numbering with probability $1 - o(1)$.
\end{lemma}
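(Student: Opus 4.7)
By \cref{lem:dfs-characterization}, $\num$ is a valid DFS numbering if and only if no conflicting pair exists. Consequently, to prove $B_n$ is $\eps$-far from a valid DFS numbering it suffices to show that eliminating every conflicting pair requires more than $\eps n$ edge modifications. My plan is to (i) exhibit $\Omega(N)$ ``robust'' conflicting pairs in each arm of type $(B2)$, (ii) show that eliminating each such pair costs an edge modification local to the arm, and (iii) sum over arms using a Chernoff bound for the number of $(B2)$-arms.

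For step (i), using \cref{claim:dfs-parent-is-p} I would first compute $p(\cdot)$ explicitly on a $(B2)$-arm. The defining feature of $(B2)$ is a \emph{misaligned} comb whose $N$ matched leaves carry labels $jN+1,\dots,jN+N$ in reverse order along the path with labels $iN+1,\dots,iN+N$ (explicitly, the vertex labeled $iN+k$ is adjacent to the vertex labeled $jN+N-k+1$), where the labels strictly between $iN+N$ and $jN+1$ are carried by a different block of the arm. For a suitably chosen set of $\Omega(N)$ pairs $(v_j,\{u_j,w_j\})$ drawn from within the arm, the matching edges of the misaligned comb supply the edge endpoints $u_j<w_j$, and the pivots $v_j$ are chosen so that the local structure of the arm forces $p(v_j)<u_j$; the inequalities $p(v_j)<u_j<v_j<w_j$ demanded by the definition of a conflicting pair then hold automatically. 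The construction makes the witnessing edges $\{u_j,w_j\}$ pairwise distinct (they are distinct matching edges of the comb) and the pivots $v_j$ pairwise distinct.

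For step (ii), a conflicting pair $(v,\{u,w\})$ can be destroyed only by deleting $\{u,w\}$ or by inserting an edge from $v$ to a vertex whose label is in $[u,v-1]$, thereby raising $p(v)$ to at least $u$. Hence the distinctness of the $\{u_j,w_j\}$ implies that a single deletion kills at most one of these pairs, and the distinctness of the $v_j$ together with the bounded degree $d$ implies that a single insertion can only raise the $p$-value of one pivot, killing $O(d)=O(1)$ pairs. Thus $\Omega(N)$ modifications are required per $(B2)$-arm. Moreover, each such modification involves either a witness edge (both endpoints inside the arm) or an edge incident to a pivot inside the arm, so modifications of distinct arms are edge-disjoint and their costs add. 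For step (iii), of the $K=\lfloor n/(8N)\rfloor$ arms, each is independently of type $(B2)$ with probability $1/2$; a Chernoff bound gives at least $K/3$ of type $(B2)$ with probability $1-o(1)$. Summing, at least $\Omega(KN)=\Omega(n)$ edge modifications are required, and tracing through the constants of the construction this exceeds $\eps n$ whenever $\eps\le\tfrac{1}{33}$.

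The main obstacle is step (i): one has to exhibit the pivots $v_j$ and verify that their $p$-values fall strictly below the corresponding $u_j$. This is a local case analysis that depends on how the four arm types are assembled from links and combs and on how the arms are glued to the shared tree $T$. Once the $p$-values are in hand, the conflicting pairs drop straight out of the definition and the remainder of the proof is a standard edge-charging argument paired with a Chernoff bound.
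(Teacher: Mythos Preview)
Your approach is essentially the same as the paper's. The paper also exhibits, for each $k\in\{1,\dots,N\}$ in a $(B2)$-arm, an explicit obstruction: the quadruple $(p_1,c_1,p_2,c_2)$ with labels $2N{+}k$, $7N{-}k{+}1$, $4N{+}k$, $8N{-}k{+}1$, where $\{p_1,c_1\}$ and $\{p_2,c_2\}$ are comb edges and $p_1<p_2<c_1<c_2$. In your language this is exactly the conflicting pair $(c_1,\{p_2,c_2\})$ with $p(c_1)=p_1$, so your step~(i) is precisely what the paper carries out concretely. For step~(ii) the paper argues that resolving each quadruple requires an edge modification incident to one of its four vertices; since the $N$ quadruples are vertex-disjoint and each edge has two endpoints, at least $N/2$ modifications are needed per $(B2)$-arm. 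This is the same edge-charging idea as yours, phrased via vertex-disjointness of witnesses rather than via ``distinct pivots and distinct witness edges''. Step~(iii) is identical.

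Two small remarks. First, your claim that one insertion ``kills $O(d)$ pairs'' is slightly miscounted: an inserted edge has two endpoints, so it can raise $p(\cdot)$ for at most two of your distinct pivots, killing at most two of your designated pairs; the degree bound $d$ is not the relevant parameter here. Second, your sentence that ``modifications of distinct arms are edge-disjoint'' is not literally true (an inserted edge can straddle two arms), but your real argument---that each modification destroys $O(1)$ of the designated conflicting pairs globally---does not need it.
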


\begin{proof}
Consider the numbering for $(B2)$. Let us fix any $k \in \{1,2,\dots,N\}$ and focus on vertices $p_1, c_1, p_2, c_2$ with labels $2N+k, 7N-k+1, 4N+k, 8N-k+1$, respectively. (For example, in Figure~\ref{fig-lb-2b}, in the top branch (where numbers have an offset of $1$), if we took $k=2$ then we would have $\langle p_1, c_1, p_2, c_2 \rangle$ to be vertices with labels $\langle 11, 28, 19, 32 \rangle$ in $B_{64}$.) Observe that the fact that $\num(p_1) < \num(p_2) < \num(c_1) < \num(c_2)$ implies that this is not a valid DFS numbering as long as $c_1$ is the DFS-child of $p_1$ and $c_2$ is the DFS-child of $p_2$. Therefore, to turn the labeled graph into one consistent with DFS numbering, one has to add or delete at least one edge incident to a vertex from $\{p_1,c_1,p_2,c_2\}$. Since such a quadruple is obtained for each $k \in \{1,\dots,N\}$ and since each of these quadruples is disjoint (for example, in Figure~\ref{fig-lb-2b}, the there are four such quadruples formed by vertices with labels $\langle 10, 29, 18, 33 \rangle$, $\langle 11, 28, 19, 32 \rangle$, $\langle 12, 27, 20, 31 \rangle$, and $\langle 13, 26, 21, 30 \rangle$), one needs to modify at least $\frac{N}{2}$ edges to obtain a valid DFS numbering. Since the expected number of copies of $(B2)$ in $B_n$ is $\frac12 \cdot \lfloor\frac{n}{8N}\rfloor$, the expected number of changes required in $B_n$ to obtain a valid DFS numbering is at least $\frac{N}{4} \cdot \lfloor\frac{n}{8N}\rfloor$. For sufficiently large $n$, a standard Chernoff bound implies that graph $B_n$ is $\frac{1}{33}$-far from having a valid DFS numbering with high probability.
\end{proof}


\subsection{Hardness of distinguishing between good and bad labeled graphs}
\label{subsection:key-fact-about-distinguishing}

Our next central lemma provides a lower bound for the number of queries of any algorithm \ALG that distinguishes between the families of random labeled graphs $(G_n)_{n \in \NN}$ and $(B_n)_{n \in \NN}$. We will assume that the input $I_n$ for $n \in \NN$ on which \ALG requires $q_n$ oracle queries is obtained by first selecting a random bit $b \in \{0,1\}$ and then setting 
$I_n = \begin{cases}G_n & \text{ if } b = 0, \\ B_n & \text{ if } b = 1.\end{cases}$

\begin{lemma}
\label{DFS-lb-prop3}
Let $0 < \xi \le \frac{N^3}{n}$. For every randomized algorithm \ALGI that receives $I_n$ as input, performs $q_n \le \sqrt{\frac{\xi n}{4N}}$ queries, and outputs a bit $b' \in \{0,1\}$, we have $\Pr[b = b'] \le \frac12 + \xi$.
\end{lemma}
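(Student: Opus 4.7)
The plan is to prove indistinguishability via a coupling combined with a birthday-paradox-style collision bound. I would first couple $G_n$ and $B_n$ by fixing the common tree $T$ and a single sequence of i.i.d.\ uniform bits $X_1,\ldots,X_k$ with $k=\lfloor n/(8N)\rfloor$; the bit $X_i$ simultaneously selects the type of the $i$th arm in both graphs, pairing $(G1)$ with $(B1)$ and $(G2)$ with $(B2)$. Under this coupling the four arm types share a common \emph{backbone}: the links in corresponding positions have identical graph structure and identical labels on their non-critical vertices, and the good/bad distinction is encoded only in a constant-size set $S_i$ of \emph{critical vertices} per arm (the comb teeth and the endpoints of the reversed links, exactly the configurations exploited in the proof of \cref{DFS-lb-prop2}), whose joint configuration of labels and adjacencies is the only witness available for distinguishing.

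Next I would set up a simulator argument. I imagine an adversary that answers the algorithm's queries lazily: for each new vertex index queried, it reveals which arm contains the vertex together with the relevant local structure. The key claim is that any response to a query that has so far touched at most one critical vertex from each arm is statistically identical under $G_n$ and $B_n$, because (i) the marginal distribution of a single critical vertex's label is the same in both graphs (the set of labels placed in each arm is the same), and (ii) the non-critical backbone of every arm is literally the same object under the coupling. Consequently the total variation between the algorithm's transcripts in $G_n$ and $B_n$ is at most $\Pr[E]$, where $E$ is the event that during its $q_n$ queries the algorithm touches two or more critical vertices in some common arm.

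Finally, I would bound $\Pr[E]$ by a birthday calculation. The hypothesis $\xi\le N^3/n$ forces $q_n\le N/2$, which is too few queries to traverse an arm of length $\Theta(N)$ via a chain of neighbor steps starting from a single fresh vertex; so $E$ effectively requires two independently chosen \emph{fresh} vertex indices to land in the same arm. Since each arm contains $8N$ of the $n$ vertices,
\begin{align*}
\Pr[E]\;\le\;\binom{q_n}{2}\cdot\frac{8N}{n}\;\le\;q_n^2\cdot\frac{4N}{n}\;\le\;\xi,
\end{align*}
and the standard distinguisher-to-distance reduction then yields $\Pr[b=b']\le\tfrac12+\xi$. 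The principal obstacle will be making the simulator of the second step precise: I must ensure that neighbor queries (which can be chained deterministically once an arm has been entered) and label queries (which reveal positional information inside an arm) are jointly simulated so that the distribution of responses remains faithful to both $G_n$ and $B_n$ until $E$ occurs, and I must verify that every partial view of a single arm that could leak the arm's flavor is indeed captured by the event ``touching two critical vertices in one arm''.
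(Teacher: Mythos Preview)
Your skeleton matches the paper: a birthday bound on two \emph{fresh} queries landing in the same arm, combined with the observation that $\xi\le N^3/n$ forces $q_n\le N/2$ so a single fresh query cannot span more than one link, then local indistinguishability of the view from a single fresh query. The birthday calculation you wrote is exactly the paper's.

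Two concrete problems with the indistinguishability step. First, the ``constant-size set $S_i$ of critical vertices per arm'' is simply false: each comb has $N$ teeth, so the distinguishing information in an arm is spread over $\Theta(N)$ vertices, not $O(1)$. Your event $E$ as defined (``touch two critical vertices in the same arm'') is therefore not the event you later bound; a single fresh query followed by one neighbor step along a comb tooth already touches two such vertices.

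Second, the fixed coupling that pairs $(G1)$ with $(B1)$ and $(G2)$ with $(B2)$ is not what makes the argument work, and trying to prove that the coupled arms give \emph{pointwise identical} responses until $E$ will fail. The paper's argument is distributional, not a pointwise coupling: from any single fresh query the $q_n$-ball contains at most one \emph{joint} (an endpoint where two links/combs are stitched), and the case analysis in \cref{sec:proof:final-DFS-lb-prop3} shows that each such partial view is compatible with exactly one good arm type and exactly one bad arm type --- but which good type pairs with which bad type depends on which joint is observed. Since in both $G_n$ and $B_n$ each arm is independently one of its two types with probability $\tfrac12$, the \emph{distribution} of the observed view is the same under $b=0$ and $b=1$, giving $\Pr[b=b'\mid\mathcal{E}]=\tfrac12$. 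A fixed $(G1,B1)$/$(G2,B2)$ coupling would instead require the paired arms to be indistinguishable from every starting point, which the construction does not provide. Replace the coupling/simulator step with this symmetry-over-arm-type argument and the proof goes through.
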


\begin{proof}
Let \ALGI be an algorithm that receives $I_n$ as its input. We assume that \ALGI can query the oracle for any vertex $u$ by submitting an ID of $u$, and the oracle returns the label $\num(u)$ of $u$, and the IDs of all neighboring vertices. Further, without loss of generality, we assume that the IDs $1, \dots, n$ are randomly assigned to the vertices of the graph. Therefore \ALGI is limited to querying for a random vertex (a \emph{random query}) or for a vertex that has been previously found as a neighbor of an earlier queried vertex (an \emph{explorative query}).

Without loss of generality, we can assume that the vertices selected by random queries are chosen in the order $v_1, v_2, \dots$ before the run of \ALGI; let $V_R = \{v_1, \dots, v_{q_n}\}$ be the sequence of these first $q_n$ random vertices (\ALG can select only these vertices). Let \EPS be the random event that no two vertices $u, u' \in V_R$ come from the same arm of $I_n$. We prove the following:
\begin{align}
    \Pr[\neg\EPS] \le \xi
        \label{DFS-lb-prop3-1} \\
    \Pr[b = b' | \EPS] = \tfrac12
        \label{DFS-lb-prop3-2}
\end{align}

Observe that the two inequalities (\ref{DFS-lb-prop3-1})--(\ref{DFS-lb-prop3-2}) imply \cref{DFS-lb-prop3}:
\begin{align*}
    \Pr[b = b'] &=
        \Pr[b = b' | \EPS] \cdot \Pr[\EPS] + \Pr[b = b' | \neg\EPS] \cdot \Pr[\neg\EPS]
    \le
        \tfrac12 \cdot 1 + 1 \cdot \xi
    \le
        \tfrac12 + \xi
    \enspace.
\end{align*}

In order to prove inequality (\ref{DFS-lb-prop3-1}), let us first define $C$ to be the number of pairs $i \ne j$ with $i, j \le q_n$ such that both $v_i$ and $v_j$ come from the same arm of $I_n$, or in other words, so that $v_i$ and $v_j$ belong to the same copy of arm $(B1)$. We have,
\begin{align*}
    \Ex{C} &=
    \!\!\!\!\!\sum_{1 \le i < j \le q_n} \!\!\!\!\!\Pr[\text{$v_i$ and $v_j$ are in the same arm}]
        \le
    \binom{q_n}{2} \cdot \frac{8N}{n}
        \le
    \frac{4 q_n^2 N}{n}
        \le
    \frac{4 (\frac{\xi n}{4N}) N}{n}
        \le
    \xi
    \enspace.
\end{align*}
Therefore we can conclude (\ref{DFS-lb-prop3-1}) by Markov's inequality: $\Pr[\neg\EPS] = \Pr[C \ge 1] \le \Ex{C} \le \xi$.

Next, we want to prove inequality (\ref{DFS-lb-prop3-2}). Let $V_R^*$  be the set of vertices reachable from vertices in $V_R$ in at most $q_n$ steps, that is, $V_R^* = \{ u: \exists_{1 \le i \le q_n} \dist(v_i,u) \le q_n\}$. Let $I_n(V_R^*)$ be the subgraph of $I_n$ induced by the vertex set $V_R^*$. We observe that \ALGI will make its decision solely on seeing some subgraph of $I_n(V_R^*)$. Hence, the output $b'$ of \ALGI is a (random) function of $I_n(V_R^*)$. Now, we claim that conditioned on \EPS, the random variables $b$ and $I_n(V_R^*)$ are stochastically independent, which in turn, would imply identity (\ref{DFS-lb-prop3-2}).

To prove that $b$ and $I_n(V_R^*)$ are independent, let us consider a single vertex $v_i$ from $V_R$ and the subgraph $I_n(v_i)$ induced by vertices with distance at most $q_n$ from $v_i$. If $I_n(v_i)$ contains at least one vertex from $T$, then $I_n(v_i)$ consists solely of some of the vertices from $T$ and some vertices that are close to the roots of some of the arms (within parts denoted by $\xrightarrow{0}$ of the arms, since each such part has length $N$ and we have $q_n \le \frac{N}{2}$ since $q_n \le \sqrt{\frac{\xi n}{4N}}$ and $\xi \le \frac{N^3}{n}$). Since these parts are identical in $G_n$ and $B_n$, such paths share no information on $b$.

Otherwise, if $I_n(v_i)$ contains no vertex from $T$, then $I_n(v_i)$ is a part of an arm of $I_n$. But then, due to $q_n \le \frac{N}{2}$, at most one joint\footnote{By a \emph{joint} we mean an endpoint of a path or comb from which the arm was stitched together.} of this arm is contained in $I_n(v_i)$. Since the labels of the roots of the arms are the same in $G_n$ and $B_n$, it is always well-defined what the offset of a label within its arm is.
For more details, see \cref{sec:proof:final-DFS-lb-prop3} and Figure~\ref{fig-lb-3} therein.
\end{proof}


\subsection{Hardness of testing DFS numbering (proof of Theorem \ref{thm:DFS-lb})}
\label{subsection:DFS-lb-hardness}

By \cref{DFS-lb-prop1,DFS-lb-prop2}, any algorithm \ALG that accepts a labeled graph with a valid DFS numbering with probability at least $\frac23$ and rejects a labeled graph with a DFS numbering that is \eps-far (for $0 < \eps \le \frac1{33}$) from being valid DFS numbering with probability at least $\frac23$, must be able to distinguish with probability at least $\frac58$ between the families of good and bad random labeled graphs $(G_n)_{n \in \NN}$ and $(B_n)_{n \in \NN}$. However by \cref{DFS-lb-prop3}, by setting $\xi = \frac18$ and $N = \lfloor n^{1/3} \rfloor$, the tasks of distinguishing between these families requires $\Omega(n^{1/3})$ queries.
%
%
\qed



\section{Testing DFS numbering with
\texorpdfstring{$O(n^{1/3}/\eps)$}{O(n**(1/3)/epsilon}
queries}
\label{section:DFS-ub}

Our second main result shows that the lower bound in \cref{thm:DFS-lb} is asymptotically tight. 

\begin{theorem}
\label{thm:DFS-ub}
Let $0 < \eps < 1$. There is an algorithm that with oracle access to a labeled bounded degree undirected graph $G$ on $n$ vertices,
performs $O(n^{1/3}/\eps)$ queries to the oracle and
accepts, if $G$ has a valid DFS numbering, and rejects with probability at least $\frac23$, if $G$ is \eps-far from having a valid DFS numbering.
\end{theorem}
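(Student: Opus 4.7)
The plan relies on the characterization in \cref{lem:dfs-characterization}: $G$ has a valid DFS numbering iff there exists no conflicting pair $(v,\{u,w\})$ with $p(v) < u < v < w$. The tester will have one-sided error by design: it accepts unless it explicitly discovers such a pair and a witness edge. The goal is therefore to design a sampling procedure that, when $G$ is \eps-far from valid, finds a conflicting pair with probability at least $2/3$ using $O(n^{1/3}/\eps)$ queries. Each time a vertex is visited the algorithm pays $O(d) = O(1)$ queries to learn all its neighbors and its value of $p(\cdot)$, so ``$O(n^{1/3}/\eps)$ queries'' and ``$O(n^{1/3}/\eps)$ visited vertices'' are essentially equivalent.

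As a first step, I would prove a quantitative strengthening of \cref{lem:dfs-characterization}: if $G$ is \eps-far from having a valid DFS numbering, then it contains $\Omega(\eps n)$ conflicting pairs that are, in a suitable sense, pairwise independent under edge modifications. The intuition is that any repair of $G$ (an edge-modification set of size at most $\eps n$ making $\num$ a valid DFS numbering) eliminates only $O(1)$ conflicts per modified edge, so we can extract many essentially disjoint witnesses by a greedy charging argument. This amplification is what lets us afford sampling at rate $1/\eps$.

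Next I would classify the conflicting pairs by the label-span of the conflicting edge against a threshold $T = \Theta(n^{1/3})$ and show that, if $G$ is \eps-far, at least one of the two classes (\emph{local}, with $w-u \le T$, or \emph{global}, with $w-u > T$) contains $\Omega(\eps n)$ witnesses. I would then design two subroutines, each with query complexity $O(n^{1/3}/\eps)$:
\begin{itemize}
\item \textbf{Global detector.} Sample $\Theta(n^{1/3}/\eps)$ random vertices. If a witness has $w - u > T$, then a random vertex lies in the label interval $(u,w)$ with probability at least $T/n = n^{-2/3}$, and by a birthday-style calculation two samples collide inside such an interval for some global witness with constant probability. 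The queried neighbor lists together with the $p$-values at the colliding samples suffice to reconstruct and certify the conflicting pair.
\item \textbf{Local detector.} For each of $\Theta(n^{1/3}/\eps)$ random starting vertices $x$, perform a bounded local exploration of size $O(T)$ that follows the parent pointers $x, p(x), p(p(x)), \ldots$ and inspects the neighbors of each vertex along the way, looking for an out-of-range edge. The goal is to detect any conflicting pair whose three constituent vertices lie within label distance $T$ of $x$.
\end{itemize}
Running both subroutines and rejecting whenever either finds an explicit conflicting pair then yields the desired one-sided tester with total query complexity $O(n^{1/3}/\eps)$.

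The main obstacle is the local detector. Since $\num$ is not assumed to be a valid DFS numbering, there is no a~priori guarantee that vertices with nearby labels are nearby in the graph, and we cannot access a vertex by its label. Consequently, one must argue that for \emph{most} local witnesses $(v,\{u,w\})$ a short parent-pointer walk from some sampled vertex actually reaches $u$ or $w$ in $O(T)$ steps, and that this walk is cheap enough per starting vertex to fit into the $O(n^{1/3}/\eps)$ budget. Establishing this requires a careful structural analysis of what it means to be \eps-far from a valid DFS numbering at the \emph{local} scale, which I expect to be the technically delicate part of the proof. The global detector, by contrast, is essentially a standard birthday argument once the quantitative characterization is in place.
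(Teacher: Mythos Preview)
Your high-level architecture (quantitative strengthening of \cref{lem:dfs-characterization}, then a local/global split with two detectors balanced at threshold $\Theta(n^{1/3})$) is exactly the paper's, but both detectors as you describe them have genuine gaps.

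\textbf{Global detector.} The birthday argument does not work. Two sampled vertices $v_1,v_2$ landing in the same label interval $(u,w)$ tell you nothing about the \emph{edge} $\{u,w\}$: their neighbor lists need not contain $u$ or $w$, and their $p$-values need not be below $u$, so you cannot reconstruct or certify the conflict. The paper's detector instead samples $s=\Theta(n^{1/3}/\eps)$ random vertices \emph{and} $s$ random edges independently. The key structural step you are missing is a \emph{biclique amplification}: partition the matched global conflicts into strips of width $\ell$ according to the value of $u$; then one shows that within a strip, \emph{every} matched vertex on one side of the median conflicts with \emph{every} matched edge on the other side. This turns $m_j$ matched pairs in strip $j$ into $m_j^2$ conflicting pairs, and a second-moment calculation finishes. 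For this biclique argument to go through one needs all three gaps $u-p(v)$, $v-u$, $w-v$ to exceed $\ell$, which is how the paper defines ``global''; your split by the edge span $w-u$ alone is too coarse (a conflict with $v-u=1$ but $w-v$ huge is ``global'' in your sense yet breaks the biclique argument).

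\textbf{Local detector.} Following parent pointers $x,p(x),p(p(x)),\ldots$ is the wrong walk: it climbs the would-be DFS tree with labels decreasing by arbitrary jumps, and in an invalid numbering there is no reason it reaches the conflicting edge. What you actually need is to step through \emph{consecutive labels} $x,x{+}1,\ldots$ or $x,x{-}1,\ldots$. The paper builds procedures \textsc{dfs-next}/\textsc{dfs-prev} that simulate a DFS step in the tree defined by $p(\cdot)$; a single step can be expensive, but averaged over a uniformly random start the cost is $O(1)$ per step (this amortization is essential and is the ``careful structural analysis'' you anticipate). Correspondingly, the paper's local classification is three-way by which single gap is small ($u-p(v)\le\ell$, or $v-u\le\ell$, or $w-v\le\ell$), each handled by a short walk from the appropriate endpoint; only conflicts where \emph{all three} gaps exceed $\ell$ are sent to the global detector.
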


Our tester in \cref{thm:DFS-ub} has one-sided error and always accepts a valid DFS numbering. It achieves not only $O(n^{1/3}/\eps)$ query complexity, but also it has $O(n^{1/3}/\eps)$ running time, see \cref{sec:sweep-line-decider}.

\medskip

For the rest of the section, let $G$ be a corresponding labeled graph equipped with a possibly inappropriate DFS numbering $\num : V \rightarrow [n]$. As in \cref{section:DFS}, we assume that \emph{$\num$ is implicit} so that we can, for instance, write $v < w$ for $v,w \in  V$ instead of $\num(v) < \num(w)$.


\paragraph{Outline.}
\label{subsec:outline-ub}
Our approach to prove \cref{thm:DFS-ub} is first to extend \cref{lem:dfs-characterization} (which characterizes valid DFS numberings) to describe a simple and useful property of labelings that are \eps-far from a valid DFS numbering. In \cref{lem:dfs-characterization-epsilon-far} in \cref{subsec:properties-eps-far}, we will show that if the numbering is \eps-far from a valid DFS numbering then not only we have $\Omega(\eps n)$ conflicting pairs (in the sense of \cref{lem:dfs-characterization}), but in fact we have $\Omega(\eps n)$ conflicting pairs that are ``unrelated.''  Once we have this property, the task in hand will be to detect any of such conflicting pair. We observe that there are two types of conflicting pairs, \emph{local pairs} involving vertices whose DFS numbers are close to each other, and \emph{global conflicts}. In order to detect local conflicts, we first develop (in \cref{subsec:Navigating}) some basic tools to traverse a given graph following DFS numbering. Once we know how to traverse the graph, we can design an algorithm that can determine if a given pair $(v,\{u,w\})$ is conflicting pair. Unfortunately, this algorithm is efficient only if vertex $u$ or $w$ is close to $p(v)$ or $v$ (that is, if one of $\num(u) - \num(p(v))$, $\num(v) - \num(u)$, $\num(w) - \num(v)$ is small), and so we can use this approach to deal with local conflicts. In order to study global conflicts, we notice that if vertices $u$ and $w$ are far away from vertices $p(v)$ and $v$, then in fact a conflicting pair $(v,\{u,w\})$ can be also extended to multiple vertices $v$. Once we have that property, we will show that if we sample randomly $\Theta(n^{1/3}/\eps)$ vertices and $\Theta(n^{1/3}/\eps)$ edges, then if there were many global conflicts, then there would be one that is determined by one of the sampled vertices and one of the sampled edges.


\subsection{Properties of labelings that are \texorpdfstring{\eps}{epsilon}-far from any valid DFS numbering}
\label{subsec:properties-eps-far}

We begin with describing a useful property of labelings that are \eps-far from a valid DFS numbering that they have $\Omega(\eps n)$ conflicting pairs that are ``unrelated.''
Recall that by \cref{lem:dfs-characterization} a numbering is a valid DFS numbering if and only if there is no conflicting pair $(v,\{u,w\}) \in  V \times  E$ 
with $p(v) < u < v < w$. In that case, when $p(v) < u < v < w$, we speak of a conflict involving vertex $v$ and edge $\{u,w\}$. While \cref{lem:dfs-characterization} characterizes valid and invalid DFS numberings, we will need a stronger claim about properties of numberings that are \eps-far from valid DFS numberings. For that, we need to understand numberings that are not \eps-far from valid DFS numberings because we can modify the input graph with at most $\eps n$ edge modifications to ensure that the resulting graph will have a valid DFS numbering.

Observe that \cref{lem:dfs-characterization} provides a simple tool to edit the input labeled graph to obtain a valid DFS numbering --- to remove all conflicting pairs. With that in mind, the following claim provides a simple fix to remove all conflicts involving a specific vertex or all conflicts involving a specific edge (notice that the resulting graph may violate our degree bound $d$, but one can address this issue using a framework developed for that in \cref{section:deg-reduction-DFS}).

\begin{lemma}
\label{claim:how-to-fix}
Let $v \in  V$ and $\{u,w\} \in  E$ with $u < w$.
\begin{enumerate}[(i)]
\item Adding the edge $\{v-1,v\}$ to $G$ (and doing nothing if $\{v-1,v\}$ is already present) does not create any new conflicts and resolves all conflicts involving $v$.
\item Removing $\{u,w\}$ from $G$ and adding $\{w-1,w\}$ (if not already present) does not create any new conflicts and resolves all conflicts involving $\{u,w\}$.
\end{enumerate}
\end{lemma}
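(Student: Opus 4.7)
The plan is to reduce both parts to two simple observations about the explicit formula $p(v)=\max(N(v)\cap[v-1])$ (with $p(v)=0$ when this set is empty). Observation~(A): if $p(v)=v-1$, then no conflict can involve $v$, since a conflict would require an integer strictly between $p(v)$ and $v$. Observation~(B): an edge of the form $\{w-1,w\}$ can never be the edge $\{u,w\}$ of a conflict, for the same reason. In addition, I will use the trivial fact that $p(y)$ depends only on the neighbors of $y$ that are smaller than $y$, so toggling an edge $\{y,z\}$ with $z>y$ leaves $p(y)$ unchanged.

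For part~(i), after inserting $\{v-1,v\}$ we have $v-1\in N(v)\cap[v-1]$, so $p(v)=v-1$. By Observation~(A), no conflict involves $v$, which proves the \emph{resolves} half of the statement. For the \emph{no new conflicts} half it suffices to inspect the two vertices whose neighborhoods change and the one new edge. The value $p(v-1)$ is unaffected because the new neighbor $v$ exceeds $v-1$; the value $p(v)$ can only grow, and a larger $p(v)$ can only eliminate conflicts involving $v$. The only edge that might be new is $\{v-1,v\}$, which Observation~(B) rules out as the edge of any conflict.

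For part~(ii), the degenerate case $u=w-1$ is handled at once: the two operations cancel, and Observation~(B) already guarantees that $\{u,w\}=\{w-1,w\}$ was never the edge of a conflict. So assume $u<w-1$. Since $\{u,w\}$ is genuinely removed from $E$, every conflict having $\{u,w\}$ as its edge disappears. The only neighborhoods altered are $N(u)$, $N(w-1)$, and $N(w)$. The removed neighbor $w$ is larger than $u$ and the added neighbor $w$ is larger than $w-1$, so $p(u)$ and $p(w-1)$ are unchanged. After the operations $w-1\in N(w)$ (whether newly added or already present), giving $p(w)=w-1$; Observation~(A) then forbids any conflict involving $w$. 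The only newly inserted edge is $\{w-1,w\}$, which Observation~(B) excludes from being a conflict edge.

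The main obstacle is really only the bookkeeping of which $p$-values change under each modification; once Observations~(A) and~(B) are isolated this is essentially immediate. The only subtle point is the degenerate case $u=w-1$ in~(ii), which collapses to a triviality. No inductive or structural DFS argument is needed: both parts follow directly from the static formula for $p(\cdot)$ and the two one-line remarks above.
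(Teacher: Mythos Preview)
Your proof is correct and follows essentially the same approach as the paper: both arguments rest on the observation that $p(v)=v-1$ kills all conflicts at $v$, that an edge $\{w-1,w\}$ can never be a conflict edge, and that only the $p$-values of endpoints can change when an edge is toggled. The paper's version is slightly terser (it handles part~(ii) by invoking part~(i) directly rather than rechecking the cases, and it does not single out the degenerate case $u=w-1$), but the substance is identical.
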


\begin{proof}
\begin{enumerate}[(i)]
\item After the edit we have $p(v) = v-1$ so no $u \in  V$ can satisfy $p(v) < u < v$ any more, meaning all conflicts involving $v$ are resolved. We do not create any new conflicts because $p(v-1)$ does not change and the new edge $\{v-1,v\}$ cannot be involved in a conflict because (again) no $v'$ can satisfy $v-1 < v' < v$.
\item Deleting $\{u,w\}$ clearly resolves all conflicts of this edge. However, new conflicts involving $w$ may arise since $p(w)$ may change if we had $p(w) = u$ before. By (i) we can fix these by adding $\{w-1,w\}$.\qedhere
\end{enumerate}
\end{proof}

\def\C{\mathcal{C}}
\cref{claim:how-to-fix} shows that adding or removing a few edges can resolve many related conflicts. We use this claim to show that in order for $G$ to be $\eps $-far from having a valid DFS numbering, not only do there have to be many conflicts, but in fact there have to be \emph{many mutually unrelated conflicts}. To formalize this idea we consider the bipartite \emph{conflict graph} $\C = (V,E,C)$ where $C \subseteq  V \times  E$ contains an edge $(v,\{u, w\})$ precisely if it is a conflicting pair. The intuition of mutually unrelated conflicts corresponds to a matching in $\C$. We have the following.

\begin{lemma}[\textbf{\eps-far DFS numberings}]
\label{lem:dfs-characterization-epsilon-far}
If $G$ is $\eps $-far from having a valid DFS numbering then there is a matching $M \subseteq  C$ of size $|M| \ge  \eps n/5$ in $\C$.
\end{lemma}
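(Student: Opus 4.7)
The plan is to argue the contrapositive: if the largest matching $M$ in the bipartite conflict graph $\mathcal{C} = (V, E, C)$ satisfies $|M| < \varepsilon n / 5$, then $G$ can be transformed into a labeled graph with a valid DFS numbering using fewer than $\varepsilon n$ edge modifications (while staying within max-degree $d$), contradicting $\varepsilon$-farness.

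First, fix a maximum matching $M$ in $\mathcal{C}$. Because $\mathcal{C}$ is bipartite with parts $V$ and $E$, König's theorem supplies a minimum vertex cover $V' \cup E'$ with $V' \subseteq V$, $E' \subseteq E$, and $|V'| + |E'| = |M|$. By definition of a cover, every conflicting pair $(v, \{u,w\}) \in C$ satisfies either $v \in V'$ or $\{u,w\} \in E'$.

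Second, process this cover via \cref{claim:how-to-fix}: for each $v \in V'$ insert the edge $\{v-1, v\}$ (part (i)), and for each $\{u,w\} \in E'$ delete $\{u,w\}$ and insert $\{w-1, w\}$ (part (ii)). Each such operation eliminates every conflict involving the targeted vertex or edge and, crucially, introduces no new conflicts. Since the cover meets every conflicting pair, no conflicts remain after the procedure, and \cref{lem:dfs-characterization} then certifies that the modified graph has a valid DFS numbering. The raw modification count is at most $|V'| + 2|E'| \le 2|M|$.

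Third, enforce the max-degree bound $d$. Whenever an inserted edge would raise some endpoint's degree above $d$, we delete a carefully chosen existing edge at that endpoint to make room---preferring edges $\{x, y\}$ with $x \neq p(y)$ and $y \neq p(x)$, which exist whenever the endpoint actually needs a freed slot. Each vertex $x \in V$ can receive only a bounded number of newly inserted edges during the entire procedure (at most one from $x \in V'$, one from $x+1 \in V'$, and one from being the right endpoint of some $\{u, x\} \in E'$), so the total number of extra deletions is $O(|M|)$. A direct accounting bounds all modifications by $5|M|$. If $|M| < \varepsilon n / 5$, this is strictly less than $\varepsilon n$, contradicting $\varepsilon$-farness, so $|M| \ge \varepsilon n / 5$.

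The main obstacle is the third step: naive degree-repair deletions can themselves create fresh conflicts, since deleting an edge $\{x, y\}$ with $x = p(y)$ lowers $p(y)$ and may expose new conflicting pairs involving $y$. The careful selection rule above---deleting only edges that are not the current parent pointer of either endpoint---avoids this, but verifying that such a choice is always available (given the degree budget) and that the final bound $5|M|$ truly absorbs all these auxiliary deletions is where the bookkeeping must be done with care.
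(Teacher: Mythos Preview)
Your first two steps---K\"onig's theorem applied to the bipartite conflict graph, then repairing every covered vertex/edge via \cref{claim:how-to-fix}---are exactly what the paper does, and your edit count $\le 2|M|$ for this phase is correct.

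The gap is in your third step. Your rule is to free a slot at an over-degree vertex $x$ by deleting some incident edge $\{x,y\}$ with $x \neq p(y)$ and $y \neq p(x)$, i.e., a back edge in the (now valid) DFS forest of $G^*$. But such an edge need not exist. Concretely: take $d=3$ and a vertex $x$ whose three original neighbours are all larger than $x$ and each has $x$ as its unique smaller neighbour, so $x = p(y)$ for every neighbour $y$. If the fix phase inserts $\{x-1,x\}$ (say because $x \in V'$), then $x$ has degree $4$ in $G^*$ and \emph{every} incident edge is a tree edge---one to the new parent $x-1$ and three to children. Your deletion rule has nothing to delete, and deleting any of these tree edges would change a $p(\cdot)$ value and potentially reintroduce conflicts, exactly the failure mode you identified.

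The paper handles precisely this case with a separate ``degree reduction'' lemma (\cref{lemma:deg-reduction-DFS}). When an over-degree vertex $v$ has no spare back edge (at most one neighbour with smaller number), the lemma \emph{rewires} the last child: it replaces the tree edge $\{v,u_k\}$ to the child $u_k$ of largest number by the edge $\{u_k-1,u_k\}$, and then, if needed, deletes a back edge at $u_k-1$ (which is guaranteed to exist there). This preserves validity of the DFS numbering and costs at most three edits per over-degree vertex, giving the extra $3|M|$ and the final $5|M|$ bound. Your outline is the right one; what is missing is this rewiring trick (or an equivalent) to cover the all-tree-edges case.
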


\begin{proof}
\def\Cov{\mathcal{VC}}
We will prove \cref{lem:dfs-characterization-epsilon-far} by showing that if $G$ is \eps-far from having a valid DFS numbering then $\C$ has a vertex cover of size at least $\eps n/5$, for if not, then we could modify at most $\eps n$ edges of $G$ to make the labeling to be a valid DFS numbering of the resulting graph. Then  \cref{lem:dfs-characterization-epsilon-far} follows from K\"{o}nig's theorem.

Let $M \subseteq  C$ be a maximum matching in $\C$. By Kőnig's theorem there is a vertex cover $\Cov \subseteq  V \cup  E$ of size $|\Cov| = |M|$, i.e., a set of vertices and edges of $G$ (vertices of $\C$) such that for every conflict pair $(v,\{u,w\})$ (for every edge of $\C$) $v \in  \Cov$ or $\{u,w\} \in  \Cov$. We can fix all conflicts in $G$ in $\le  2 |\Cov|$ edits by applying for each $v \in  \Cov \cap  V$ the fix of \cref{claim:how-to-fix}~(i) (one edit) and for each $\{u,w\} \in  \Cov \cap  E$ the fix of \cref{claim:how-to-fix}~(ii) (two edits). We obtain a graph $G^*$ with a valid DFS numbering. However, the vertices that appeared in a fix in the role of $v$ or $w$ may have degree $d+1$ in $G^*$, higher than permitted. By 
our ``degree reduction framework'' from \cref{lemma:deg-reduction-DFS} (see \cref{section:deg-reduction-DFS}),
another $3|\Cov|$ edits suffices to transform $G^*$ into $G^{**} = (V,E^{**})$ with maximum degree $d$ while maintaining the validity of the DFS numbering. Overall $|E^{**} \triangle  E| \le  5|\Cov|$ and since $G$ is $\eps $-far we have $|E^{**} \triangle  E| \ge  \eps n$. Hence $|M| = |\Cov| \ge  \eps n/5$.
\end{proof}

\cref{lem:dfs-characterization-epsilon-far} immediately implies a simple tester detecting \eps-far instances: we randomly sample $\Omega(\sqrt{n/\eps})$ vertices and $\Omega(\sqrt{n/\eps})$ edges, and then 
with a constant probability one of the sampled vertices $v$ and one of the sampled edges $e$ will form a conflicting pair $(v,e)$. 

%

\begin{corollary}
\label{corollary:DFS-ub-weak}
There is an algorithm that with oracle access to a labeled bounded degree~graph $G$ on $n$ vertices, performs $O(\sqrt{n/\eps})$ queries to the oracle and accepts, if $G$ has a valid DFS numbering, and rejects with probability 
$\frac23$, if $G$ is \eps-far from having a valid DFS numbering.
\qed
\end{corollary}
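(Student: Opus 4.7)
The plan is to implement directly the two-line sketch given just before the corollary. The algorithm draws $s = \Theta(\sqrt{n/\eps})$ uniformly random vertices forming a set $V^*$ and $s$ ``edge samples'' forming a set $E^*$, where each edge sample is obtained by picking $(u, i) \in V \times [d]$ uniformly and recording $\{u, N_i(u)\}$ whenever the $i$-th neighbor slot of $u$ is occupied. Each actual edge of $G$ is then produced by a single trial with probability exactly $2/(dn)$. For every $v \in V^*$ the tester queries all $d$ neighbors of $v$ together with their labels, which suffices to read off $p(v)$; for every $e = \{u,w\} \in E^*$ it queries the labels of $u$ and $w$. This uses $O(ds) = O(\sqrt{n/\eps})$ queries in total. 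The tester then inspects all $|V^*| \cdot |E^*|$ pairs and rejects if and only if some pair satisfies the conflicting-pair condition $p(v) < u < v < w$ from \cref{def:conflicting -pair}.

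Completeness follows immediately from \cref{lem:dfs-characterization}: under a valid DFS numbering no conflicting pair exists and the algorithm accepts with probability $1$. For soundness, assume $G$ is $\eps$-far from any valid DFS numbering. By \cref{lem:dfs-characterization-epsilon-far} there is a matching $M = \{(v_1, e_1), \dots, (v_k, e_k)\}$ in the conflict graph $\C$ with $k \ge \eps n / 5$ and the $v_i$'s (respectively $e_i$'s) pairwise distinct. It suffices to show that with probability at least $\frac{2}{3}$ some index $i$ satisfies both $v_i \in V^*$ and $e_i \in E^*$, since then that index certifies a conflicting pair that the tester is guaranteed to detect.

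The main calculation is therefore a second-moment estimate on $X := \bigl|\{i : v_i \in V^* \text{ and } e_i \in E^*\}\bigr|$, and this is the step I expect to be the sole technical obstacle. Since vertex samples and edge samples are independent and each $v_i$ (respectively $e_i$) is collected with probability $\Theta(s/n)$, linearity gives $\Ex{X} = \Omega(k \cdot (s/n)^2) = \Omega(\eps s^2/n)$, which can be made an arbitrarily large constant by choosing the hidden constant in $s$ appropriately. The key observation for the variance bound is that for sampling with replacement the pairwise inclusion probabilities satisfy $\Pr[v_i, v_j \in V^*] \le \Pr[v_i \in V^*]\Pr[v_j \in V^*]$ (and likewise for edges); combined with the independence between the two sampling processes this makes every cross-covariance $\mathrm{Cov}(X_i, X_j)$ non-positive, so $\Var{X} \le \sum_i \Var{X_i} \le \Ex{X}$. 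Chebyshev's inequality then yields $\Pr[X = 0] \le 1/\Ex{X} \le \frac{1}{3}$, completing the proof.
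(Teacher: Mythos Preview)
Your proposal is correct and follows exactly the approach the paper sketches in the sentence preceding the corollary (sample $\Theta(\sqrt{n/\eps})$ vertices and edges and look for a conflicting pair among the sampled pairs); the paper itself gives no further proof beyond that sketch and the \qed. Your second-moment argument with the negative-correlation observation for i.i.d.\ sampling is a clean way to flesh out the ``with constant probability'' claim, and the query accounting via $p(v)$ computed from the $d$ neighbors is exactly right.
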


In what follows, we will show how to improve this result%
, as promised in \cref{thm:DFS-ub}. 


\subsection{Navigating (would-be) DFS trees}
\label{subsec:Navigating}


In this section, we develop tools to find conflicting pairs for vertices. We first show how to traverse a graph using consecutive labels.

\begin{fact}
\label{fact:navigate-ordered-trees}
Let $T$ be an ordered tree of bounded degree and $S$ its canonical DFS ordering\footnote{The DFS starts at the root and respects the order of children when visiting them in pre-order traversal.}%
. 
For a randomly selected vertex $v$ from $T$, given oracle access to $T$, it is possible to locate the successor (and the predecessor) of $v$ in $S$, if one exists, with $O(1)$ queries to $T$ in expectation.
\end{fact}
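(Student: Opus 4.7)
The plan is to give an explicit navigation procedure whose total cost, summed over all $v \in T$, is $O(n)$; averaging then yields the claimed $O(1)$ expected cost.

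For the successor of $v$ in $S$ I would exploit that in the canonical pre-order the vertex with label $\num(v)+1$ is either the first child of $v$ in $T$ (when $v$ has any children) or else the ``next sibling'' at some strict ancestor $a$ of $v$; in both cases it is adjacent to some vertex on the path from $v$ to the root of $T$. The procedure sets $u \leftarrow v$, spends $O(d) = O(1)$ queries to retrieve all neighbours of $u$ together with their labels, and returns the neighbour with label $\num(v)+1$ if one exists. Otherwise it sets $u \leftarrow p(u)$ (the neighbour of $u$ of maximum label below $\num(u)$, obtained from the same scan) and repeats, halting without a successor if the root is reached. For the amortised bound, write $v = v_i$ in the pre-order $v_1,\ldots,v_n$ and let $k_i$ be the number of iterations executed to find $v_{i+1}$. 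A short case analysis shows that $k_i - 1$ equals the number of tree edges traversed \emph{upward} during a complete DFS of $T$ between the visits of $v_i$ and $v_{i+1}$. Summing over $i = 1,\ldots,n-1$, the right-hand side telescopes to $n-1$, since each tree edge is traversed upward exactly once during a full DFS. Thus $\sum_{v \in T}\mathrm{cost}(v) = O(n)$, and the expectation under a uniform $v$ is $O(1)$.

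For the predecessor the argument would be symmetric: the vertex with label $\num(v) - 1$ is either $p(v)$ (when $v$ is its parent's first child) or else the last descendant, in pre-order, of the previous sibling $s$ of $v$ at $p(v)$. I would locate $s$ as the neighbour of $p(v)$ with largest label strictly between $\num(p(v))$ and $\num(v)$, then set $u \leftarrow s$ and iteratively replace $u$ by the neighbour of $u$ of largest label above $\num(u)$, stopping once no such neighbour exists. The same telescoping, now matching walks-down against the \emph{downward} edge traversals of a single full DFS, again bounds the total work by $O(n)$.

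The main difficulty I anticipate is making the amortisation precise: one has to verify that the walks-up (respectively walks-down) performed across distinct successor (predecessor) computations together traverse each tree edge at most once in the relevant direction, so that collectively they describe a single DFS traversal of $T$. Once this identification is justified, the per-iteration $O(d) = O(1)$ query cost and the $n-1$ edge budget immediately give the required $O(n)$ total-cost bound and hence the $O(1)$ expectation.
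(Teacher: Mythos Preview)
Your proposal is correct and follows essentially the same approach as the paper: the paper gives the explicit procedures \textsc{tree-next} and \textsc{tree-prev} (walk up to the first ancestor with an unexplored next sibling; respectively, go to the previous sibling and descend to its last descendant) and argues that calling \textsc{tree-next}$(v)$ over all $v$ amounts to a single DFS traversal of $T$, so each edge is charged $O(1)$ times and the average cost is $O(1)$. Your label-based phrasing (searching neighbours for $\num(v)+1$ while walking up) is equivalent to the paper's structural phrasing (first child / next sibling), and your amortisation via $\sum_i (k_i-1) = n-1$ is exactly the ``one full DFS'' accounting the paper uses, just made more explicit; the word ``telescopes'' is slightly off---it is really a global counting of upward edge traversals---but the bound is right.
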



\begin{figure}[t]
    \SetKwFor{Repeat}{repeat}{}{EndLoop}
    \begin{minipage}{0.48\linewidth}
        \begin{algorithm}[H]
            \caption{\textsc{tree-next}($v$)}
            \label{alg:tree-next}
            \If{$v$ has children $c_1,\dots ,c_k$}{
                \Return $c_1$\;
            }
            \Repeat{}{
                \If{$v$ has no parent}{
                    \Return None\;
                }
                \If{$v$ has a next sibling $s$}{
                    \Return $s$\;
                }
                $v \leftarrow \text{parent of $v$}$\;
            }
        \end{algorithm}
    \end{minipage}
    \begin{minipage}{0.48\linewidth}
        \begin{algorithm}[H]
            \caption{\textsc{tree-prev}($v$)}
            \label{alg:tree-prev}
            \If{$v$ has no parent}{
                \Return None\;
            }
            \If{$v$ has no previous sibling}{
                \Return parent of $v$\;
            }
            $w \leftarrow \text{previous sibling of $v$}$\;
            \While{$w$ has a child}{
                $w \leftarrow \text{last child of $w$}$\;
            }
            \Return $w$\;
        \end{algorithm}
    \end{minipage}
    \caption{Locate the successor and predecessor of a vertex $v$ in the DFS order of an ordered tree.}
    \label{algs:tree-next:tree-prev}
\end{figure}


\begin{proof}
We give here a rather straightforward proof of \cref{fact:navigate-ordered-trees} using traversal algorithms \textsc{tree-next} (Algorithm~\ref{alg:tree-next}) and \textsc{tree-prev} (Algorithm~\ref{alg:tree-prev}). It relies on the fact that finding all predecessors (or all successors) can be done with a single DFS traversal, and hence it can be done in $O(|V|+|E|)$ time, which is, on average, $O(1)$ time per vertex (for bounded degree graphs).

Consider \textsc{tree-next} (Algorithm~\ref{alg:tree-next}) and \textsc{tree-prev} (Algorithm~\ref{alg:tree-prev}). Calling \textsc{tree-next}($v$) for all vertices $v$ corresponds to a full DFS run on $T$, which traverses each edge exactly twice. The average number of queries per call is therefore $O(1)$. The same applies to Algorithm~\ref{alg:tree-prev} \textsc{tree-prev}.
This immediately implies that for a random vertex $v$, finding a successor and predecessor of $v$ can be done in expected $O(1)$ time.
\end{proof}

Using \cref{fact:navigate-ordered-trees}, we can prove the following lemma.

\begin{lemma}[\textsc{dfs-next}]
\label{lem:dfs-next}
There is an algorithm \textsc{dfs-next} that for a given vertex $v \in V$:
\begin{itemize}
\item If the numbering of $G$ is a valid DFS numbering then either vertex $v+1$ is returned or \textsc{end-of-component} is returned if $v$ is the largest vertex in its connected component.
\item The expected (over vertices in $V$) number of queries to $G$ of algorithm \textsc{dfs-next} is $O(1)$.
\end{itemize}
There also exists an algorithm \textsc{dfs-prev} with corresponding properties.
\junk{
There is an algorithm \textsc{dfs-next} that, given $v \in  V$ behaves as follows.
\begin{itemize}
        \item It makes $O(1)$ queries about $G$ on average over $V$.
        \item If the numbering of $G$ is a valid DFS numbering then%
        \junk{
        \begin{itemize}
            \item vertex $v+1$ is returned or
            \item \textsc{end-of-component} is returned if $v$ is the largest vertex in its connected component.
        \end{itemize}
        }
        either vertex $v+1$ is returned or \textsc{end-of-component} is returned if $v$ is the largest vertex in its connected component.
\end{itemize}
There also exists an algorithm \textsc{dfs-prev} with corresponding properties.
}
\end{lemma}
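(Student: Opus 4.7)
My plan is to reduce the lemma to navigation in the (would-be) DFS tree and invoke \cref{fact:navigate-ordered-trees}. Assuming the numbering is a valid DFS numbering, \cref{claim:dfs-parent-is-p} states that the parent of each vertex $v$ in the actual DFS tree equals $p(v)$. Let $T$ be the ordered tree obtained by installing a virtual root $0$ as the parent of all orphans and sorting the children of every internal vertex by increasing label. An easy observation shows that an original DFS visits the children of $v$ in order of ascending label (the first visited child gets $v+1$, and its entire subtree uses labels strictly smaller than the next child), so the canonical DFS ordering of $T$ coincides with the given labeling. Consequently, the successor of $v$ in $T$ is exactly $v+1$ whenever $v$ is not the last vertex of its component, and \textsc{dfs-next} need only intercept the one backtracking step that would cross from an orphan to a sibling orphan under $0$ and return \textsc{end-of-component} there.

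What remains is to simulate a tree oracle on top of the $G$-oracle. Every tree primitive used by \cref{alg:tree-next} (parent, first child, next sibling) can be emulated with $O(d^2) = O(1)$ queries to $G$: $p(v)$ is computed by reading the $O(d)$ labels of $N(v)$ and taking the largest one below $v$; the ordered children of $v$ are obtained by fetching $N(v)$, computing $p(w)$ for each neighbor $w > v$, and retaining those with $p(w) = v$; siblings of $v$ are read off from the children of $p(v)$. Wrapping \cref{alg:tree-next} around these primitives (with the orphan/root interception described above) defines \textsc{dfs-next}, while wrapping \cref{alg:tree-prev} analogously defines \textsc{dfs-prev}.

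For the complexity, I would quote \cref{fact:navigate-ordered-trees}: summed over all $v \in V$, the tree primitives consumed by \textsc{tree-next} retrace at most one DFS of $T$ and therefore traverse each tree edge $O(1)$ times, so the expected number of primitives per call on a uniformly random $v$ is $O(1)$. Multiplying by the $O(1)$ $G$-queries per primitive yields the stated $O(1)$ expected $G$-query complexity, and the argument for \textsc{dfs-prev} is symmetric. The main subtlety I expect to have to address carefully is the orphan/virtual-root bookkeeping: one must check that the amortization underlying \cref{fact:navigate-ordered-trees} still absorbs the backtracking steps charged to orphans, and that the algorithm never accidentally outputs the first vertex of the next component in place of \textsc{end-of-component}.
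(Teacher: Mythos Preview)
Your proposal is correct and follows essentially the same approach as the paper: build the ordered tree $T$ on $V\cup\{0\}$ via the parent function $p(\cdot)$, invoke \cref{fact:navigate-ordered-trees} for successor/predecessor navigation, and replace any access to the virtual root $0$ by \textsc{end-of-component}. You spell out the tree-primitive simulation in somewhat more detail than the paper does, and you correctly flag the orphan bookkeeping; the paper handles this in one sentence by noting that $0$ has unbounded degree and simply intercepting any access to it, and it adds the remark (which you might make explicit) that even when the numbering is invalid the amortized query bound still holds because $T$ is well-defined regardless.
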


\begin{proof}
Let $T$ be the ordered tree on $V \cup  \{0\}$ rooted at $0$ with the edges $\{p(v),v\}$ for $v \in  V$ and children ordered ascending by number. If the numbering of $G$ is a valid DFS numbering then (by \cref{claim:dfs-parent-is-p}) $T$ is precisely the DFS tree that gave rise to the numbering.
We can navigate to successors and predecessors in $T$ (by \cref{fact:navigate-ordered-trees}) with an expected number of $O(1)$ queries to $G$ per step. The only problem is that the virtual vertex $0$ cannot be accessed and has unbounded degree. Whenever we would have to access it, we return \textsc{end-of-component} instead, which is appropriate because a valid DFS backtracks to $0$ precisely if a connected component has been fully explored. (If the numbering of $G$ is invalid, the produced answer may be meaningless, but it is still obtained within the claimed expected time budget.)
\end{proof}


\subsection{Testing for conflicts}
\label{subsec:Testing-for-conflicts}

\cref{lem:dfs-characterization-epsilon-far} implies that an \eps-far instance has $\Omega(\eps n)$ pairwise distinct vertices and edges involved in conflicts.
In this section, we will extend that approach and study separately \emph{local conflicting pairs} and \emph{global conflicting pairs} in order to improve the tester from \cref{corollary:DFS-ub-weak}. This notion depend on a locality parameter that we will later choose as $\ell  = n^{1/3}$.

\def\L#1{\textbf{\upshape\sffamily\color{black} (L#1)}}
\def\G{\textbf{\upshape (G)}}
\begin{definition}
    Let $(v,\{u,w\}) \in  V \subseteq  E$ with $p(v) < u < v < w$ be a conflicting pair. We speak of a \textbf{local conflict} of the following (not mutually exclusive) types
    \begin{description}
        \item[\L{1}] if $u-p(v) \le  \ell $ and $p(v) \neq  0$,
        \item[\L{2}] if $v-u \le  \ell $,
        \item[\L{3}] if $w-v \le  \ell $.
    \end{description}
    We speak of a \textbf{global conflict} in all other cases i.e.,
    \begin{description}
        \item[\G] if $p(v) = 0$ and $\max\{v-u,w-v\} > \ell $; or if $\max\{u-p(v),v-u,w-v\} > \ell $.
    \end{description}
\end{definition}

Informally, local conflicts occur when $u$ is close (in the sense of its DFS number) to $p(v)$ or $v$, or $w$ is close $v$, in which case one can traverse the input graph to efficiently detect the conflict. For global conflicts, some more global approach will be needed.


\subsubsection{Testing for local conflicts}
\label{subsubsec:testing-local-conflicts}
%
We can detect local conflicts by sampling vertices at random and walking forwards and backwards in the (supposed) DFS order using \cref{lem:dfs-next} as follows.

\begin{lemma}
\label{lem:testing-local-conflicts}
There is an algorithm that with $O(\ell/\eps)$ queries in expectation accepts all valid DFS numberings and rejects with probability 
$2/3$ if there is a matching $M \subseteq  V \subseteq  E$ of size at least $\frac{\eps n}{30}$ consisting of conflicting pairs of type \L1.
The same applies to types \L2 and \L3.
\end{lemma}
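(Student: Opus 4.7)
My plan is to give three structurally symmetric sampling-based procedures, one for each local type; I describe the L2 case in detail, the others being analogous.

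Algorithm (L2, $v-u\le\ell$). Sample $k=\Theta(1/\eps)$ uniform random vertices $x_1,\ldots,x_k$. For each $x_i$, iterate \textsc{dfs-prev} up to $\ell$ times starting at $x_i$ to obtain a window $W_i$ of at most $\ell+1$ vertices; query every vertex in $W_i$ for all its neighbors. Treating $x_i$ as the candidate $v$ of a conflict, compute $p(x_i)$ (the maximum neighbor of $x_i$ strictly less than $x_i$) and search for $u\in W_i$ and $w\in N(u)$ with $p(x_i)<u<x_i<w$; if found, output \emph{reject} with the concrete conflicting pair $(x_i,\{u,w\})$, otherwise continue. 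After all $k$ samples, output \emph{accept}. For L3 ($w-v\le\ell$) one walks forward instead, still treating $x_i$ as $v$ and looking for $(u,w)$ with $w$ inside the window and $u\in N(w)$; for L1 ($u-p(v)\le\ell$) one walks forward treating $x_i$ as the presumed $p(v)$, checks candidate children $v\in N(x_i)$ with $v>x_i$ and $p(v)=x_i$, and searches for $u$ in the forward window together with $w\in N(u)$ completing the pattern $x_i<u<v<w$.

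Expected query complexity. By \cref{lem:dfs-next}, each \textsc{dfs-prev} call costs $O(1)$ queries averaged over $V$. When $x_i$ is uniform, each of the $\ell$ vertices on the walk is (up to a negligible boundary effect) also uniform in $V$, so by linearity of expectation the walk from a single $x_i$ costs $O(\ell)$, and querying the bounded-degree neighborhoods adds another $O(d\ell)=O(\ell)$. Summed over $k=\Theta(1/\eps)$ samples, the total expected cost is $O(\ell/\eps)$.

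Correctness. Rejection happens only when an explicit conflicting pair is exhibited, so one-sided error is immediate from \cref{lem:dfs-characterization}. For completeness, suppose the matching $M$ of type-L2 conflicts satisfies $|M|\ge\eps n/30$. Since $M$ is a matching in a bipartite graph with $V$ on one side, its first coordinates are pairwise distinct, so a single uniform sample coincides with some matched $v$ with probability $\ge\eps/30$, and $k=\Theta(1/\eps)$ independent samples hit some matched $v$ with probability $\ge 2/3$.

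The hard step is to show that once $x_i=v$ for some matched pair $(v,\{u,w\})\in M$, the walked window $W_i$ actually exhibits a conflict. In a valid DFS numbering this follows at once from \cref{lem:dfs-next}: iterating \textsc{dfs-prev} enumerates $v-1,\ldots,v-\ell$ in order and $v-u\le\ell$ forces $u\in W_i$, yielding the required $(u,w)$. The difficulty is that when conflicts exist, the tree $T$ defined by $p(\cdot)$ need not produce the label order under its DFS traversal, so $W_i$ may miss the intended $u$. I would close this gap by arguing that any such divergence between $T$'s DFS order and the label order within a length-$\ell$ window is itself witnessed by a conflicting pair located in the queried set: either the matched $u$ is recovered in $W_i$, or the first position at which $T$'s DFS order skips a label $v-i$ that lies in the interval $(p(x_i),x_i)$ must occur because some edge jumps across that position in the implicit order, producing an alternative explicit conflict that the algorithm detects. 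This gives constant-probability rejection per matched sample, and combined with the hit bound above completes the proof.
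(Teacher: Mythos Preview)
Your overall scheme matches the paper's, but two specific choices leave genuine gaps.

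\textbf{The ``hard step'' is not closed.} You commit to rejecting only when an explicit conflicting pair is found inside the queried window, and then assert without proof that whenever the $T$-walk diverges from the label order some such pair must appear there. This is false. Take $v$ whose only neighbor is $p(v)=v-5$, with $v-5$ having $v$ as its only neighbor; then \textsc{dfs-prev}$(v)$ returns $v-5$ immediately, skipping $v-1,\dots,v-4$, and the only edge your algorithm sees near $v$ is $\{v-5,v\}$. A type-\L2 conflict $(v,\{v-2,v+1\})$ can exist via an edge $\{v-2,v+1\}$ elsewhere in the graph, yet no conflicting pair lies among the queried vertices and edges (the natural candidate $(v-1,\{v-5,v\})$ would do, but $v-1$ is never reached and the model forbids querying a vertex by its label). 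The paper sidesteps this entirely: it rejects the moment \textsc{dfs-next}/\textsc{dfs-prev} returns a label out of order, which is sound because \cref{lem:dfs-next} guarantees consecutive labels on valid inputs. With that rule the dichotomy you were trying to manufacture is automatic: either the walk stays in order and reaches the matched $u$, or it goes out of order and we have already rejected.

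\textbf{The \L3 procedure is wrong.} You walk \emph{forward} from $x_i=v$ looking for $w$. The paper explicitly rules this out: \textsc{dfs-next} may return \textsc{end-of-component} before $w$ is reached, and since you have no prior knowledge that any vertex with a larger label shares $v$'s component, this is not an error you can flag. Concretely, if $v$'s only neighbor is $p(v)$ and $p(v)$'s only neighbor is $v$, then a single step of \textsc{dfs-next} from $v$ already ends the walk, yet $(v,\{u,w\})$ can still be a type-\L3 conflict via an edge living elsewhere. The paper instead samples an \emph{edge} $\{u,w\}$ and walks backward from $w$; now $u\in N(w)$ with $u<w$ is known, so any premature stop or out-of-order step before reaching label $u$ is certifiably inconsistent with a valid numbering. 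Your \L1 variant (sampling $p(v)$ rather than $v$) is not wrong but is roundabout, costing an extra factor $d=O(1)$ since distinct matched $v$'s need not have distinct parents; the paper simply samples $v$ and starts the forward walk at $p(v)$.
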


\begin{proof}
    Consider the following procedure \textsc{walk-from-}$p(v)$ that is given $v \in  V$ as input. It first checks if $p(v) \neq  0$. If so, it attempts to locate vertices $p(v)+1,p(v)+2,\dots $ using \textsc{dfs-next} from \cref{lem:dfs-next} until one of the following happens.
    \begin{itemize}
        \item If \textsc{dfs-next} reaches vertices out of order, then report the error.
        \item If \textsc{dfs-next} reaches a vertex $u$ that has a neighbour $w > v$ then report the error.
        \item If $v$ or $p(v)+\ell $ is reached without finding an error, then conclude that $v$ is not involved in a conflict of type \L1.
    \end{itemize}
    Clearly \textsc{walk-from-}$p(v)$ finds an error if $v$ is involved in a conflict of type \L1 and by \cref{lem:dfs-next} it performs $O(\ell )$ queries in expectation (over all $v \in V$). Our algorithm to detect conflicts of type \L1 is now simply to repeat \textsc{walk-from-}$p(v)$ for many $v$ sampled uniformly at random. Since the $\frac{\eps n}{30}$ vertices matched in $M$ are pairwise distinct, a random vertex from $V$ is involved in a conflict of type \L1 with probability $\frac{\eps }{30}$. If we sample $\frac{60}{\eps }$ vertices at random then the probability of sampling at least one $v$ involved in a conflict of type \L1 is at least
    \begin{align*}
        1-(1-\eps /30)^{60/\eps } 
        \ge  1- \mathrm{e}^{-(\eps /30) \cdot (60/\eps )} 
        = 1-\mathrm{e}^{-2}
        \ge  2/3 \enspace.
    \end{align*}
    The expected total number of queries amounts to $O(\ell /\eps)$, which concludes the claim.

    For conflicts of type \L2 a similar procedure \textsc{walk-backwards-from-}$v$ works. For conflicts of type \L3 a procedure \textsc{walk-forwards-from-}$v$ would \emph{not} work because \textsc{dfs-next} might report \textsc{end-of-component} before $w$ is reached (and without us being aware of $w$'s existence).
    A procedure \textsc{walk-backwards-from-}$w$ does work, however. Note that we have to sample $\frac{60d}{\eps }$ edges uniformly at random which is still $O(1/\eps )$ because $d$ is constant.
\end{proof}


\subsubsection{Testing for global conflicts}
\label{subsubsec:testing-global-conflicts}
%
\cref{lem:testing-local-conflicts} provides an efficient tool to detect local conflicts but for global conflicts we use a different approach. We rely on \cref{lem:dfs-characterization-epsilon-far} that promises that there is a matching $M \subseteq  V \subseteq  E$ of size at least $\eps n/10$ consisting of conflicting pairs. Therefore, if most of conflicts defining the matching $M$ are global, we can use the following lemma.

\begin{lemma}
\label{lem:detecting-globally-conflicting}
There is an algorithm that performs $O(\sqrt{n/\ell }/\eps )$ queries in expectation and accepts all valid DFS numberings and that rejects with probability at least $2/3$ if there is a matching $M \subseteq  V \subseteq  E$ of size at least $\eps n/10$ consisting of conflicting pairs of type (G).
\end{lemma}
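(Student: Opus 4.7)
The plan is to prove \cref{lem:detecting-globally-conflicting} by a simple ``sample vertices + sample edges, then brute-force all pairs'' scheme, with the nontrivial work living in a combinatorial \emph{extension claim}. Concretely, the algorithm draws a set $V_R\subseteq V$ of $s_V=\Theta(\sqrt{n/\ell}/\eps)$ vertices uniformly at random (independently, with replacement) and, independently, a multiset $E_R\subseteq E$ of $s_E=\Theta(\sqrt{n/\ell}/\eps)$ edges sampled uniformly at random (via the standard trick of picking a random vertex and a random neighbor index). For each $v\in V_R$ we query its label and the labels of all its $\leq d$ neighbors, which recovers $\num(v)$ and $p(v)$ in $O(1)$ queries; for each $e=\{u,w\}\in E_R$ we query $\num(u),\num(w)$. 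This preprocessing costs $O(s_V+s_E)=O(\sqrt{n/\ell}/\eps)$ oracle queries. Using only cached labels, we then iterate through all $s_V\cdot s_E$ pairs $(v,e)\in V_R\times E_R$ and reject whenever $(v,e)$ is a conflicting pair in the sense of \cref{def:conflicting -pair}. When the numbering is a valid DFS numbering, no conflicting pair exists by \cref{lem:dfs-characterization}, so the algorithm accepts deterministically.

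For soundness, assume a matching $M$ of $|M|\geq \eps n/10$ global conflicts. The central combinatorial ingredient is an \emph{extension claim}: for every $(v,\{u,w\})\in M$, at least one of the ``gaps'' $u-p(v)$, $v-u$, $w-v$ exceeds $\ell$, and we will show that such a long gap produces $\Omega(\ell)$ further conflicting pairs that can be ``charged'' to $(v,\{u,w\})$. Roughly, if $w-v>\ell$ or $v-u>\ell$, we vary the vertex $v$ inside the long interval and use the fact that, for a valid DFS, the would-be subtree of $u$ in the $p$-tree has to be a contiguous interval of labels --- so a subtree violation at $v$ forces many more violations; if instead $u-p(v)>\ell$, we vary the edge, using that any edge from $(p(v),u)$ to the right of $v$ yields a conflict with $v$. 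A careful charging (keyed to the matched vertex or matched edge, and using bounded degree so individual structural tokens are not reused) ensures the $\Omega(\ell)$ extensions for different matched pairs are essentially disjoint, yielding a total of $K=\Omega(\eps n\ell)$ conflicting pairs in $\mathcal C$.

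Once $K=\Omega(\eps n\ell)$ is in hand, the sampling analysis is routine. Among $nm$ candidate (vertex, edge) pairs the expected number of conflicting pairs detected by the algorithm is
\[
  \mathbb{E}[X]=\frac{s_V\,s_E\,K}{n\,m}=\Omega\!\left(\frac{(n/(\ell\eps^2))\cdot \eps n\ell}{n\cdot n}\right)=\Omega(1/\eps)=\Omega(1),
\]
since $m=O(n)$ by bounded degree. Choosing the hidden constant in $s_V, s_E$ sufficiently large makes $\mathbb{E}[X]$ as large a constant as needed; because different sampled pairs are approximately independent (as $s_V\ll n$ and $s_E\ll m$), a second-moment / Chebyshev argument converts this into $\Pr[X\geq 1]\geq 2/3$, which is the desired rejection probability.

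The main obstacle is the extension claim in paragraph two. The naive attempt — ``a long edge $\{u,w\}$ with $w-u>\ell$ conflicts with many $v$ inside $(u,w)$'' — fails in the worst case: the interior of $(u,w)$ could be a near-subtree of $u$ in the $p$-tree, in which case only the single matched $v$ conflicts with that edge. The argument therefore has to combine vertex-side and edge-side extensions and design the charging so that the extensions for different matched pairs do not collide. Getting this charging right, while exploiting both the definition of $p(v)$ as the \emph{largest} in-neighbour and the degree bound (so that each witness can be ``paid for'' only a bounded number of times), is where the technical weight of the lemma lies.
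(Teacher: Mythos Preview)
Your algorithm and the overall second-moment framework coincide with the paper's, but the core combinatorial step --- your ``extension claim'' --- is different from what the paper does and is left essentially unproved. You yourself note that the na\"ive extension fails; the fix you sketch (mixing vertex-side and edge-side extensions with a charging scheme) remains an intention, and it is not clear that a single global conflict $(v,\{u,w\})$ can always be blown up to $\Omega(\ell)$ further conflicting pairs while keeping the extensions for different matched pairs essentially disjoint. A second, independent gap is your second-moment step: ``different sampled pairs are approximately independent (as $s_V\ll n$ and $s_E\ll m$)'' is not a valid justification. Pairs sharing a sampled vertex or a sampled edge are \emph{not} independent, and whether their covariance is small depends on how the conflicting pairs are distributed across $V$ and $E$, not on the sample sizes. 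Without control on $\sum_v \deg_{\mathcal C}(v)^2$ and $\sum_e \deg_{\mathcal C}(e)^2$, a large first moment does not by itself yield $\Pr[X\ge 1]\ge 2/3$. (A side remark: you read type~(G) as ``\emph{some} gap exceeds $\ell$''. In fact (G) is defined as ``none of (L1), (L2), (L3)'', i.e.\ \emph{all three} gaps exceed $\ell$ (with the $p(v)=0$ exception); the paper's proof uses this.)

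The paper's argument sidesteps per-pair extension entirely. It partitions $M$ into strips $M_j$ by the value $\lceil u/\ell\rceil$ and, within each strip, splits the matched vertices at their median $\bar v_j$ into $V_j^-$ (at most the median) and lets $E_j^+$ be the edges matched to vertices at least the median, with $|V_j^-|=|E_j^+|=m_j$. The key one-line observation is \emph{composability}: for any $v_1\in V_j^-$ and any $\{u_2,w_2\}\in E_j^+$, using $|u_1-u_2|<\ell$ together with the three large gaps of type~(G) gives $p(v_1)<u_2<v_1\le \bar v_j\le v_2<w_2$, so $(v_1,\{u_2,w_2\})$ is itself a conflicting pair. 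Thus each strip contributes a full \emph{biclique} $V_j^-\times E_j^+$ of $m_j^2$ conflicts. Cauchy--Schwarz then gives $\sum_j m_j^2 \ge (\sum_j m_j)^2\,\ell/n = \Omega(\eps^2 n\ell)$, which makes $\mathbb E[X]=\Omega(1)$ with $s=\Theta(\sqrt{n/\ell}/\eps)$; and the variance is controlled because $m_j\le d\ell$ (at most $d\ell$ edges have their smaller endpoint in a length-$\ell$ window), hence $\sum_j m_j^3\le d\ell\sum_j m_j^2$, which is exactly the bound needed on the cross terms $\mathbb E[X_{i,r}X_{i,r'}]$ and $\mathbb E[X_{i,r}X_{i',r}]$. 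This strip/median/biclique structure simultaneously delivers the first-moment lower bound and the variance control that your proposal leaves open.
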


\begin{proof}
Let us partition matching $M$ into strips $M_1 \oplus M_2 \oplus \dots \oplus M_{\lceil \frac{n}{\ell} \rceil}$ according to the number of $u$, so that $M_j = \{(v,\{u,w\}) \in M: \lceil u/\ell\rceil = j\}$. Let us define sets $V_j := \{v \in V: \exists_{e \in E} \ (v,e) \in M_j\}$ and $E_j := \{e \in E: \exists_{v \in V} (v,e) \in M_j\}$. Let $\bar{v}_j$ be the median of $V_j$. Let $V_j^-$ be those vertices from $V_j$ that are at most $\bar{v}_j$ and $E_j^+$ those edges from $E_j$ matched to a vertex that is at least $\bar{v}_j$. Let $m_j := |V_j^-|$ and note that $|E_j^+| = m_j$ and that $m_j \ge \frac{|M_j|}{2}$.

We claim that \emph{every pair} in $V_j^- \times E_j^+$ is a conflicting pair. Indeed, let $v_1 \in V_j^-$, $\{u_2,w_2\} \in E_j^+$ with $u_2 < w_2$, and let $p_1 = (v_1, \{u_1,w_1\})$ and $p_2 = (v_2, \{u_2,w_2\})$ be the corresponding pairs in $M_j$. Then we observe the following:
\begin{align*}
    \def\reason#1#2{\stackrel{\text{(#1)}}{#2}}
    p(v_1)
    \reason{1}{\le} \max\{0,u_1-\ell\}
    \reason{2}{<} u_2
    \reason{2}{<} u_1+\ell
    \reason{1}{<} v_1
    \reason{3}{\le} \bar{v}_j
    \reason{3}{\le} v_2
    \reason{4}{<} w_2
    \enspace.
\end{align*}
To see this, we notice the following:
\begin{enumerate}[(1)]
\item $p_1$ is of type (G). Hence $u_1-p(v_1) > \ell$, unless $p(v_1) = 0$; moreover $v_1-u_1 > \ell$;
\item since both $p_1$ and $p_2$ are in $M_j$, we have $|u_1 - u_2| < \ell$;
\item by the definition of $V_j^-$ and $E_j^+$ the values of $v_1$ and $v_2$ fall on the respective side of the median $\bar{v_j}$;
\item $p_2$ is a conflicting pair.
\end{enumerate}

In view of the above, since every pair in $V_j^- \times E_j^+$ is a conflicting pair, we observe that over all $j$ we do not just have $m = |M|$ conflicting pairs to work with, but a collection of bicliques with $\sum_{i = 1}^{\lceil n/\ell\rceil} (m_j)^2$ conflicting pairs in total. If we happen to get hold of a $v \in V_j^-$ and an $e \in E_j^+$ for the same $j$, then we have a conflicting pair $(v,e)$.

In order to find a conflicting pair using the approach above, let us sample
(i.u.r.) $s$~vertices from $V$ and then (i.u.r.) $s$ edges from $E$, where $s$ will be set up momentarily.

For any $1 \le i, r \le s$, let $X_{i,r}$ to be the indicator random variable that the $i$th sampled element from $V$ and the $r$th sampled element from $E$ are (respectively) from the sets $V_j^-$ and $E_j^+$ for the same $j$. Observe that if we define random variable $X$ as
\junk{
\begin{align*}
    X &= \sum_{i=1}^s\sum_{r=1}^s \Ex{X_{i,r}}
    \enspace,
\end{align*}
}
$X = \sum_{i=1}^s\sum_{r=1}^s \Ex{X_{i,r}}$,
then by the arguments above, if $X$ is positive then we have detected a conflicting pair. Using the second moment method, we can prove the following lemma.

\begin{lemma}
\label{lemma:bound-for-prob}
Let $s \ge c \sqrt{\frac{n^3}{\ell \cdot |M|^2}}$ for a sufficiency large constant $c$ and let $s = \omega((d/\eps)^3)$. Then $\Pr[X>0] \ge 0.99$.
\end{lemma}

\begin{proof}
The proof is a rather straightforward, though also rather lengthly application of the second moment method.

We will use the second moment method
\begin{align}
\label{ineq:final-analysis-2nd-moment}
    \Pr[X > 0] &\ge \frac{\Ex{X}^2}{\Ex{X^2}}
    \enspace.
\end{align}
We first give a lower bound (\ref{ineq:final-analysis-EX}) for $\Ex{X}$ and then an upper bound (\ref{ineq:final-analysis-very-key-bound}) for $\Ex{X^2}$ showing that $\Ex{X^2} \le \Ex{X} + (1+o(1)) \cdot \Ex{X}^2$. By (\ref{ineq:final-analysis-2nd-moment}), combining these two results will yield \cref{lemma:bound-for-prob}.

\subparagraph*{Bounding $\Ex{X}$.}
We begin with the study of $\Ex{X}$. Observe that
\begin{align*}
    \Ex{X_{i,r}} &=
    \sum_{j=1}^{\lceil n/\ell \rceil} \frac{|V_j^-| \cdot |E_j^+|}{n \cdot |E|} =
    \frac{1}{n \cdot |E|} \sum_{j=1}^{\lceil n/\ell \rceil} (m_j)^2
    \enspace,
\end{align*}
and therefore
\begin{align*}
    \Ex{X} &=
    \Ex{\sum_{i=1}^s\sum_{r=1}^s X_{i,r}} =
    \sum_{i=1}^s\sum_{r=1}^s \Ex{X_{i,r}} =
    \frac{s^2}{n \cdot |E|} \sum_{j=1}^{\lceil n/\ell \rceil} (m_j)^2
    \enspace.
\end{align*}
Next, by Cauchy-Schwarz inequality, 
if we set $m := \sum_{j=1}^{\lceil n/\ell \rceil} m_j$, we can bound this as follows:
\begin{align}
\label{ineq:final-analysis-EX}
    \Ex{X} &=
    \frac{s^2}{n \cdot |E|} \sum_{j=1}^{\lceil n/\ell \rceil} (m_j)^2
        \ge
    \frac{s^2}{n \cdot |E| \cdot \lceil n/\ell \rceil} \left(\sum_{j=1}^{\lceil n/\ell \rceil} m_j \right)^2
        \ge
    \frac{s^2 \cdot m^2}{n \cdot |E| \cdot \lceil n/\ell \rceil}
    \enspace.
\end{align}

\subparagraph*{Bounding $\Ex{X^2}$.}
Next, we will bound $\Ex{X^2}$. We have the following,
\begin{align}
\notag
    \Ex{X^2} &=
    \sum_{\substack{1 \le i, i' \le s\\1\le r, r' \le s}} \Ex{X_{i,r} X_{i',r'}}
        \\
\notag
        & =
    \sum_{\substack{1 \le i \le s\\1\le r \le s}} \Ex{X_{i,r} X_{i,r}} +
    \!\!\!\!\!\sum_{\substack{1 \le i \ne i' \le s\\1\le r \le s}} \!\!\!\!\!\Ex{X_{i,r} X_{i',r}} +
    \!\!\!\!\!\sum_{\substack{1 \le i \le s\\1\le r \ne r' \le s}} \!\!\!\!\!\Ex{X_{i,r} X_{i,r'}} +
    \!\!\!\!\!\sum_{\substack{1 \le i \ne i' \le s\\1\le r \ne r' \le s}} \!\!\!\!\!\Ex{X_{i,r} X_{i',r'}}
        \\
\notag
        &=
    \Ex{X} +
    \sum_{\substack{1 \le i \ne i' \le s\\1\le r \le s}} \Ex{X_{i,r} X_{i',r}} +
    \sum_{\substack{1 \le i \le s\\1\le r \ne r' \le s}} \Ex{X_{i,r} X_{i,r'}} +
    \sum_{\substack{1 \le i \ne i' \le s\\1\le r \ne r' \le s}} \Ex{X_{i,r}} \Ex{X_{i',r'}}
        \\
        &\le
    \Ex{X} +
    \sum_{\substack{1 \le i \ne i' \le s\\1\le r \le s}} \Ex{X_{i,r} X_{i',r}} +
    \sum_{\substack{1 \le i \le s\\1\le r \ne r' \le s}} \Ex{X_{i,r} X_{i,r'}} +
    \Ex{X}^2
\label{ineq:final-analysis-EX2}
    \enspace.
\end{align}

Let $v_i$ denote the $i$th sampled vertex and $e_r$ the $r$th sampled edges. To bound the second and the third terms in the summation, we observe that
\begin{align*}
    \Ex{X_{i,r} X_{i',r}} &=
    \Pr[\exists _{1 \le k \le \lceil n/\ell \rceil} v_i, v_{i'} \in V_k^-, e_r \in E_k^+] =
    \sum_{k=1}^{\lceil n/\ell \rceil} \frac{|V_k^-|^2 \cdot |E_k^+|}{n^2 \cdot |E|} =
    \sum_{k=1}^{\lceil n/\ell \rceil} \frac{(m_k)^3}{n^2 \cdot |E|}
    \enspace.
        \\
    \Ex{X_{i,r} X_{i,r'}} &=
    \Pr[\exists _{1 \le k \le \lceil n/\ell \rceil} v_i \in V_k^-, e_r, e_{r'} \in E_k^+] =
    \sum_{k=1}^{\lceil n/\ell \rceil} \frac{|V_k^-| \cdot |E_k^+|^2}{n \cdot |E|^2} =
    \sum_{k=1}^{\lceil n/\ell \rceil} \frac{(m_k)^3}{n \cdot |E|^2}
    \enspace.
\end{align*}
\junk{
This immediately gives the following,
\begin{align*}
    \Ex{X^2} &\le
    \Ex{X} +
    \frac{s^3}{n^2 \cdot |E|} \sum_{k=1}^{\lceil n/\ell \rceil} (m_k)^3 +
    \frac{s^3}{n \cdot |E|^2} \sum_{k=1}^{\lceil n/\ell \rceil} (m_k)^3 +
    \Ex{X}^2
    \enspace.
\end{align*}
}
Further, observe that since\footnote{To see that, observe that our definition of sets $M_j$ ensures that every edge in $M_j$ is of the form $\{u,w\}$ with $(j-1) \ell < u \le j \ell$ and each vertex $u$ is of degree at most $d$.} $0 \le m_k \le d\ell$ for each $m_k$, and since $\sum_{k=1}^{\lceil n/\ell \rceil} m_k = m$, we have
\begin{align*}
    \sum_{k=1}^{\lceil n/\ell \rceil} (m_k)^3 &\le
    \sum_{k=1}^{\lceil m/d\ell \rceil} (m_k)^3 \le
    \lceil m/d\ell \rceil \cdot (d\ell)^3 \le
    2 m d^2 \ell^2
    \enspace.
\end{align*}
Therefore, if we plug these bounds in our upper bound of $\Ex{X^2}$ in (\ref{ineq:final-analysis-EX2}) then we obtain
\begin{align}
    \Ex{X^2} &\le
    \Ex{X} +
    \frac{2 m d^2 \ell^2 s^3}{n \cdot |E|} \cdot \left(\frac1n + \frac1{|E|}\right) +
    \Ex{X}^2
\label{ineq:final-analysis-key-bound}
    \enspace.
\end{align}
We claim that for $s = \omega((d/\eps)^3)$, the middle term is $o(\Ex{X}^2)$. Observe the following,

\begin{align*}
    \frac{2 m d^2 \ell^2 s^3}{n \cdot |E|} \cdot \left(\frac1n + \frac1{|E|}\right) &=
    \frac{2 m d^2 \ell^2 s^3 \cdot (n+|E|)}{n^2 \cdot |E|^2} \le
    \frac{2 m d^2 \ell^2 s^3 \cdot (dn)}{n^2 \cdot |E|^2} =
    \frac{4 s^4 m^4 \ell^2}{n^4 |E|^2} \cdot \frac{d^3 n^3}{2 s m^3}
    \enspace.
\end{align*}
\junk{
Observe that $m \ge |M|/2 \ge \eps n / 20$, and hence, if for some $\alpha$ we have $s \ge \frac{4 \cdot 10^3 (d/\eps)^3}{\alpha}$, then
\begin{align*}
    \frac{d^3 n^3}{2 s m^3} &\le
    \frac{d^3 n^3}{2 \frac{4 \cdot 10^3 (d/\eps)^3}{\alpha} (\eps n/20)^3} =
    \alpha
    \enspace.
\end{align*}
}
Therefore, if $s = \omega((d/\eps)^3)$, then, by (\ref{ineq:final-analysis-EX}), we obtain
\begin{align*}
    \frac{2 m d^2 \ell^2 s^3}{n \cdot |E|} \cdot \left(\frac1n + \frac1{|E|}\right) &\le
    \frac{4 s^4 m^4 \ell^2}{n^4 |E|^2} \cdot \frac{d^3 n^3}{2 s m^3} =
    o(\Ex{X}^2)
    \enspace.
\end{align*}
This combined with (\ref{ineq:final-analysis-key-bound}) immediately implies that for $s = \omega((d/\eps)^3)$, it holds,
\begin{align}
    \Ex{X^2} &\le
    \Ex{X} +
    \frac{2 m d^2 \ell^2 s^3}{n \cdot |E|} \cdot \left(\frac1n + \frac1{|E|}\right) +
    \Ex{X}^2
        \le
    \Ex{X} +
    (1+o(1)) \cdot \Ex{X}^2
\label{ineq:final-analysis-very-key-bound}
    \enspace.
\end{align}

To conclude the proof of \cref{lemma:bound-for-prob}, we take $s = \omega((d/\eps)^3)$ and put (\ref{ineq:final-analysis-very-key-bound}) in (\ref{ineq:final-analysis-2nd-moment}), to obtain
\begin{align*}
    \Pr[X > 0] &\ge
    \frac{\Ex{X}^2}{\Ex{X^2}} \ge
    \frac{\Ex{X}^2}{\Ex{X} + (1+o(1)) \cdot \Ex{X}^2} =
    \frac{1}{1 + o(1) + 1/\Ex{X}}
    \enspace.
\end{align*}
Therefore, if $\Ex{X} \ge 100$, then $\Pr[X > 0] \ge 0.99$.
\junk{
\begin{align*}
    \frac{1}{1 + o(1) + \frac{1}{100}} &\ge
    \frac{1}{1 + \frac{1}{99}} =
    \frac{99}{100}
    \enspace.
\end{align*}
}

Now, by (\ref{ineq:final-analysis-EX}), we have the following,
\begin{align*}
    \Ex{X} &\ge
    \frac{s^2 \cdot m^2}{n \cdot |E| \cdot \lceil n/\ell \rceil}
        \ge
    \frac{s^2 \cdot |M|^2}{4 \cdot n \cdot |E| \cdot \lceil n/\ell \rceil}
        \ge
    \frac{s^2 \cdot |M|^2}{4 \cdot n \cdot (dn/2) \cdot (n/\ell)}
        =
    \frac{s^2 \cdot \ell \cdot |M|^2}{2 \cdot d \cdot n^3}
    \enspace.
\end{align*}
\junk{
\begin{align*}
    \Ex{X} &\ge
    \frac{s^2 \cdot m^2}{n \cdot |E| \cdot \lceil n/\ell \rceil}
        \ge
    \frac{s^2 \cdot |M|^2}{4 \cdot n \cdot |E| \cdot \lceil n/\ell \rceil}
        \ge
    \frac{s^2 \cdot (\eps n / 10)^2}{4 \cdot n \cdot (dn/2) \cdot (n/\ell)}
        =
    \frac{s^2 \cdot \eps^2}{200 \cdot d \cdot n^{2/3}}
    \enspace.
\end{align*}
}
Therefore, if $s \ge \sqrt{\frac{200 \cdot d \cdot n^3}{\ell \cdot |M|^2}}$ and using the fact that $s = \omega((d/\eps)^3)$, then $\Ex{X} \ge 100$, completing the proof of \cref{lemma:bound-for-prob}.
\end{proof}

Now, we are ready to complete the proof of \cref{lem:detecting-globally-conflicting}. If we sample i.u.r. at least $c \cdot \sqrt{\frac{n^3}{\ell |M|^2}}$ vertices from $V$ and at least $c \cdot \sqrt{\frac{n^3}{\ell |M|^2}}$ edges from $E$ for a sufficiently large constant $c$, then by \cref{lemma:bound-for-prob}, with a constant probability we will detect a conflicting vertex. At the same time, if the DFS numbering is valid then the algorithm will accept it. This yields \cref{lem:detecting-globally-conflicting} since in our setting $|M| = \Omega(\eps n)$ and $\ell = n^{1/3}$.
\end{proof}


\subsection{Putting all together: The proof of Theorem \ref{thm:DFS-ub}}
\label{subsec:proof-of-thm:DFS-ub}

Now we are ready to complete the proof of \cref{thm:DFS-ub}.
Our algorithm runs the algorithms from \cref{lem:testing-local-conflicts} (responsible or conflicts of type \L1, \L2 and \L3) and from \cref{lem:detecting-globally-conflicting} one after the other. If any of them rejects $G$ then we reject $G$, otherwise we accept $G$.
The \emph{expected} total running time is $O(\ell /\eps +\sqrt{n/\ell }/\eps )$, which
with $\ell  = \Theta(n^{1/3})$ gives $O(n^{1/3}/\eps )$ as claimed. (The query complexity can be made $O(n^{1/3}/\eps )$ using Markov inequality.)

Concerning correctness, it is clear that instances with valid DFS numberings are always accepted. If $G$ is $\eps $-far from a valid DFS numbering then by \cref{lem:dfs-characterization-epsilon-far} there is a matching $M \subseteq  V \subseteq  E$ of $|M| \ge  \eps n/5$ conflicting pairs. Since each matching edge falls into (at least) one type, at least one of the following statements holds.
\begin{itemize}
    \item There is a matching $M_{\L1}$ of $|M_{\L1}|\ge  \eps n/30$ conflicting pairs of type \L1.
    \item There is a matching $M_{\L2}$ of $|M_{\L2}|\ge  \eps n/30$ conflicting pairs of type \L2.
    \item There is a matching $M_{\L3}$ of $|M_{\L3}|\ge  \eps n/30$ conflicting pairs of type \L3.
    \item There is a matching $M_{\G}$ of $|M_{\G}|\ge  \eps n/10$ conflicting pairs of type \G.
\end{itemize}
In each case, the corresponding algorithm rejects $G$ with probability $2/3$ so overall we reject $G$ with probability at least $2/3$.
\qed


\subsection{Achieving running time of \texorpdfstring{$O(n^{1/3}/\eps)$}{Omega(n**(1/3)/\eps)} in Theorem~\ref{thm:DFS-ub}}
\label{sec:sweep-line-decider}

The following lemma implies immediately that our main algorithm from \cref{thm:DFS-ub} does not merely have a \emph{query complexity} of $O(n^{1/3}/\eps)$, but also a \emph{running time} of $O(n^{1/3}/\eps)$.

\begin{lemma}
There is an algorithm that, given $V' \subseteq  V$ and $E' \subseteq  E$ returns
\begin{itemize}
\item returns \textsc{conflict} if there exist $v_1,v_2 \in  V'$ such that $(v_1,\{p(v_2),v_2\})$ is a conflicting pair,
\item returns \textsc{conflict} if there is a conflicting pair in $V' \subseteq  E'$,
\item returns \textsc{no conflict} otherwise.
\end{itemize}
The algorithm runs in time $\mathrm{sort}(O(|V'|+|E'|))$ where $\mathrm{sort}(k)$ is the time needed to sort $k$ elements of $[n]$.
\end{lemma}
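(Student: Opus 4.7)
The plan is to reduce both detection tasks to a single sweep-line problem over intervals and then use standard sorting-plus-BST machinery.

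First, I would pre-compute $p(v)$ for every $v \in V'$ (an $O(d|V'|)$ bookkeeping cost, where $d$ is constant) and form the auxiliary edge set $E_1 := \{\{p(v),v\} : v \in V',\ p(v) \ne 0\}$, then set $E'' := E' \cup E_1$. A pair of the first bullet type corresponds exactly to a conflicting pair in $V' \times E_1$, while a pair of the second type is a conflicting pair in $V' \times E'$, so the whole lemma reduces to deciding whether there exist $v \in V'$ and $\{u,w\} \in E''$ with $u < w$ satisfying $p(v) < u < v < w$.

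Next, I would run a left-to-right sweep. For each $\{u,w\} \in E''$ emit a START event at position $u$ and an END event at position $w$, both labeled with key $u$; for each $v \in V'$ emit a QUERY event at position $v$ labeled with key $p(v)$. Sort all events by position, breaking ties in the order END $<$ QUERY $<$ START so that an interval $[u,w]$ is ``active'' at time $t = v$ precisely when $u < v < w$ (enforcing the strict inequalities required by \cref{def:conflicting -pair}). Maintain a balanced BST $A$ storing the $u$-keys of active edges: a START inserts its key, an END deletes its key, and a QUERY at $v$ reports \textsc{conflict} if $\max(A) > p(v)$ and otherwise does nothing. If no QUERY triggers, output \textsc{no conflict}.

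Correctness follows from the sweep invariant: at a QUERY event for $v$, the tree $A$ holds exactly $\{u : \{u,w\} \in E'',\ u < v < w\}$, so $\max(A) > p(v)$ iff some $\{u,w\} \in E''$ satisfies $p(v) < u < v < w$, which is precisely the needed existential. The running time is the sort of the $O(|V'|+|E'|)$ events plus $O(\log(|V'|+|E'|))$ per event for the BST operations, i.e.\ $\mathrm{sort}(O(|V'|+|E'|))$ as claimed.

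The main (mild) obstacle is getting the tie-breaking right at coincident coordinates so that the strict inequalities $u < v$ and $v < w$ are enforced; the END $<$ QUERY $<$ START ordering above handles this cleanly and is the only delicate point of the argument.
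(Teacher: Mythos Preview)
Your proposal is correct, but it organises the sweep differently from the paper's proof. You first fold the vertex--vertex check into the vertex--edge check by adding the auxiliary edges $E_1=\{\{p(v),v\}\}$, and then sweep over vertex positions while maintaining the \emph{edges} that are currently ``open'' in a balanced BST keyed by their left endpoint; a query at $v$ asks whether $\max(A)>p(v)$. The paper instead sweeps over interval endpoints while maintaining the \emph{vertices} whose interval $[p(v),v]$ is currently open, and checks for a conflict at each \emph{start} event. The crucial observation there is that, as long as no conflict has been reported, the active vertex intervals are nested, so $L$ behaves like a stack: insertions, deletions, and the $\min L$ needed for the test are all $O(1)$ per event.

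What each approach buys: your reduction via $E_1$ is pleasantly uniform and makes the correctness argument a one-liner, at the cost of a generic BST and $O(\log(|V'|+|E'|))$ per event. The paper's nesting trick avoids any auxiliary data structure beyond a stack and gets genuinely constant work per event after sorting; this is why its running time is exactly $\mathrm{sort}(O(|V'|+|E'|))$ even if $\mathrm{sort}(k)$ is taken to be sub-$k\log k$ (e.g., integer sorting of keys in $[n]$). Under comparison sorting your bound matches; under faster integer sorting your BST term could in principle dominate, so if you want the lemma to hold for an arbitrary notion of $\mathrm{sort}$, you would need to replace the BST by the paper's stack argument or by a monotone structure of your own.
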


\begin{proof}
In a manner resembling sweepline algorithms, we intend to “sweep” through the range $[0,n]$ from left to right handling relevant events in sorted order.

For each $v \in  V'$ we consider a \emph{vertex-interval} $[p(v),v] \subseteq  \mathbb{Z}$ and for each $\{u<w\} \in  E'$ an \emph{edge-interval} $[u,w] \subseteq  \mathbb{Z}$. For each such interval $[x,y]$ there is a corresponding start-event at $x$ and an end-event $y$, which we sort in increasing order.\footnote{For the ensuing approach, ties need to be broken as follows: End events come before start events. Two end events or two start events are sorted in \emph{decreasing} order of their sibling events.}

While iterating over the events we maintain a sorted list $L$ of \emph{active} $v \in  V'$ for which the start event but not end event of $[p(v),v] $has been processed. Processing the end event of a vertex interval $[p(v),v]$ (at time $v$) just requires removing $\min L$ from $L$ (in $O(1)$ time). When processing the start event of $[p(v),v]$ (at time $p(v)$) we check if $v < \min L$. If so we can add $v$ as the new minimum of $L$ (in $O(1)$ time). Otherwise we have $p(\min(L)) < p(v) < \min(L) < v$ and can return \textsc{conflict}. Likewise, when processing the start event of an edge interval $[u,w]$ we check if $w > \min(L)$ and if so can return \textsc{conflict} due to $p(\min(L)) < u < \min(L) < w$. Otherwise we know that any $v \in  V'$ with $p(v) < u < v$ must be active and therefore satisfy $w \le  \min(L) \le  v$ so $\{u,w\} \in  E'$ is not in conflict with any $v \in  V'$. For end events of edge-intervals no action in required. If all events are processed and no \textsc{conflict} is reported, we can correctly report \textsc{no conflict}.

It is easy to see that the running time of our algorithm is dominated by sorting.
\end{proof}




\section{Conclusions}
\label{section:conclusions}

In this paper we introduced a variant of the standard bounded-degree graph model in the property testing setting that works for labeled graphs and allows also label queries. We demonstrated the strength of the model on our new study of DFS numbering. Our main technical contribution is a tight analysis for detecting whether the input labeled graph is properly DFS numbered or it is \eps-far from having a valid DFS numbering. We demonstrated that this task can be solved with $\Omega(n^{1/3}/\eps)$ queries and also $\Omega(n^{1/3})$ queries are necessary.

We observe that while our analysis is presented for undirected graphs, similar arguments hold also for \textbf{\emph{directed graphs}}. The lower bound from \cref{thm:DFS-lb} trivially extends to directed graphs and a careful pass through the algorithm in \cref{thm:DFS-ub} shows that the analysis can be extended accordingly. However, to implement this algorithm efficiently in our setting, we need to allow access to incoming and outgoing edges. (See Appendix~\ref{section:directed-DFS} for more details.)

Our analysis can be also extended (but only for undirected graphs) to the \textbf{\emph{DFS finishing numbers}} (FIN-numberings \cite{CLRS22}). 
In Appendix~\ref{sec:FIN-numbering}, we show that (for undirected graphs) a numbering $\num : V \rightarrow [n]$ is a valid FIN numbering iff the reverse numbering $\mun$ with $\mun(i) = n+1-i$ is a valid DFS numbering. This immediately implies that our results for testing valid DFS numberings extend to testing valid FIN numberings in undirected~graphs.



\bibliographystyle{alpha}
\bibliography{Czumaj-Sohler-Walzer-ESA-2025-references}

\newcommand{\etalchar}[1]{$^{#1}$}
\begin{thebibliography}{HKNO09}

\bibitem[AFNS09]{AFNS09}
Noga Alon, Eldar Fischer, Ilan Newman, and Asaf Shapira.
\newblock A combinatorial characterization of the testable graph properties:
  It's all about regularity.
\newblock {\em SIAM Journal on Computing}, 39(1):143--167, 2009.

\bibitem[AKP24]{AKP21}
Isolde Adler, Noleen K{\"{o}}hler, and Pan Peng.
\newblock On testability of first-order properties in bounded-degree graphs and
  connections to proximity-oblivious testing.
\newblock {\em SIAM Journal on Computing}, 53(4):825--883, 2024.

\bibitem[BR02]{BR02}
Michael~A. Bender and Dana Ron.
\newblock Testing properties of directed graphs: Acyclicity and connectivity.
\newblock {\em Random Structures \& Algorithms}, 20(2):184--205, 2002.

\bibitem[BSS10]{BSS10}
Itai Benjamini, Oded Schramm, and Asaf Shapira.
\newblock Every minor-closed property of sparse graphs is testable.
\newblock {\em Advances in Mathematics}, 223(6):2200--2218, 2010.

\bibitem[BY22]{BY22}
Arnab Bhattacharyya and Yuichi Yoshida.
\newblock {\em Property Testing --- Problems and Techniques}.
\newblock Springer Verlag, 2022.

\bibitem[CGR{\etalchar{+}}14]{CGRSSS14}
Artur Czumaj, Oded Goldreich, Dana Ron, C.~Seshadhri, Asaf Shapira, and
  Christian Sohler.
\newblock Finding cycles and trees in sublinear time.
\newblock {\em Random Structures \& Algorithms}, 45(2):139--184, 2014.

\bibitem[CLRS22]{CLRS22}
Thomas~H. Cormen, Charles~E. Leiserson, Ronald~L. Rivest, and Clifford Stein.
\newblock {\em Introduction to Algorithms, 4th Edition}.
\newblock {MIT} Press and McGraw-Hill, 2022.

\bibitem[CPS16]{CPS16}
Artur Czumaj, Pan Peng, and Christian Sohler.
\newblock Relating two property testing models for bounded degree directed
  graphs.
\newblock In {\em Proceedings of the 48th Annual ACM Symposium on Theory of
  Computing (STOC)}, pages 1033--1045, 2016.

\bibitem[CSS09]{CSS09}
Artur Czumaj, Asaf Shapira, and Christian Sohler.
\newblock Testing hereditary properties of nonexpanding bounded-degree graphs.
\newblock {\em SIAM Journal on Computing}, 38(6):2499--2510, 2009.

\bibitem[FPS19]{FPS19}
Hendrik Fichtenberger, Pan Peng, and Christian Sohler.
\newblock Every testable (infinite) property of bounded-degree graphs contains
  an infinite hyperfinite subproperty.
\newblock In {\em Proceedings of the 30th Annual {ACM-SIAM} Symposium on
  Discrete Algorithms (SODA)}, pages 714--726, 2019.

\bibitem[GGR98]{GGR98}
Oded Goldreich, Shafi Goldwasser, and Dana Ron.
\newblock Property testing and its connection to learning and approximation.
\newblock {\em Journal of the ACM}, 45(4):653--750, 1998.

\bibitem[Gol17]{Goldreich17}
Oded Goldreich.
\newblock {\em Introduction to Property Testing}.
\newblock Cambridge University Press, 2017.

\bibitem[GR98]{GR98}
Oded Goldreich and Dana Ron.
\newblock A sublinear bipartiteness tester for bunded degree graphs.
\newblock In {\em Proceedings of the 30th Annual ACM Symposium on Theory of
  Computing (STOC)}, pages 289--298, 1998.

\bibitem[GR02]{GR02}
Oded Goldreich and Dana Ron.
\newblock Property testing in bounded degree graphs.
\newblock {\em Algorithmica}, 32(2):302--343, 2002.

\bibitem[GR11]{GR11}
Oded Goldreich and Dana Ron.
\newblock On proximity-oblivious testing.
\newblock {\em SIAM Journal on Computing}, 40(2):534--566, 2011.

\bibitem[HKNO09]{HKNO09}
Avinatan Hassidim, Jonathan~A. Kelner, Huy~N. Nguyen, and Krzysztof Onak.
\newblock Local graph partitions for approximation and testing.
\newblock In {\em Proceedings of the 50th Annual {IEEE} Symposium on
  Foundations of Computer Science (FOCS)}, pages 22--31, 2009.

\bibitem[HT74]{HT74}
John~E. Hopcroft and Robert~E. Tarjan.
\newblock Efficient planarity testing.
\newblock {\em Journal of the ACM}, 21(4):549--568, 1974.

\bibitem[KT13]{KT13}
Jon~M. Kleinberg and {\'{E}}va Tardos.
\newblock {\em Algorithm Design}.
\newblock Addison-Wesley, 2013.

\bibitem[Luc82]{Lucas1882}
{\'{E}}douard Lucas.
\newblock {\em Récreations Mathématiques}.
\newblock Librairie Albert Banchard, Paris, 1882.

\bibitem[NS13]{NS13}
Ilan Newman and Christian Sohler.
\newblock Every property of hyperfinite graphs is testable.
\newblock {\em SIAM Journal on Computing}, 42(3):1095--1112, 2013.

\bibitem[RRSS09]{RRSS09}
Sofya Raskhodnikova, Dana Ron, Amir Shpilka, and Adam~D. Smith.
\newblock Strong lower bounds for approximating distribution support size and
  the distinct elements problem.
\newblock {\em SIAM Journal on Computing}, 39(3):813--842, 2009.

\bibitem[RS96]{RS96}
Ronitt Rubinfeld and Madhu Sudan.
\newblock Robust characterizations of polynomials with applications to program
  testing.
\newblock {\em SIAM Journal on Computing}, 25(2):252--271, 1996.

\bibitem[Tar95]{Tarry1895}
G.~Tarry.
\newblock Le problème des labyrinthes.
\newblock {\em Nouvelles Annales de Mathématiques, 3e série}, 14:187--190,
  1895.

\bibitem[Tar72]{Tarjan72}
Robert~Endre Tarjan.
\newblock Depth-first search and linear graph algorithms.
\newblock {\em SIAM Journal on Computing}, 1(2):146--160, 1972.

\bibitem[Tar76]{T76}
Robert~Endre Tarjan.
\newblock Edge-disjoint spanning trees and depth-first search.
\newblock {\em Acta Informatica}, 6:171--185, 1976.

\bibitem[YI10]{YI10}
Yuichi Yoshida and Hiro Ito.
\newblock Testing {$k$}-edge-connectivity of digraphs.
\newblock {\em Journal of Systems Science and Complexity}, 23(1):91--101, 2010.

\end{thebibliography}

\addcontentsline{toc}{section}{References}


\appendix
\section*{\huge Appendix}
\addcontentsline{toc}{section}{Appendix}


\junk
{

\section{Further thoughts about the model}
\label{subsec:discussion-about-model}

In this section, we briefly discuss several possible natural modifications to our model used for testing DFS numbering.

\subsection{Further thoughts about the model: Modifying the labels}
\label{subsubsec:discussion-about-model-editing-labels}
Observe that in general, it might be natural to consider a revised definition of a labeling $\num$ being \eps-far from a valid DFS numbering, where while defining graph $G'$ in
\cref{def:eps-far-DFS-numbering-max-deg}, in addition to edge modifications, one would allow also for modifications of the labels\footnote{We use the following revised definition: a labeling \emph{$\num$ is \eps-far from a valid DFS numbering} of $G$ if for any graph $G' = (V,E')$ of maximum degree at most $d$ with a valid DFS numbering $\num'$ we have $|E \triangle E'| + |\{ v \in V: \num(v) \ne \num'(v)\}| \ge \eps n$.}. We observe that for bounded degree graphs, adding the labels modifications does not change the problem significantly.

\begin{lemma}
\label{lemma:two-models-editing-labels}
Let $\num$ be a permutation of $V$ to $\{1, \dots, n\}$. For a given bounded degree graph $G = (V,E)$, a labeling $\num$ is \eps-far from a valid DFS numbering according to the edges modifications if and only if $\num$ is $\Theta(\eps)$-far from a valid DFS numbering according to the edges and labels modifications.
\end{lemma}

\begin{proof}
Observe that the number of modifications of the edges to obtain a valid DFS numbering is not smaller than the number of modifications of the edges and the labels to obtain a valid DFS numbering. Therefore, if for a given bounded degree graph $G$, a labeling $\num$ is \eps-far from a valid DFS numbering according to the edges and labels modifications then $\num$ is also \eps-far from a valid DFS numbering according to the edges modifications.

For the other direction, if $\num$ is \eps-far from a valid DFS numbering according to the edges modifications then we can simulate modification of the labels by modification of the edges: to assign a given label to a vertex we just remove all its incident edges and add all edges incident to the vertex with the label sought. (Observe that a vertex might have had some edges of its own that have to be removed but by correcting the labels one by one, we can ensure that once we fix vertex $v$ by assigning it to label $i$ with vertex $u$ having label $i$ before, then later we will have to fix the label of vertex $u$, resolving the issue.)
%
Observe that if we apply this operation to all vertices with the labels to be changed, then we do not increase the maximum degree, and hence, modifying of $k$ labels can be simulated by modifying at most $2dk$ edges. Therefore, if a labeling $\num$ is \eps-far from a valid DFS numbering according to the edges modifications then $\num$ is also $(2d\eps)$-far from a valid DFS numbering according to the edges and labels modifications.
\end{proof}


\subsection{Why should the labels be a permutation?}
\label{subsubsec:discussion-about-model-non-permutations}
Our model assumes that the input labeling $\num$ is a permutation of $V$ to $\{1\,\dots,n\}$, but it may be natural to consider the case when $\num$ is not necessarily a permutation but rather an arbitrary function from $V$ to $\{1\,\dots,n\}$, allowing repetitions of labels. Observe that already the simple problem of testing if $\num$ is a permutation or is \eps-far from being a permutation (also known as the element distinctness problem) is known to require $\Theta(\sqrt{n}/\eps)$ queries (see, e.g., \cite{RRSS09}).
In view of that bound, the best what we could hope for without assuming that $\num$ is a permutation of $V$ to $\{1\,\dots,n\}$ is to achieve query complexity of $\Theta(\sqrt{n}/\eps)$. And this can be easily obtained by combining our algorithm in \cref{thm:DFS-ub} with the known algorithms testing element distinctness, leading to a testing algorithm with $\Theta(\sqrt{n}/\eps)$ queries.


}

{

\section{Directed graphs: Testing DFS numbering}
\label{section:directed-DFS}

As we mentioned in \cref{section:conclusions}, our analysis in \cref{section:DFS-lb,section:DFS-ub} naturally extends to directed graphs.
Firstly, it is easy to see that the lower bound from \cref{thm:DFS-lb} trivially extends to directed graphs.
As for the upper bound, the key feature allowing us to consider directed graphs is that \cref{def:conflicting -pair} of conflicting pairs extends to directed graphs. Then, a careful pass through the algorithm in \cref{thm:DFS-ub} shows that the analysis can be extended accordingly. However, in order to implement it in our setting, for graph traversal, the tester requires access to all neighbors of any given vertex, both the endpoints of outgoing edges incident to a vertex and the endpoints of incoming edges incident to that vertex (in particular, in our analysis of local conflicts in \cref{subsubsec:testing-local-conflicts}).

For any vertex $v \in V$, let $\textsf{out}(v)$ denote the set of the endpoints of outgoing edges incident to $v$, $\textsf{out}(v) := \{u \in V: (v,u) \in E\}$, and $\textsf{in}(v)$ denote the set of the endpoints of incoming edges incident to $v$, $\textsf{in}(v) := \{u \in V: (u,v) \in E\}$. Then DFS numbering for directed graphs is defined by the following algorithm (the labeling is not unique and depends on the ordering in which the vertices are traversed in the for-loops of Algorithms~\ref{alg:DFS-numbering-def} and \ref{alg:DFS-numbering-def-rec}).


\def\dfsVisit{\textsc{dfs-visit}}
\begin{figure}[h]
\centerline{
\begin{minipage}[t]{0.47\textwidth}
\begin{algorithm}[H]
\caption{DFS numbering of $G$}
\label{alg:directed-DFS-numbering-def}
\KwIn{
    directed graph $G = (V,E)$
}
\KwOut{
    $\num: V \rightarrow \{1,\dots,|V|\}$
}
    \BlankLine
    mark all vertices as undiscovered\;
    $t \leftarrow 1$ \tcp{\small smallest unused ID}
    \For(\tcp*[h]{\small arbitrary order}){$v \in  V$}{
        \If{$v$ is undiscovered}{
            \dfsVisit($v$)\;
        }
}
\end{algorithm}
\end{minipage}\qquad
\begin{minipage}[t]{0.53\textwidth}
\begin{algorithm}[H]
\caption{\dfsVisit($v$)}
\label{alg:directed-DFS-numbering-def-rec}
    mark $v$ as discovered\;
    $\num(v) \leftarrow t$\;
    $t \leftarrow t +1$\;
    \For(\tcp*[h]{\small arbitrary order}){$u \in  \textsf{out}(v)$}{
        \If{$v$ is undiscovered}{
            \dfsVisit($v$)\;
        }
    }
\end{algorithm}
\end{minipage}
}
\end{figure}


We consider a setting where the access to the input graph $G=(V,E)$ with vertex labeling $\num$ is through the oracle allowing the following three types of queries:
\begin{itemize}
\item \textbf{In-neighbor queries:} for every vertex $v \in V$ and every $1 \le i \le d$, one can query the $i$th in-neighbor of vertex $v$ (i.e., the $i$th vertex in $\textsf{in}(v)$ with arbitrary but fixed order of vertices in set $\textsf{in}(v)$).
\item \textbf{Out-neighbor queries:} for every vertex $v \in V$ and every $1 \le i \le d$, one can query the $i$th out-neighbor of vertex $v$.
\item \textbf{Label queries:} for every vertex $v \in V$, one can query the label of $v$, $\num(v)$.
\end{itemize}

\begin{remark}
We notice that the requirement to access edges in both directions has been studied in the graph property testing setting (see, e.g., \cite{BR02,CPS16,Goldreich17,YI10}), though also the one-directional model has been considered (and testing in that model is known to be significantly more expensive, see, e.g., \cite{BR02,CPS16}).
\end{remark}

Then the following theorem extends \cref{thm:DFS-ub} to directed graphs.

\begin{theorem}
\label{thm:DFS-dub}
Let $0 < \eps < 1$. There is an algorithm that with oracle access to a labeled bounded degree \textbf{directed} graph $G$ on $n$ vertices, performs $O(n^{1/3}/\eps)$ queries to the oracle and accepts, if $G$ has a valid DFS numbering, and rejects with probability at least $\frac23$, if $G$ is \eps-far from having a valid DFS numbering.
\end{theorem}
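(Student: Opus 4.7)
The plan is to follow the proof of \cref{thm:DFS-ub} step by step, reinterpreting every primitive so that it uses the directed access model. First I would extend the key combinatorial notions to the directed setting: redefine $p(v) := \max\{u \in \textsf{in}(v) \cap [v-1]\}$ (and $p(v) = 0$ if this set is empty), and redefine a conflicting pair as $(v, (u,w)) \in V \times E$ with $p(v) < u < v < w$. The analogue of \cref{claim:dfs-parent-is-p} holds for exactly the same reason: the parent of $v$ in any DFS tree must reach $v$ through an \emph{incoming} edge, and among such incoming edges from smaller-numbered vertices the largest one is the unique active ancestor at the moment $v$ is discovered. The directed analogue of the characterization \cref{lem:dfs-characterization} then follows along the lines sketched in the commented-out proof in the excerpt, using the standard property of directed DFS that once $u$ is finished, every out-neighbor $w$ of $u$ has already been discovered.

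Next I would adapt \cref{claim:how-to-fix}: inserting the directed edge $(v-1, v)$ forces $p(v) = v-1$ and resolves all conflicts at $v$; removing $(u,w)$ and inserting $(w-1, w)$ resolves all conflicts at that edge. The same double-counting via K\H{o}nig's theorem then yields the directed analogue of \cref{lem:dfs-characterization-epsilon-far}: every \eps-far instance admits a matching $M$ of $\Omega(\eps n)$ pairwise unrelated conflicting pairs. The degree-reduction step referenced from \cref{lemma:deg-reduction-DFS} needs only a minor adaptation to handle in-degree and out-degree separately, but the overhead stays linear in $|M|$.

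For the algorithmic side, I would rebuild \cref{fact:navigate-ordered-trees} and \cref{lem:dfs-next} in the directed setting. The would-be DFS tree has edges $(p(v), v)$ oriented from parent to child; navigating to the parent uses an in-neighbor scan at $v$ to recover $p(v)$, while navigating to the children of $v$ uses out-neighbor queries at $v$ plus in-neighbor queries at each candidate $w$ to check whether $p(w) = v$. Since a complete directed DFS traversal still touches each directed edge a constant number of times, the amortized cost per DFS step remains $O(1)$ in expectation. The local-conflict tester of \cref{lem:testing-local-conflicts} then transfers verbatim, and the global-conflict tester of \cref{lem:detecting-globally-conflicting} transfers essentially unchanged, because its analysis only relies on the combinatorial bound $m_k \le d\ell$ per $\ell$-strip (now invoking the bound on in- and out-degree).

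The main obstacle is precisely this navigation step: in the undirected case a single adjacency list at $v$ simultaneously provided ``the parent'' and ``the children,'' whereas in the directed case these two roles are split across in- and out-neighbors, and one must verify that the combined directed traversal still amortizes to $O(1)$ queries per step without spurious blow-ups (in particular, that the check $p(w) = v$ during child-enumeration does not cost more than a constant per traversed edge on average). Once this is in place, choosing $\ell = \Theta(n^{1/3})$ balances the two subroutines at $O(n^{1/3}/\eps)$ queries in expectation, and Markov's inequality converts this to a worst-case $O(n^{1/3}/\eps)$ query bound with the stated one-sided error, completing the proof.
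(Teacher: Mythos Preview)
Your proposal is correct and follows essentially the same route as the paper: redefine $p(v)$ via in-neighbors, redefine conflicting pairs for directed edges, and then observe that \cref{claim:dfs-parent-is-p}, \cref{lem:dfs-characterization}, \cref{claim:how-to-fix}, \cref{lem:dfs-characterization-epsilon-far}, the navigation primitives of \cref{fact:navigate-ordered-trees} and \cref{lem:dfs-next}, and the two conflict testers all carry over once both in- and out-neighbor queries are available. Your one flagged obstacle---verifying that the child-check $p(w)=v$ keeps the amortized traversal cost at $O(1)$---is immediate from the bounded in-degree (each such check costs $O(d)=O(1)$), so nothing further is needed.
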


The proof of \cref{thm:DFS-dub} mimics the proof of \cref{thm:DFS-ub}, and so we will sketch it here. In order to extend our analysis from \cref{section:DFS-ub} to directed graphs, we have to revise the definition of $p(v)$. For a directed implicitly labeled graph $G = (V, E)$ with $V = [n]$, we define $p(v)$ for $v \in V$ as
\begin{align*}
    p(v) :=
    \begin{cases}
        \max\{u \in \textsf{in}(v) \cap  [v-1]\} & \text{ if } \textsf{in}(v) \cap [v-1] \ne \emptyset \enspace,\\
        0 & \text{ otherwise.}
    \end{cases}
\end{align*}
Then, with the same analysis as for undirected graphs, \cref{claim:dfs-parent-is-p} holds for directed graphs and so does \cref{lem:dfs-characterization} with revised \cref{def:conflicting -pair} of conflicting pairs $(v,(u,w)) \in V \times E$ if $p(v) < u < v < w$.%
\junk{
Observe that the proof of \cref{lem:dfs-characterization} can be easily extended to directed graphs:

\begin{proof}[Proof of \cref{lem:dfs-characterization} for directed graphs]
\mbox{}
\begin{description}
\item[(i) $\rightarrow$ (ii).] Consider a DFS run that agrees with the DFS numbering and let $(v,(u,w)) \in V \times E$ be a pair with $p(v) < u < v$. By \cref{claim:dfs-parent-is-p}\Artur{Rewritten for directed graphs.}, $v$ was discovered from $p(v)$. Consider the time immediately before this happens. Vertex $u$ was already discovered (due to $u < v$) and is not active (since $p(v) < u$ is the end of the white path and $p(v)$ is the largest active vertex), meaning $u$ is finished. Therefore, since the finishing time of vertex $u$ is between $u$ and $w$, by a known property of DFS (see, e.g., \cite[Excercise~20.3-5]{CLRS22}), we cannot have edge $(u,w) \in E$ with $p(v) < u < v < w$. In particular $(v,(u,w))$ is not a conflicting pair.
\item [(ii) $\rightarrow$ (i).] We use induction on $k$, showing that there is a DFS run assigning number $v$ to each $v$, $1 \le v \le k$. For $k = 0$ the claim is trivial. Assume the claim holds for some $k-1$. Consider a corresponding DFS run that has just discovered vertex $k-1$; at that moment vertex $k$ is undiscovered. We have $0 \le  p(k) \le  k-1$. If $p(k)$ does not appear on the white path then we infer that because both $0$ and $k-1$ appear on the white path, there is a vertex $x$ on the white path with $p(x) < p(k) < x < k$. This makes $(x,(p(k),k))$ a conflicting pair, contradicting the assumption (ii). Therefore $p(k)$ does appear on the white path. There might be a vertex $u > p(k)$ further along on the white path. Let $w$ be a neighbor of $u$. Due to $p(k) < u < k$ we must have $w \le  k$, for otherwise $(k,(u,w))$ would be a conflicting pair. Moreover we have $w \neq  k$ by definition of $p(k)$. Thus $w \le  k-1$, meaning $w$ was already discovered. In particular the DFS will backtrack from active vertices until it reaches $p(k)$, from which it may legally choose to discover $k$ next. This concludes the induction.\qedhere
\end{description}
\end{proof}
}
Next, \cref{claim:how-to-fix} and \cref{lem:dfs-characterization-epsilon-far} work for directed graphs as well. Then, using the fact that one can access both in- and out-neighbors, \cref{fact:navigate-ordered-trees} and \cref{lem:dfs-next} hold without any change in the analysis, and so are \cref{lem:testing-local-conflicts,lem:detecting-globally-conflicting}. With all these claims at hand, as in \cref{subsec:proof-of-thm:DFS-ub}, we obtain the proof of \cref{thm:DFS-dub}.
\qed

Finally, let us notice that the arguments used in \cref{corollary:DFS-ub-weak} can be applied also for directed graphs. Since that approach does not need to use the neighbors queries in both directions, this is giving us a tester with $O(\sqrt{n/\eps})$ queries to the oracle that uses only the out-neighbor queries and label queries (and so, not using the in-neighbor queries).


\section{Testing FIN-numberings}
\label{sec:FIN-numbering}

%
In the context of a depth first search, the FIN numbering of vertices reflects the order in which vertices are marked as finished, i.e., $v$ receives FIN number $i$ if \dfsVisit($v$) is the $i$th call to \dfsVisit\ to conclude\footnote{To generate such numbers use Algorithm~\ref{alg:DFS-numbering-def} but within Algorithm~\ref{alg:DFS-numbering-def-rec} move lines 2 and 3 ($\num(v) \leftarrow t$ and $t \leftarrow t+1$) to the end, after line 6.}, see also Chapter~20.3 in \cite{CLRS22}. In this section, we will show that in undirected graphs a numbering $\num : V \rightarrow [n]$ is a valid FIN numbering if and only if the reverse numbering $\mun$ with $\mun(i) = n+1-i$ is a valid DFS numbering\footnote{
    This claim should not be confused with the mistaken claim that in a single DFS a vertex $v$ receiving DFS number $\num(v)$ receives FIN number $\mun(v)$. In general, the DFS run that produces the DFS numbering $\num$ visits vertices in a different order than the DFS run that produces the FIN numbering $\mun$.
}, which immediately implies that our upper and lower bounds for testing valid DFS numberings (\cref{thm:DFS-lb,thm:DFS-ub}) carry over to testing valid FIN numberings in undirected graph.

To prove the claim we consider ``mirrored'' notions of $p(v)$ and conflicting pairs from \cref{section:DFS-ub}. For any implicitly numbered graph we define
\def\pfin{p^{\mathrm{FIN}}}
\[
    \pfin(v) :=
    \begin{cases}
        \min\{u \in N(v) \cap  \{v+1, \dots ,n\}\} & \text{ if $N(v) \cap  \{v+1, \dots ,n\} \neq  \emptyset$,}\\
        \infty & \text{ otherwise.}
    \end{cases}
\]
Note that the imagined virtual parent of all orphaned vertices is now a virtual vertex $\infty$ which intuitively finishes last.
We say a pair $(v,\{u,w\})$ for $v \in  V$ and $\{u,w\} \in  E$ is a \emph{fin-conflicting pair} if
\[
    w < v < u < \pfin(v).
\]
The following lemma summarizes the relationship between valid FIN-numberings and valid DFS numberings.

\begin{lemma}
\label{lem:fin-numbering-characterisation}
Let $G = (V,E)$ be an undirected implicitly labeled graph. The following claims are equivalent.
\begin{enumerate}[\rm (i)]
\item $G$ has a valid FIN numbering.
\item There exists no FIN-conflicting pair.
\item There exists no conflicting pair in the reverse numbering.
\item The reverse numbering is a valid DFS numbering.
\end{enumerate}
\end{lemma}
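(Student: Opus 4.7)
The plan is to prove the equivalence via the chain (iii) $\Leftrightarrow$ (iv), (ii) $\Leftrightarrow$ (iii), and (i) $\Leftrightarrow$ (iv). The first of these, (iii) $\Leftrightarrow$ (iv), is a direct application of \cref{lem:dfs-characterization} to the reverse numbering $\mun$, which is itself a bijection $V \to [n]$ and hence a legitimate implicit labeling.

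For (ii) $\Leftrightarrow$ (iii) I would observe that $\pfin$ and FIN-conflicting pairs under $\num$ are the literal translations of $p$ and conflicting pairs under $\mun$. Indeed, the set $N(v)\cap\{v+1,\ldots,n\}$ (taken under $\num$) coincides with the set of neighbors of $v$ that precede $v$ under $\mun$, and the $\num$-minimum of this set is its $\mun$-maximum; hence $\pfin(v)$ computed under $\num$ equals $p(v)$ computed under $\mun$ (with the virtual vertex $\infty$ under $\num$ playing the role of $0$ under $\mun$). Interpreting the chain $w < v < u < \pfin(v)$ under the $\mun$-ordering then reverses it into the defining shape $p(v) < u < v < w$ of a conflicting pair under $\mun$, and the correspondence is a bijection on the level of FIN-conflicting vs.\ conflicting pairs.

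The main step, and the only place the undirected assumption matters, is (i) $\Leftrightarrow$ (iv). The key structural fact I would establish is: given any DFS traversal $T$ of an undirected graph with DFS forest $F$, the traversal $T'$ obtained by reversing the outer-loop order and, at each vertex, the order in which its children in $F$ are visited, is again a legal DFS traversal that realises the same forest $F$. This rests on the classical fact that in undirected graphs every non-tree edge of a DFS forest is a back edge joining an ancestor and a descendant in $F$, so no child of $v$ in $F$ can be discovered prematurely via a back edge from another branch, and the undiscovered neighbors of $v$ in $T'$ when $v$ becomes active are precisely its children in $F$. Granted this, (i) $\Rightarrow$ (iv) follows by taking $T$ to produce $\num$ as its FIN numbering, letting $T'$ be its reversal, and observing by induction on $F$ that the post-order of each subtree in $T$ reverses to the pre-order of that subtree in $T'$; concatenating over the reversed outer loop then shows that the DFS numbering of $T'$ is exactly $\mun$. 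The converse (iv) $\Rightarrow$ (i) is symmetric: apply the same reversal to the DFS traversal producing $\mun$ as DFS numbering and read off $\num$ as the FIN numbering of the reversed traversal.

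The main obstacle is making this reversal argument rigorous: a careful induction on $F$ is required to verify that $T'$ never visits an already-discovered vertex, that the set of undiscovered neighbors at each active vertex in $T'$ equals the children of $v$ in $F$, and that the post-to-pre-order swap carries the FIN order of $T$ onto the DFS order of $T'$ across the whole forest (including the reversed outer loop). All three points reduce to the no-cross-edge property of undirected DFS, which is precisely what would fail in the directed setting and is therefore the reason the lemma is stated for undirected graphs only.
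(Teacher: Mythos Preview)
Your proposal is correct and takes a genuinely different route from the paper for the nontrivial link to (i). Both proofs dispatch (ii)~$\Leftrightarrow$~(iii) and (iii)~$\Leftrightarrow$~(iv) identically (definitional unfolding plus \cref{lem:dfs-characterization}). The paper then closes the cycle via (i)~$\Leftrightarrow$~(ii): for (i)~$\Rightarrow$~(ii) it argues directly on the white path of a realising DFS run, and for (ii)~$\Rightarrow$~(i) it builds the tree $T_{\mathrm{FIN}}$ from $\pfin$, proves two structural facts about it (no cross edges with respect to $T_{\mathrm{FIN}}$, and that sibling subtrees occupy non-overlapping number ranges), and then runs an induction on $k$ to construct a DFS that assigns FIN numbers $1,\ldots,k$ correctly. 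Your route instead proves (i)~$\Leftrightarrow$~(iv) directly by the traversal-reversal trick: reverse the outer-loop order and the child order in the DFS forest $F$, observe that the resulting run $T'$ is again a legal DFS realising the same forest (here the undirected no-cross-edge property is exactly what is needed), and use the tree identity ``reverse of post-order $=$ pre-order with children reversed'' to carry $\num$ (as FIN numbers of $T$) onto $\mun$ (as DFS numbers of $T'$). One small inaccuracy to clean up: when $v$ first becomes active in $T'$, its undiscovered neighbours are its $F$-children \emph{together with} any back-edge descendants, not only the children; what makes the argument go through is that the children are among them (so $T'$ may legally pick the next child) and that after all child subtrees are explored no undiscovered neighbour of $v$ remains (so backtracking is legal). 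Your approach is shorter and more conceptual, explaining in one stroke why the DFS/FIN correspondence holds and why it breaks for directed graphs; the paper's approach has the virtue of staying entirely inside the conflicting-pair framework already set up, proving (ii)~$\Rightarrow$~(i) without ever invoking a second DFS run.
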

The equivalence of (ii) and (iii) is easy to see since the definitions of \emph{FIN-conflicting pair} and \emph{conflicting pair} coincide except for flipping the ordering (as well as the sentinels $0$ and $\infty$). The equivalence of (iii) and (iv) was the content of \cref{lem:dfs-characterization}. Before we proceed to show the equivalence of (i) and (ii) we prove a claim analogous to \cref{claim:dfs-parent-is-p}.

\begin{claim}
\label{claim:fin-parent-is-pfin}
If the numbers of an implicitly labeled graph correspond to a FIN numbering then $\pfin(v)$ is the (possibly virtual) parent of $v$.
\end{claim}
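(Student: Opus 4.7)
The plan is to mirror the proof of \cref{claim:dfs-parent-is-p}, splitting into the cases that $v$ is an orphan or has a non-virtual parent $p$ in the DFS tree that produces the FIN numbering. The key structural ingredient is the standard property of DFS on \emph{undirected} graphs: every edge of $G$ is either a tree edge or a back edge, so every neighbor $u$ of $v$ is either a descendant or an ancestor of $v$ in the DFS tree. Translated to finishing times, a neighbor of $v$ finishes after $v$ if and only if it is a proper ancestor.

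For the orphan case, I would argue as follows. If $v$ is an orphan then, at the moment \dfsVisit($v$) is called, the whole connected component of $v$ is still undiscovered; hence \dfsVisit($v$) discovers that entire component, and every vertex in it, in particular every neighbor of $v$, becomes a proper descendant of $v$ in the DFS tree and finishes strictly before $v$. Thus every $u \in N(v)$ has a smaller FIN number than $v$, meaning $N(v) \cap \{v+1,\dots,n\} = \emptyset$ and $\pfin(v) = \infty$, which matches the virtual parent of orphans.

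For the non-orphan case, let $p$ be the actual DFS parent of $v$. Since \dfsVisit($p$) is still active when \dfsVisit($v$) returns, $p$ finishes strictly later than $v$, so $p \in N(v) \cap \{v+1,\dots,n\}$. To see that $p$ is the minimum of this set, take any $u \in N(v)$ whose FIN number exceeds that of $v$. By the undirected DFS dichotomy, $u$ is either a descendant or an ancestor of $v$; descendants finish first, so $u$ must be a proper ancestor of $v$. Along the root-to-$v$ ancestor chain the \dfsVisit\ calls are strictly nested, so the finishing times strictly increase toward the root; since $p$ is the ancestor immediately above $v$, it achieves the smallest FIN number among all proper ancestors of $v$. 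Hence $\pfin(v) = p$, as desired.

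The only step that requires real care — the main obstacle — is the clean invocation of the ``no cross edges'' property for undirected DFS, since the algorithm in the paper leaves the order in which neighbors are processed arbitrary. I would justify it with a one-line argument: if $\{u,v\} \in E$ and, at the time the later of the two is discovered, the earlier one is already finished, then the edge $\{u,v\}$ would have been examined during that earlier \dfsVisit, forcing immediate discovery of the other endpoint — contradiction. This is the standard textbook argument (cf.\ \cite{CLRS22}), and with it established the descendant/ancestor dichotomy used above is immediate, completing the proof.
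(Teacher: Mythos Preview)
Your proof is correct and follows essentially the same two-case structure as the paper. The only difference is cosmetic: in the non-orphan case the paper argues directly that, right after $v$ finishes, any neighbor $x'$ with larger FIN number is discovered (since $v$ just scanned its neighbors) yet unfinished, hence active and on the white path, where the parent is the vertex with smallest FIN number; you reach the same conclusion by first invoking the undirected ``no cross edges'' property and then ordering FIN numbers along the ancestor chain.
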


\begin{proof}
If $v$ is an orphan then it has the highest FIN number in its connected component, hence $\{u \in N(v): u > v\} = \emptyset$ and thus $\pfin(v) = \infty$. This is also the virtual parent of $v$ by definition.

Otherwise $v$ has a non-virtual parent $x$. Since we only return to processing $x$ when $v$ is finished we have $x \in \{u \in N(v): u > v\}$. Now consider any $x' \in  \{u \in N(v): u > v\}$. Right after $v$ finishes, $x'$ must already be discovered but due to $x' > v$ it has not yet finished. Thus $x'$ is active and $\{u \in N(v): u > v\}$ only contains vertices on the white path. Since FIN numbers appear in descending order on the white path $x$ is the smallest among them as we defined it.
\end{proof}

\begin{proof}[Proof of \cref{lem:fin-numbering-characterisation}.]
\begin{description}
\item[(i) $\Rightarrow $ (ii).]
Consider a DFS run that agrees with the FIN numbering and let $(v,\{u,w\})$ for $v \in  V$ and $\{u,w\} \in  E$ be a pair with $w < v < \pfin(v)$ and $u > v$. By \cref{claim:fin-parent-is-pfin} $v$ was discovered from $\pfin(v)$. Consider the time immediately after $v$ has finished. Since $w < v$ has previously finished its neighbor $u \in  N(w)$ has been discovered, but since $u > v$ it has not yet finished. The active vertices $u$ and $\pfin(v)$ appear in descending order of FIN numbers along the white path with $\pfin(v)$ being the last one we have just returned to, so $\pfin(v) < u$. In particular $(v,\{u,w\})$ is not a FIN-conflicting pair.
\item[(ii) $\Rightarrow $ (i).]
\def\Tfin{T_{\mathrm{FIN}}}
Consider the tree $\Tfin = ([n] \cup  \{\infty\}, \{(\pfin(v),v) : v \in  [n]\}$ rooted at $\infty$. It is acyclic because numbers are descending along edges. Note that $\infty$ is the only vertex without incoming edges. We now establish two properties of $\Tfin$.

\emph{Firstly}, \emph{$G$ has no cross edges with respect to $\Tfin$}, meaning for any $\{u,w\} \in  E$ either $u$ is a decendant of $w$ or vice versa. To see this, assume without loss of generality that $w < u$. If $\pfin(w) = u$ there is nothing to show, and $\pfin(w) > u$ contradicts the definition of $\pfin(w)$ so we may assume $\pfin(w) < u$. Now consider the unique path $P$ from $\pfin(w)$ to $\infty$ in $\Tfin$. Since $\pfin(w) < u < \infty$ and numbers appear in increasing order on $P$ there must be two adjacent $v$ and $\pfin(v)$ on $P$ with $v < u \le  \pfin(v)$. Together we have $w < v < u \le  \pfin(v)$. The only way for this not to be a FIN-conflicting pair is $u = \pfin(v)$. Hence $u$ is on $P$ and thus an ancestor of $w$ as claimed.

\emph{Secondly}, \emph{subtrees of $\Tfin$ use non-overlapping ranges of number}, meaning the following. Consider a vertex $r$ in $\Tfin$ with two children $\ell $ and $r$ and let $T_\ell $ and $T_r$ be the trees of all descendants of $\ell $ and $r$, respectively. If $x_1$ and $x_2$ appear in $T_\ell $ and $y$ appears in $T_r$ then $x_1 < y < x_2$ is impossible. To see this consider for contradiction a counterexample with maximum sum $x_1 + y + x_2$. By maximality we have $x_2 = \ell $ (otherwise replace $x_2$ with $\ell $), $\pfin(y) > x_2$ (otherwise replace $y$ with $\pfin(y)$) and $\pfin(x_1) > y$ (otherwise replace $x_1$ with $\pfin(x_1)$). This gives $x_1 < y < \pfin(x_1) \le  \ell  = x_2 < \pfin(y)$ and thus the FIN-conflicting pair $(y,\{\pfin(x_1),x_1\})$.

We can now argue by induction on $k \in  [n]$, that there exists a DFS run and a point in time such that the run has assigned the first $k-1$ numbers as intended and where the white path is the unique path from $\infty$ to $k$ in $\Tfin$. For $k = 1$ note that there is a unique path from $\infty$ to $1$ in $\Tfin$ and a DFS run may legally follow it. Now assume we have established the claim for $k$ as witnessed by a DFS run currently stuck in $k$ and we want to extend the claim to $k+1$. First we show that the DFS may now legally backtrack from $k$. To see this consider $x \in  N(k)$. If $x < k$ then $x$ is finished by induction, so not undiscovered. if $x > k$ then by our first claim above $x$ must be an ancestor of $k$ in $\Tfin$. By induction $x$ must lie on the white path. Thus $x$ is already active and not undiscovered.
After our DFS run has backtracked to $\pfin(k)$ then, unless $\pfin(k) = k+1$, we still have to show that it can descend within $\Tfin$ to $k+1$. Since $k < k+1 < \pfin(k)$ our second claim shows that $k+1$ and $\pfin(k)$ do not reside in disjoint subtrees, in other words $\pfin(k)$ must be an ancestor of $k+1$. All vertices on the $\Tfin$-path from $\pfin(k)$ to $k+1$ are neither finished (because their numbers are $\ge  k+1$) nor active (because all active vertices are on the white path and ancestors of $\pfin(k)$), so those vertices are undiscovered and may legally be chosen by the DFS.\qedhere
\end{description}
\end{proof}

Unfortunately for directed graphs this close correspondence between valid DFS numberings and valid FIN numberings no longer holds. For instance for the graph $G = (\{x,y\},\{(x,y)\})$ with a single edge both numberings $\{ x \rightarrow 1, y \rightarrow 2\}$ and $\{ x \rightarrow 2, y\rightarrow 1\}$ are valid DFS numberings, but only the second is a valid FIN numbering. (Another way of arguing about it follows easily from Exercise 20-3-5 in \cite{CLRS22}.)


\section{Additional arguments supporting (\ref{DFS-lb-prop3-2}) in the proof of Lemma \ref{DFS-lb-prop3}}
\label{sec:proof:final-DFS-lb-prop3}

In Figure~\ref{fig-lb-3}, we present structural observations about an arm that can be revealed in $I_n(v_i)$. Here a \textcolor[rgb]{0.00,0.50,0.00}{green} solid dot indicates the joint that is contained in $I_n(v_i)$ and \textcolor[rgb]{0.00,0.07,1.00}{blue} circles indicate joints not contained in $I_n(v_i)$. The central fact is that each observation is consistent with exactly one good arm ($(G1)$ or $(G2)$) and exactly one bad arm ($(B1)$ or $(B2)$). In general, when we also take into account the case when $I_n(v_i)$ not contain any joint, we can say that $I_n(v_i)$ is consistent with the same number of good and bad arms. Therefore for any labeled graph $H$, we have $\Pr[I_n(v_i) = H | b=0] = \Pr[I_n(v_i) = H | b=1]$. This is precisely what it means that $I_n(v_i)$ and $b$ are independent. Since $G_n$ and $B_n$ select their arms independently, conditioning on \EPS, the same observation can be extended to hold for other vertices $v_i' \in V_R$, and hence to the entire graph $I_n(V_R^*)$. This implies that random variables $b$ and $I_n(V_R^*)$ are stochastically independent, which 
yields inequality (\ref{DFS-lb-prop3-2}) from the proof of \cref{DFS-lb-prop3} in \cref{subsection:key-fact-about-distinguishing}.

\begin{figure}[h]
\centerline{\includegraphics[width=.75\textwidth]{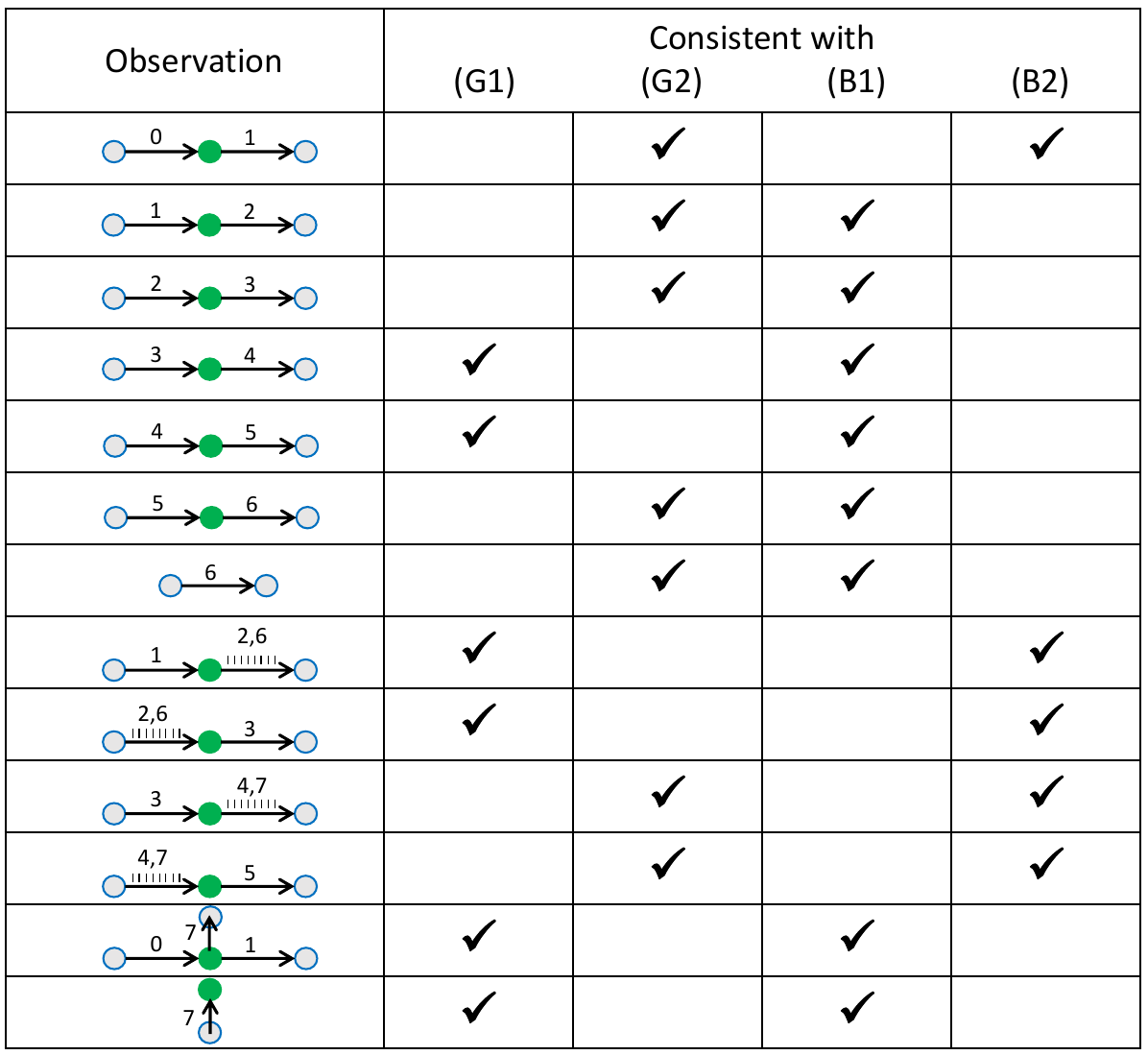}}
\caption{A set of structural observations about an arm that can be revealed in $I_n(v_i)$.}
\label{fig-lb-3}
\end{figure}


\section{Reducing degrees in graphs with valid DFS numbering
}
\label{section:deg-reduction-DFS}

In this section, we present a useful framework supporting our analysis in \cref{subsec:properties-eps-far}, showing that if a connected undirected graph $G = (V,E)$ of maximum degree $d$ has a valid DFS numbering $\num$, then one can modify $O(k)$ edges of $G$ to obtain a connected graph on vertex set $V$ whose maximum degree is smaller than $d$, where $k$ is the number of vertices with degree $d$. (We observe that it is sufficient to prove the claim for connected graphs only, since if $G$ is not connected, then the same construction can be applied to each connected component individually.)

Let us begin with two straightforward auxiliary claims that follow easily from the fact that the parent relation $p(v)$ defines a DFS tree of a given connected graph equipped with a valid DFS numbering $\num$.

\begin{claim}
\label{claim:DFS-removing-back-edges}
Let $G = (V,E)$ be an arbitrary connected graph equipped with a valid DFS numbering $\num$. Let $\{u,v\}$ be an arbitrary edge in $E$ with $\num(u) < \num(p(v)) < \num(v)$. Then the graph $G' = (V,E \setminus \{u,v\})$ is connected and $\num$ is its valid DFS numbering.
\qed
\end{claim}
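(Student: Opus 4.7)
The plan is to leverage the characterization from \cref{lem:dfs-characterization} (valid DFS numbering $\Leftrightarrow$ no conflicting pair) together with the observation that, under the hypothesis $\num(u) < \num(p(v)) < \num(v)$, the edge $\{u,v\}$ behaves like a ``non-tree back edge'' and its removal affects neither the parent function $p(\cdot)$ nor the DFS tree.

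First I would verify that $p(\cdot)$ is invariant when passing from $G$ to $G' = (V, E \setminus \{u,v\})$. The only candidates for change are $p(u)$ and $p(v)$, since edges incident to other vertices are untouched. For $v$: the definition $p(v) = \max\{x \in N(v) \cap [v-1]\}$ is unaffected because $\num(u) < \num(p(v))$, so removing $u$ from $N(v)$ does not alter the maximum. For $u$: the neighbor $v$ is irrelevant to $p(u)$ since $\num(v) > \num(u)$. Hence $p(\cdot)$ agrees on $G$ and $G'$.

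Next I would argue that $G'$ has no conflicting pair. Since $\num$ is a valid DFS numbering of $G$, \cref{lem:dfs-characterization} guarantees that $G$ has no conflicting pair. But any conflicting pair $(w,\{x,y\})$ in $G'$ (meaning $p(w) < x < w < y$ and $\{x,y\} \in E \setminus \{u,v\} \subseteq E$) would also be a conflicting pair in $G$, since $p(\cdot)$ is unchanged. This is a contradiction. Hence $G'$ has no conflicting pair and by the other direction of \cref{lem:dfs-characterization}, $\num$ is a valid DFS numbering of $G'$.

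Finally, for connectivity, I would use \cref{claim:dfs-parent-is-p}: the edges $\{p(w), w\}$ for $w \in V$ with $p(w) \ne 0$ form the DFS tree realizing the numbering, which is a spanning tree of the connected graph $G$. The removed edge $\{u,v\}$ is not in this tree: it is not $\{p(v), v\}$ since $\num(u) < \num(p(v))$ forces $u \ne p(v)$, and it is not $\{p(u), u\}$ since $\num(v) > \num(u)$ rules out $v = p(u)$. Therefore this spanning tree is entirely contained in $G'$, so $G'$ is connected. The only mildly delicate step is the invariance of $p(\cdot)$ under the deletion; once that is in hand, the conflict-pair characterization and the spanning-tree argument finish the proof essentially for free.
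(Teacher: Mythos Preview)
Your proof is correct and is essentially a careful unpacking of the paper's one-line justification (``follows easily from the fact that the parent relation $p(v)$ defines a DFS tree''): the key observation in both is that removing a back edge leaves $p(\cdot)$, and hence the DFS spanning tree, unchanged. Your use of \cref{lem:dfs-characterization} to certify validity of $\num$ on $G'$ is a clean way to formalize what the paper leaves implicit.
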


\begin{claim}
\label{claim:rewring-the-last-child-DFS}
Let $G = (V,E)$ be an arbitrary connected graph equipped with a valid DFS numbering $\num$. Let $v$ be an arbitrary vertex in $V$ and let $u_1, \dots, u_k$ be all neighbors of $v$ with DFS numbers greater than $\num(v)$, with $\num(u_1) < \dots < \num(u_k)$. Let $w$ be vertex with $\num(w) = \num(u_k)-1$. If $k > 1$ then the graph $G' = (V,E \setminus \{v,u_k\} \cup \{w,u_k\})$ is connected and $\num$ is its valid DFS numbering.

Furthermore, if $k > 1$ then for every neighbor $x$ of $w$ we have $\num(x) < \num(w)$.
\qed
\end{claim}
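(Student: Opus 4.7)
The plan is to establish the first assertion via the conflict-free characterisation of valid DFS numberings (\cref{lem:dfs-characterization}) together with the orphan/component correspondence for connectedness, and to establish the ``furthermore'' part by analysing the implicit DFS run around $w$.

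First I would compute $p_{G'}$. Both the removed edge $\{v,u_k\}$ and the added edge $\{w,u_k\}$ have $u_k$ as their larger endpoint, so $p(\cdot)$ can change only at $u_k$; there the lower-numbered-neighbour set becomes $((N_G(u_k) \setminus \{v\}) \cup \{w\}) \cap [\num(u_k)-1]$, whose maximum is $w$ since $\num(w) = \num(u_k)-1$, giving $p_{G'}(u_k) = w$. Observe that $k > 1$ forces $\num(u_k) \geq \num(u_1)+1 \geq \num(v)+2$, so $w \neq v$ and $\num(w) \geq 1$. To apply \cref{lem:dfs-characterization} it then suffices to rule out conflicting pairs in $G'$: a conflict involving the added edge $\{w,u_k\}$ or involving the vertex $x=u_k$ (the only one whose $p$-value has changed) would require an integer strictly between $\num(w)$ and $\num(u_k)$, which does not exist; every other conceivable conflict transfers verbatim to $G$, contradicting the validity of $\num$ there. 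For connectedness I would use the fact (established inside \cref{claim:dfs-parent-is-p}) that the orphans of a valid DFS numbering -- vertices $x$ with $p(x) = 0$ -- are in bijection with the connected components. Since $p_{G'}$ equals $p_G$ away from $u_k$ and $p_{G'}(u_k) = w \neq 0$, the orphan sets of $G$ and $G'$ coincide; $G$ being connected has a single orphan, so $G'$ has too, and is therefore connected.

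For the furthermore part I would inspect the DFS run realising the numbering. Because $\num(u_k) = \num(w)+1$, the vertex discovered immediately after $w$ is $u_k$. Inside DFSVisit($w$), the first undiscovered neighbour of $w$ (if any) to be explored would receive number $\num(w)+1 = \num(u_k)$, so either $w$ has no undiscovered neighbour at the moment DFSVisit($w$) is invoked -- in which case every graph-neighbour of $w$ has number strictly less than $\num(w)$ and the claim follows directly -- or that first explored neighbour is $u_k$ itself. In the latter sub-case I would exclude any further undiscovered neighbour $y$ of $w$ with $\num(y) > \num(w)$ by considering where $y$ could live: either $y$ lies inside $u_k$'s subtree, making $\{w,y\}$ a non-tree edge that is not a back edge (impossible in undirected DFS), or $y$ is discovered from $w$ only after $u_k$'s subtree finishes, which I would rule out by combining the existence of the additional higher-numbered neighbour $u_{k-1}$ of $v$ (guaranteed by $k>1$) with the white-path structure relating $w$, $u_{k-1}$ and $v$ at the moment $u_k$ is discovered. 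This last combinatorial exclusion is the main obstacle; the remaining steps are essentially bookkeeping around \cref{lem:dfs-characterization,claim:dfs-parent-is-p}.
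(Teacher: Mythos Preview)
The paper supplies no proof of this claim; it is asserted (with a bare \qed) as immediate from the fact that $p(\cdot)$ defines a DFS tree. Your argument for the first assertion---computing that $p_{G'}$ agrees with $p_G$ except at $u_k$ where $p_{G'}(u_k)=w$, then invoking \cref{lem:dfs-characterization} and the orphan/component correspondence---is correct and considerably more explicit than anything the paper offers.

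Your handling of the ``furthermore'' part, however, has a genuine gap, and the gap cannot be closed because the assertion is false as written. In your sub-case where the first neighbour explored from $w$ is $u_k$ itself, you only try to exclude \emph{further} higher-numbered neighbours $y\neq u_k$ of $w$; but $u_k$ is then already a neighbour of $w$ with $\num(u_k)>\num(w)$, so the conclusion fails right there. A concrete counterexample is the $4$-cycle on $\{1,2,3,4\}$ with edges $\{1,2\},\{2,3\},\{3,4\},\{1,4\}$ and the identity numbering: this is a valid DFS numbering, $v=1$ has higher neighbours $u_1=2$ and $u_2=4$ (so $k=2>1$), $w=3$, and $w$ has the higher neighbour~$4$. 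Incidentally, your sub-argument that a further $y$ lying in $u_k$'s subtree would produce a non-back edge is also wrong: in that sub-case $w$ is the DFS parent of $u_k$ and hence an ancestor of every such $y$, so $\{w,y\}$ would be a legitimate back edge.

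The statement becomes true---and your case analysis collapses to the easy branch---under the extra hypothesis $\{w,u_k\}\notin E$. This is exactly the situation in which the ``furthermore'' is used in the proof of \cref{lemma:deg-reduction-DFS}: one only needs to repair $w$'s degree when the newly inserted edge $\{w,u_k\}$ has actually raised it.
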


Using Claims~\ref{claim:DFS-removing-back-edges} and \ref{claim:rewring-the-last-child-DFS}, we can easily prove the following.

\begin{lemma}
\label{lemma:deg-reduction-DFS}
Let $G = (V,E)$ be an arbitrary connected graph of maximum degree $d \ge 3$ equipped with a valid DFS numbering $\num$. Let $V_d$ be the set of vertices of degree $d$ in $G$. Then one can construct a connected graph $G' = (V,E')$ of maximum degree at most $d-1$ for which $\num$ is a valid DFS numbering and such that $|E \triangle E'| \le 3 \cdot |V_d|$.
\end{lemma}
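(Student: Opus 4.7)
My plan is to reduce the maximum degree of $G$ to $d-1$ in a single pass through $V_d$, processing each $v\in V_d$ in increasing order of $\num(v)$ and maintaining throughout the invariant that the current graph is connected and validly DFS-numbered by $\num$. For each $v$ whose current degree is still at least $d$, I will apply one of two atomic moves, each of which, by \cref{claim:DFS-removing-back-edges,claim:rewring-the-last-child-DFS}, preserves both connectedness and the DFS numbering. If $v$ has a back-edge (a neighbor $u$ with $u<p(v)$), I delete it via \cref{claim:DFS-removing-back-edges} at a cost of one edit. Otherwise $p(v)$ is $v$'s unique smaller-numbered neighbor, so $v$ has $k\ge d-1\ge 2$ larger-numbered neighbors and \cref{claim:rewring-the-last-child-DFS} applies: I rewire $v$'s largest larger-numbered neighbor $u_k$ onto the vertex $w$ with $\num(w)=\num(u_k)-1>\num(v)$ at a cost of at most two edits.

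The second case can raise $\deg(w)$ by one, so whenever the rewiring pushes $\deg(w)$ above $d-1$ I will \emph{compensate} by applying \cref{claim:DFS-removing-back-edges} a second time to remove a back-edge at $w$ (one additional edit). To justify the feasibility of the compensation I will invoke the ``furthermore'' clause of \cref{claim:rewring-the-last-child-DFS}: immediately before the rewiring every neighbor of $w$ other than possibly $u_k$ has DFS number smaller than $\num(w)$, so $w$ has $\ge\deg_G(w)-1\ge d-2\ge 1$ back-edges available. I also need to argue that the compensation is triggered at most once per $w$: any later rewiring whose target happens to be the same $w$ must again use $u_k=\num(w)+1$ as its largest larger-numbered neighbor, but since $\{w,u_k\}$ is already present its ``add'' is a no-op, $w$'s degree does not change, and no further compensation at $w$ is invoked. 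A short separate check, using that $p(u_k)=w$ in the current graph after the first rewiring, confirms that $\{w,u_k\}$ can never be deleted by a later atomic move, so the invariant ``$\{w,u_k\}$ is present'' persists.

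Combining the two cases, every $v\in V_d$ contributes at most $2+1=3$ edits to $E\triangle E'$, which yields the claimed bound $|E\triangle E'|\le 3|V_d|$. Connectedness and validity of $\num$ in the final graph $G'$ follow by induction over the atomic moves. The main obstacle I anticipate is the bookkeeping around the rewiring target $w$: one has to verify that the compensating back-edge deletion is always available at the moment it is needed, and that all side effects of my edits (the deletion of $\{v,u_k\}$, the possibly no-op addition of $\{w,u_k\}$, and the compensation at $w$) can only decrease degrees of other vertices and therefore cannot push an already-processed vertex of $V_d$ back above $d-1$.
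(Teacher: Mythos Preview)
Your proposal is correct and follows essentially the same approach as the paper: iterate over $V_d$, and for each high-degree vertex either delete a back-edge via \cref{claim:DFS-removing-back-edges} or rewire the last child via \cref{claim:rewring-the-last-child-DFS} together with a compensating back-edge deletion at the rewiring target $w$. Your version is in fact more careful than the paper's---you fix the processing order (increasing $\num$), argue explicitly that repeated rewirings to the same target $w$ are no-ops (so compensation is triggered at most once per $w$), and verify that already-processed vertices cannot be pushed back above $d-1$; the paper leaves these points implicit.
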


\begin{proof}
We process vertices in $V_d$ one by one, in arbitrary order, and reduce their degrees without increasing the degree of any other vertex to more than $d-1$. For that, iteratively, take a vertex $v \in V_d$.
\begin{itemize}
\item If $v$ has at least two neighbors $u$, $w$ with $\num(u) < \num(w) < \num(v)$ then remove edge $\{u,v\}$ from $G$. By \cref{claim:DFS-removing-back-edges}, the resulting graph is connected with valid DFS numbering $\num$ and has one less vertex of degree $d$.
\item If $v$ has no two neighbors $u$, $w$ with $\num(u) < \num(w) < \num(v)$, then it has $k \ge d-1 \ge 2$ neighbors $u_1, \dots, u_k$ with DFS numbers greater than $\num(v)$, with $\num(u_1) < \dots < \num(u_k)$. Let $w$ be the vertex with $\num(w) = \num(u_k)-1$. Then replace in $G$ edge $\{v,u_k\}$ with a new edge $\{w,u_k\}$. Furthermore, since this increases the degree of vertex $w$, we want to remove one edge incident to $w$ unless $w$ has had initially a unique neighbor. If $w$ had two or more neighbors, then \cref{claim:rewring-the-last-child-DFS} ensures that all of them have DFS numbers smaller than $w$, and hence by \cref{claim:DFS-removing-back-edges}, we can remove one of the incident edges to $w$, maintaining the validity of DFS numbering $\num$ and of the connectivity of the resulting graph.

    By Claims \ref{claim:DFS-removing-back-edges} and \ref{claim:rewring-the-last-child-DFS}, the resulting graph is connected with valid DFS numbering $\num$ and has one less vertex of degree $d$.
\end{itemize}

If we iterative proceed through all vertices from $V_d$, then we obtain a graph $G = (V,E')$ that is connected with valid DFS numbering $\num$ and has no vertex of degree $d$. Since we modify at most three edges per vertex from $V_d$, we conclude that $|E \triangle E'| \le 3 \cdot |V_d|$, as promised.
\end{proof}

Observe that it is not difficult to generalize \cref{lemma:deg-reduction-DFS} to obtain a bigger reduction of the maximum degree (though this lemma is not needed in our analysis).

\begin{lemma}[\textbf{DFS degree reduction}]
\label{lemma:deg-reduction-DFS-more-than-1}
Let $G = (V,E)$ be an arbitrary connected graph of maximum degree $d$ equipped with a valid DFS numbering $\num$. For any $d'$, let $V_{d'}$ be the set of vertices of degree $d'$ in $G$. Then for any $d^* \ge 3$ we can construct a connected graph $G = (V,E')$ of maximum degree at most $d^*$ for which $\num$ is a valid DFS numbering and such that $|E \triangle E'| \le \sum_{k=d^*+1}^d 3 \cdot (k-d^*) \cdot |V_k|$.
%
\end{lemma}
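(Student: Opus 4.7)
The plan is to iterate the single-vertex degree-reduction operation that underlies the proof of \cref{lemma:deg-reduction-DFS} and to control the total cost via the potential
\[
    \Phi(G) := \sum_{v \in V} \max(0, \deg_G(v) - d^*) = \sum_{k = d^* + 1}^{d} (k - d^*) |V_k|.
\]
While the current maximum degree $d'$ of the working graph exceeds $d^*$, I would pick any vertex $v$ of degree $d'$ and apply the Case~1 or Case~2 step from the proof of \cref{lemma:deg-reduction-DFS} to the current graph, with $d'$ playing the role that $d$ played there. Because $d' > d^* \ge 3$, we have $k \ge d' - 1 \ge 3 > 1$ in Case~2, so the hypothesis of \cref{claim:rewring-the-last-child-DFS} is satisfied; together with \cref{claim:DFS-removing-back-edges}, this guarantees that each step produces a connected graph on which $\num$ remains a valid DFS numbering, reduces $\deg(v)$ by exactly~$1$, and costs at most three edge modifications.

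The key observation is that every such operation decreases $\Phi$ by at least~$1$. In Case~1 we delete an edge $\{u,v\}$: the contribution of $v$ to $\Phi$ drops from $d' - d^* \ge 1$ to $d' - 1 - d^* \ge 0$, and the contribution of $u$ only decreases. In Case~2, $\deg(v)$ again drops by~$1$, $\deg(u_k)$ is unchanged, and $\deg(w)$ either returns to its original value (if we then delete a back-edge of $w$ via \cref{claim:DFS-removing-back-edges}, which in turn reduces the degree of some $x$) or goes from~$1$ to~$2$ (if $w$ had a unique neighbor). In either case, because $d^* \ge 3$ the contribution of $w$ to $\Phi$ equals~$0$ both before and after, so the net change of $\Phi$ is at most $-1$, with any additional degree decreases only helping.

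Putting both ingredients together, the procedure halts after at most $\Phi(G)$ iterations, and the resulting graph $G' = (V, E')$ satisfies
\[
    |E \triangle E'| \le 3 \Phi(G) = \sum_{k = d^* + 1}^{d} 3(k - d^*) |V_k|,
\]
is connected, has maximum degree at most $d^*$, and admits $\num$ as a valid DFS numbering. The only subtle point is verifying that the guarantees of \cref{claim:DFS-removing-back-edges,claim:rewring-the-last-child-DFS} survive repeated modifications; this is immediate because both claims are stated purely in terms of the current graph and the current numbering, both of which the iteration maintains consistently throughout, so the single-step analysis from \cref{lemma:deg-reduction-DFS} can be re-invoked without change at every iteration.
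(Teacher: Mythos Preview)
Your proof is correct and follows essentially the same approach as the paper: both iterate the single-vertex degree-reduction step from the proof of \cref{lemma:deg-reduction-DFS} until the maximum degree drops to~$d^*$. The paper packages this as repeated black-box applications of \cref{lemma:deg-reduction-DFS} (reducing the maximum degree one level at a time and bounding the degree-$k$ vertices of each intermediate graph by $V_d \cup \dots \cup V_k$), whereas you unpack the proof and track the potential~$\Phi$ directly; the two accountings are equivalent and yield the same bound $3\sum_{k>d^*}(k-d^*)|V_k|$.
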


\begin{proof}
This follows directly from \cref{lemma:deg-reduction-DFS} by first reducing the maximum degree from $d$ to $d-1$ using at most $3 \cdot |V_d|$ edits of $G$, then reducing the maximum degree from $d-1$ to $d-2$ using at most $3 \cdot |V_d \cup V_{d-1}|$ further edits of $G$, and so on so forth.
\end{proof}


} 

\end{document}